\def\arXiv{}
\renewcommand{\todo}[1]{}
\newcommand*\circled[1]{ %
  \protect\tikz[baseline=(char.base)]{ %
    \protect\node[shape=circle,draw,inner sep=0.2pt] (char) {#1};}} %
\newcommand{\1}[1]{{\normalfont \ensuremath{#1^{\tiny\circled{1}}}}} %
\newcommand{\2}[1]{{\normalfont \ensuremath{#1^{\tiny\circled{2}}}}} %
\renewcommand{\k}[1]{{\normalfont \ensuremath{#1^{\tiny\circled{k}}}}} %
\newcommand{\proj}[2]{\ensuremath{\left.#1\right|_{#2}}} %
\newcommand{\eps}{\varepsilon}
\DeclareMathOperator{\skel}{skel} %
\DeclareMathOperator{\expan}{exp} %
\DeclareMathOperator{\pos}{pos} %
\DeclareMathOperator{\bel}{bel} %
\DeclareMathOperator{\roo}{root} %
\DeclareMathOperator{\cyc}{cyc} %
\DeclareMathOperator{\high}{high} %
\DeclareMathOperator{\lca}{LCA} %
\DeclareMathOperator{\contr}{contr} %
\DeclareMathOperator{\side}{side} %
\DeclareMathOperator{\en}{end} %
\DeclareMathOperator{\curr}{curr} %
\DeclareMathOperator{\prev}{prev} %
\DeclareMathOperator{\nex}{next} %
\DeclareMathOperator{\temp}{temp} %
\DeclareMathOperator{\detcyc}{detcyc} %
\theoremstyle{plain} %
\newtheorem{theorem}{Theorem} %
\newcounter{lemmacounter} %
\newtheorem{lemma}[lemmacounter]{Lemma} %
\newtheorem{corollary}{Corollary} %
\theoremstyle{definition} %
\title{\Large Disconnectivity and Relative Positions\\in Simultaneous
  Embeddings\thanks{Part of this work was done within GRADR --
    EUROGIGA project no. 10-EuroGIGA-OP-003.}}
\author{Thomas Bläsius \myand Ignaz Rutter}
\date{Karlsruhe Institute of Technology (KIT), Germany\\%
  \medskip \texttt{\{blaesius,rutter\}@kit.edu}}
\begin{document}

\ifdefined\elsevier

\begin{frontmatter}
  
  \title{Disconnectivity and Relative Positions in Simultaneous
    Embeddings\tnoteref{eurogigia,confversion}}

  \tnotetext[eurogiga]{Part of this work was done within GRADR --
    EUROGIGA project no. 10-EuroGIGA-OP-003.} 

  \tnotetext[confversion]{A preliminary version of this work has been
    published as T. Bläsius and I. Rutter: Disconnectivity and Relative
    Positions in Simultaneous Embeddings, \emph{Proceedings of the 20th
      International Symposium on Graph Drawing (GD'12)}, pages 31--42,
    LNCS, Springer, 2012}

  \author{Thomas Bläsius} %
  \ead{blaesius@kit.edu}
  \author{Ignaz Rutter}
  \ead{rutter@kit.edu}
  \address{Karlsruhe Institute of Technology (KIT), Germany}
\fi

\ifdefined\arXiv
\maketitle
\fi

  \begin{abstract}
    For two planar graphs $\1G = (\1V, \1E)$ and $\2G = (\2V, \2E)$
    sharing a common subgraph $G = \1G \cap \2G$ the problem {\sc
      Simultaneous Embedding with Fixed Edges (SEFE)} asks whether
    they admit planar drawings such that the common graph is drawn the
    same.  Previous algorithms only work for cases where~$G$ is
    connected, and hence do not need to handle relative positions of
    connected components.  We consider the problem where~$G$, $\1G$
    and~$\2G$ are not necessarily connected.


    First, we show that a general instance of {\sc SEFE} can be
    reduced in linear time to an equivalent instance where $\1V = \2V$
    and $\1G$ and $\2G$ are connected.  Second, for the case where~$G$
    consists of disjoint cycles, we introduce the \emph{CC-tree} which
    represents all embeddings of~$G$ that extend to planar embeddings
    of~$\1G$.  We show that CC-trees can be computed in linear time,
    and that their intersection is again a CC-tree.  This yields a
    linear-time algorithm for {\sc SEFE} if all $k$ input graphs
    (possibly $k>2$) pairwise share the same set of disjoint cycles.
    These results, including the CC-tree, extend to the case where $G$
    consists of arbitrary connected components, each with a fixed
    planar embedding on the sphere.  Then the running time
    is~$O(n^2)$.
  \end{abstract}

\ifdefined\elsevier
\end{frontmatter}
\fi

\section{Introduction}
\label{sec:introduction}

To enable a human reader to compare different relational datasets on a
common set of objects it is important to visualize the corresponding
graphs in such a way that the common parts of the different datasets
are drawn as similarly as possible.  An example is a dynamic graph
that changes over time.  Then the change between two points in time
can be easily grasped with the help of a visualization showing the
parts that did not change in the same way for both graphs.  This leads
to the fundamental theoretical problem {\sc Simultaneous Embedding
  with Fixed Edges} (or {\sc SEFE} for short), asking for two graphs
$\1G = (\1V, \1E)$ and $\2G = (\2V, \2E)$ with the common graph $G =
(V, E) = (\1V \cap \2V, \1E \cap \2E)$, whether there are planar
drawings of $\1G$ and $\2G$ such that the common graph $G$ is drawn
the same in both.

The problem {\sc SEFE} and its variants, such as {\sc Simultaneous
  Geometric Embedding}, where one insists on a simultaneous
straight-line drawing, have been studied intensively in the past
years; see the recent survey~\cite{bkr-sepg-13} for an overview.  Some
of the results show, for certain graph classes, that they always admit
simultaneous embeddings or that there exist negative instances of SEFE
whose input graphs belong to these classes.  As there are planar
graphs that cannot be embedded simultaneously, the question of
deciding whether given graphs admit a {\sc SEFE} is of high interest.
Gassner et al.~\cite{SimultaneousGraphEmbeddings-Gassner.etal(06)}
show that it is $\mathcal {NP}$-complete to decide {\sc SEFE} for
three or more graphs.  For two graphs the complexity status is still
open.  However, there are several approaches yielding efficient
algorithms for special cases.  Fowler at al. show how to solve {\sc
  SEFE} efficiently, if~$\1G$ and~$G$ have at most two and one cycles,
respectively~\cite{SPQR-TreeApproachto-Fowler.etal(09)}.  Fowler et
al. characterize the class of common graphs that always admit a {\sc
  SEFE}~\cite{Characterizationsofrestricted-Fowler.etal(11)}.
Angelini et al.~\cite{adf-tppeg-10} show that if one of the input
graphs has a fixed planar embedding, then {\sc SEFE} can be solved in
linear time.  Haeupler et al. solve {\sc SEFE} in linear time for the
case that the common graph is
biconnected~\cite{TestingSimultaneousPlanarity-Haeupler.etal(10)}.
Angelini et al. obtain the same result with a completely different
approach~\cite{adfpr-tsegi-12}.  They additionally solve the case
where the common graph is a star and, moreover, show the equivalence
of the case where the common graph is connected to the case where the
common graph is a tree and relate it to a constrained book embedding
problem.  The currently least restrictive result (in terms of
connectivity) by Bläsius and Rutter~\cite{br-spqoacep-13} shows that
{\sc SEFE} can be solved in polynomial time for the case that both
graphs are biconnected and the common graph is connected.

The algorithms testing {\sc SEFE} have in common that they use the
result by Jünger and
Schulz~\cite{IntersectionGraphsin-Juenger.Schulz(09)} stating that the
question of finding a simultaneous embedding for two graphs is
equivalent to the problem of finding planar embeddings of $\1G$ and
$\2G$ such that they induce the same embedding on~$G$.  Moreover, they
have in common that they all assume that the common graph is
connected, implying that it is sufficient to enforce the common edges
incident to each vertex to have the same circular ordering in both
embeddings.  Especially in the result by Bläsius and
Rutter~\cite{br-spqoacep-13} this is heavily used, as they explicitly
consider only orders of edges around vertices using PQ-trees.
However, if the common graph is not required to be connected, we
additionally have to care about the relative positions of connected
components to one another, which introduces an additional difficulty.
Note that the case where the common graph is disconnected cannot be
reduced to the case where it is connected by inserting additional
edges.  Figure~\ref{fig:cannot-connect-common-graph} shows an instance
that admits a simultaneous embedding, which is no longer true if the
isolated vertex $v$ is connected to the remaining graph.  Other
approaches to solve the SEFE problem have only appeared recently.
Schaefer~\cite{s-ttp-13} characterizes, for certain classes of SEFE
instances, the pairs of graphs that admit a SEFE via the independent
odd crossing number.  Among others, this gives a polynomial-time
algorithm for SEFE when the common graph has maximum degree~3 and is
not necessarily connected.

\begin{figure}
  \centering
  \includegraphics{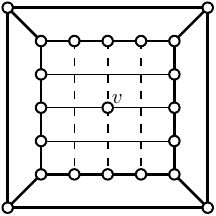}
  \caption{The bold edges belong to both graphs, the dashed and thin
    edges are exclusive edges.}
  \label{fig:cannot-connect-common-graph}
\end{figure}

In this work we tackle the {\sc SEFE} problem from the opposite
direction \todo{Pag 3, line -6}than the so far known results, by
assuming that the circular order of edges around vertices in $G$ is
already fixed and we only have to ensure that the embeddings chosen
for the input graphs are \emph{compatible} in the sense that they
induce the same relative positions on~$G$.  Initially, we assume that
the graph $G$ consists of a set of disjoint cycles, each of them
having a unique planar embedding.  We present a novel data structure,
the \emph{CC-tree}, which represents all embeddings of a set of
disjoint cycles that can be induced by an embedding of a graph
containing them as a subgraph.  We moreover show that two such
CC-trees can be intersected, again yielding a CC-tree.  Thus, for the
case that $\1G$ and $\2G$ have the common graph $G$ consisting of a
set of disjoint cycles, the intersection of the CC-trees corresponding
to $\1G$ and $\2G$ represents all simultaneous embeddings.  We show
that CC-trees can be computed and intersected in linear time, yielding
a linear-time algorithm to solve {\sc SEFE} for the case that the
common graph consists of disjoint cycles.  Note that this obviously
also yields a linear-time algorithm to solve {\sc SEFE} for more than
two graphs if they all share the same common graph consisting of a set
of disjoint cycles.  We show that these results can be further
extended to the case where the common graph may contain arbitrary
connected components, each of them with a prescribed planar embedding.
However, in this case the corresponding data structure, called
CC$^\oplus$-tree, may have quadratic size.  These results show that
the choice of relative positions of several connected components does
not solely make the problem {\sc SEFE} hard to solve.

\todo{Pag 3, line 14-15:}Note that these results have an interesting
application concerning the problem {\sc Partially Embedded Planarity}.
The input of \textsc{Partially Embedded Planarity} is a planar graph
$G$ together with a fixed embedding for a subgraph $H$ (including
fixed relative positions).  It asks whether $G$ admits a planar
embedding extending the embedding of $H$.  Angelini et
al.~\cite{adf-tppeg-10} introduced this problem and solve it in linear
time.  The CC$^\oplus$-tree can be used to solve {\sc Partially
  Embedded Planarity} in quadratic time as it represents all possible
relative positions of the connected components in $H$ to one another
that can be induced by an embedding of $G$.  It is \todo{Pag 3, line
  18:}then easy to test whether the prespecified relative positions
can be achieved.  In fact, this solves the slightly more general case
of {\sc Partially Embedded Planarity} where not all relative positions
have to be fixed.

The above described results have one restriction that was not
mentioned so far.  The graphs~$\1G$ and $\2G$ are assumed to be
connected, otherwise the approach we present does not work.
Fortunately, we can show that both graphs of an instance of {\sc SEFE}
can always be assumed to be connected, even if all vertices are
assumed to be common vertices (forming isolated vertices when not
connected via a common edge).  This shows that {\sc SEFE} can be solved
efficiently if the common graph consists of disjoint cycles without
further restrictions on the connectivity.  Moreover, it is an
interesting result on its own as it applies to arbitrary instances of
{\sc SEFE}, not only to the special case we primarily consider here.

As connectivity plays an important role in this work we fix some basic
definitions in the following.  A graph is \emph{connected} if there
exists a path between any pair of vertices.  A \emph{separating
  $k$-set} is a set of $k$ vertices whose removal disconnects the
graph.  Separating 1-sets and 2-sets are \emph{cutvertices} and
\emph{separation pairs}, respectively.  A connected graph is
\emph{biconnected} if it does not have a cut vertex and
\emph{triconnected} if it does not have a separation pair.  The
maximal biconnected components of a graph are called \emph{blocks}.
The \emph{cut components} with respect to a separating $k$-set $S$ are
the maximal subgraphs that are not disconnected by removing $S$.

\paragraph{Outline.}

In Section~\ref{sec:disconnected-graphs} we show that, for any given
instance of {\sc SEFE}, there exists an equivalent instance such that
both input graphs are connected, even if each vertex is assumed to be
a common vertex.  With this result instances of {\sc SEFE} can always
be assumed to have this property.  In Section~\ref{sec:disj-cycl} we
show how to solve {\sc SEFE} in linear time if the common graph
consists of disjoint cycles, including a compact representation of all
simultaneous embeddings.  In Section~\ref{sec:extension} we show how
to extend these results to solve {\sc SEFE} in quadratic time for the
case that the common graph consists of arbitrary connected components,
each with a fixed planar embedding.  We conclude in
Section~\ref{sec:conclusion}.

\section{Connecting Disconnected Graphs}
\label{sec:disconnected-graphs}

Let $\1G = (V, \1E)$ and $\2G = (V, \2E)$ be two planar graphs with
common graph $G = (V, E)$ with $E = \1E \cap \2E$.  We show that the
problem {\sc SEFE} can be reduced to the case where $\1G$ and $\2G$
are required to be connected.  First note that the connected
components of the union of $\1G$ and $\2G$ can be handled
independently.  Thus we can assume that $\1G \cup \2G$ is connected.
We first ensure that $\1G$ is connected without increasing the number
of connected components in $\2G$.  Afterwards we can apply the same
steps to $\2G$ to make it connected, maintaining the connectivity of
$\1G$.

Assume $\1G$ and $\2G$ consist of $\1k$ and $\2k$ connected
components, respectively.  Since the union of $\1G$ and $\2G$ is
connected, we can always find an edge $\2e = \{v_1, v_2\} \in \2E$ such
that the vertices $v_1$ and $v_2$ belong to different connected
components $\1{H_1}$ and $\1{H_2}$ in $\1G$.  We construct the
\emph{augmented instance} $(\1{G_+}, \2{G_+})$ of {\sc SEFE} with
respect to the edge $\2e$ by introducing a new vertex $v_{12}$ and new
edges $e = \{v_1, v_{12}\} \in E$ and $\1e = \{v_{12},v_2\} \in \1E$.
Note that $\1{G_+}$ has $\1k - 1$ connected components since $\1{H_1}$
and $\2{H_2}$ are now connected via the two edges $e$ and $\1e$.
Moreover, the number $\2k$ of connected components in $\2G$ does not
change, since the edge $e$ connects the new vertex $v_{12}$ to one of
its connected components.  It remains to show that the original
instance and the augmented instance are equivalent.

\begin{figure}
  \centering
  \includegraphics[page=1]{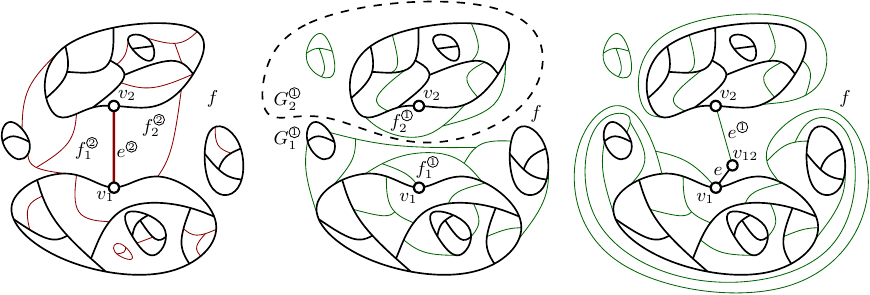}
  \caption{Illustration of Lemma~\ref{lem:connecting-graphs}, the
    common graph is depicted black.  The graph $\2G$ with the edge
    $\2e = \{v_1,v_2\}$ lying in the common face $f$, which is the
    outer face of $G$ (left).  The graph $\1G$ with the faces
    $\1{f_1},\1{f_2} \in \1{\mathcal F}(f)$ incident to $v_1$ and
    $v_2$, respectively, partitioned into $\1{G_1}$
    and~$\1{G_2}$~(middle). The resulting graph $\1G$ after choosing
    $\1{f_i}$ as outer face of $\1{G_i}$ (for $i = 1,2$) and inserting
    the vertex $v_{12}$ and the edges $e$ and $\1e$ (right).}
  \label{fig:connecting-graphs}
\end{figure}

\begin{lemma}
\label{lem:connecting-graphs}
  Let $(\1G, \2G)$ be an instance of {\sc SEFE} and let $(\1{G_+},
  \2{G_+})$ be the augmented instance with respect to the edge $\2e =
  \{v_1, v_2\}$.  Then $(\1G, \2G)$ and $(\1{G_+}, \2{G_+})$ are
  equivalent.
\end{lemma}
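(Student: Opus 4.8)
The plan is to prove both implications of the equivalence, using the characterization of J\"unger and Schulz~\cite{IntersectionGraphsin-Juenger.Schulz(09)} that a {\sc SEFE} is the same as a pair of planar embeddings of the two graphs inducing the same embedding on the common graph. One direction is immediate: given planar embeddings of $\1{G_+}$ and $\2{G_+}$ that induce the same embedding on the augmented common graph $G_+$, deleting the new vertex $v_{12}$ together with the edges $e$ and $\1e$ yields planar embeddings of $\1G$ and $\2G$; since removing the pendant vertex $v_{12}$ and its edge $e$ turns $G_+$ back into $G$, these restrictions still induce a common embedding on $G$, so the original instance is a yes-instance.

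For the converse I would start from a {\sc SEFE} of $(\1G,\2G)$ given by embeddings $\mathcal E_1$ of $\1G$ and $\mathcal E_2$ of $\2G$ that induce a common embedding $\mathcal E$ on $G$, and extend it to the augmented instance. First I would locate the face $f$ of $\mathcal E$ in which $\2e$ is drawn under $\mathcal E_2$: since $\2e$ joins $v_1$ and $v_2$ without crossing $G$, its interior lies in a single face $f$ of $G$, and because $v_1,v_2$ lie in different components of $G$ (as $G\subseteq\1G$ and they lie in different components $\1{H_1},\1{H_2}$ of $\1G$), this face $f$ is incident to both $v_1$ and $v_2$. In $\2{G_+}$ I would place $v_{12}$ inside $f$ next to $v_1$ and draw the common edge $e$ as a short stub alongside $\2e$, inserting $e$ into the rotation at $v_1$ in the angular corner belonging to $f$. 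This keeps $\mathcal E_2$ planar and extends it to $\2{G_+}$.

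The interesting direction is the analogous insertion in $\1G$, where $v_1$ and $v_2$ sit in the distinct components $\1{H_1}$ and $\1{H_2}$, so a priori no single face of $\mathcal E_1$ is incident to both: the face $f$ of $G$ may be subdivided by exclusive edges of $\1G$ into distinct faces $\1{f_1},\1{f_2}\in\1{\mathcal F}(f)$ incident to $v_1$ and $v_2$. Here I would exploit that $\1{H_1}$ and $\1{H_2}$ are separate components, so their relative position is free. As in Figure~\ref{fig:connecting-graphs}, I would re-embed $\1G$ by choosing $\1{f_1}$ and $\1{f_2}$ as the outer faces of the parts containing $v_1$ and $v_2$ and nesting the $v_2$-part into $\1{f_1}$; the resulting annular region is a single face incident to both $v_1$ and $v_2$. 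Into this face I would insert $v_{12}$ and the edges $e$ and $\1e$, placing $e$ in the same $G$-corner of $v_1$ (the corner of $f$) as in $\2{G_+}$. Since this manipulation moves only exclusive structure and whole components within the fixed face $f$ of $G$, it changes neither the rotation system on $G$ nor the relative positions of the components of $G$; hence the new embedding of $\1{G_+}$ and the embedding of $\2{G_+}$ both place $v_{12}$ and $e$ in $f$ with matching rotation at $v_1$, so they induce the same embedding on $G_+$.

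The main obstacle, which I would treat carefully, is exactly this re-embedding step: arguing that $v_1$ and $v_2$ can be exposed on a common face of $\1G$ by only relocating connected components and reselecting outer faces, and that this leaves the induced embedding on $G$ equal to $\mathcal E$ (so it still agrees with $\mathcal E_2$ on $G$). I would make this rigorous by describing a planar embedding through its rotation system together with the relative positions of its connected components, and verifying that the nesting operation alters neither; the matching of the rotation at $v_1$ then follows because $\1{f_1}\subseteq f$ lies in the same $G$-corner at $v_1$ used in $\2{G_+}$.
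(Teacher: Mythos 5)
Your proposal is correct and follows essentially the same route as the paper's proof: identify the common-graph face $f$ containing $\2e$, pick faces $\1{f_1},\1{f_2}$ of $\1{\mathcal E}$ inside $f$ incident to $v_1$ and $v_2$, re-embed $\1G$ by splitting it into the two component groups, re-choosing $\1{f_1}$ and $\1{f_2}$ as their outer faces and nesting one part into the other so that $v_1$ and $v_2$ share a face while the induced embedding on $G$ is unchanged, and finally attach $v_{12}$ with $e$ and $\1e$. Your additional care about inserting $e$ in the matching corner of $f$ at $v_1$ on both sides is a detail the paper compresses into ``can be added easily,'' but it is the same argument.
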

\begin{proof}
  If the augmented instance admits a {\sc SEFE}, then obviously the
  original instance does.  To show the other direction assume the
  original instance $(\1G, \2G)$ has a {\sc SEFE} $(\1{\mathcal E},
  \2{\mathcal E})$ inducing the embedding $\mathcal E$ for the common
  graph.  We show how to construct an embedding $\1{\mathcal E'}$ such
  that \todo{Pag 4, proof of Lemma 1, lines 2-3:}(i) $(\1{\mathcal
    E'}, \2{\mathcal E})$ is a {\sc SEFE}, and (ii) the vertices $v_1$
  and $v_2$ lie on the border of a common face in~$\1{\mathcal E'}$.
  Then we can easily add the vertex $v_{12}$ together with the two
  edges $e$ and~$\1e$, yielding a {\sc SEFE} of the augmented instance
  $(\1{G_+}, \2{G_+})$.  Note that the first property, namely that
  $(\1{\mathcal E'}, \2{\mathcal E})$ is a {\sc SEFE}, is satisfied if
  and only if the embeddings $\1{\mathcal E}$ and $\1{\mathcal E'}$
  induce the same embedding $\mathcal E$ on the common graph.
  Figure~\ref{fig:connecting-graphs} illustrates the proof.

  Consider a face $f$ of the embedding $\mathcal E$ of the common
  graph.  The embedding $\1{\mathcal E}$ of the graph~$\1G$ splits
  this face $f$ into a set of faces $\1{\mathcal F}(f) = \{\1{f_1},
  \dots, \1{f_k}\}$.  We say that a face $\1f \in \1{\mathcal F}(f)$
  is \emph{contained} in $f$.  Note that every face of $\1{\mathcal
    E}$ is contained in exactly one face of $\mathcal E$.  The same
  definition can be made for the second graph.

  The edge $\2e = \{v_1, v_2\}$ borders two faces $\2{f_1}$ and
  $\2{f_2}$ of $\2{\mathcal E}$.  Since $\2e$ belongs exclusively
  to~$\2G$ (otherwise $v_1$ and $v_2$ would not have been in different
  connected components in $\1G$) both faces~$\2{f_1}$ and $\2{f_2}$
  are contained in the same face $f$ of the embedding $\mathcal E$ of
  the common graph $G$.  We assume without loss of generality that $f$
  is the outer face.  The face $f$ may be subdivided by edges
  belonging exclusively to the graph $\1G$.  However, we can find
  faces $\1{f_1}$ and $\1{f_2}$ of $\1{\mathcal E}$, both contained
  in~$f$, such that $v_1$ and $v_2$ are contained in the boundary of
  these faces.  If $\1{f_1} = \1{f_2}$ we are done since $v_1$ and
  $v_2$ lie on the boundary of the same face in $\1{\mathcal E}$.
  Otherwise, we split $\1G$ into two subgraphs~$\1{G_1}$ and~$\1{G_2}$
  with the embeddings $\1{\mathcal E_1}$ and $\1{\mathcal E_2}$
  induced by $\1{\mathcal E}$ as follows.  The connected
  component~$\1{H_i}$ (for~$i = 1, 2$) containing $v_i$ belongs to
  $\1{G_i}$ and all connected components that are completely contained
  in an internal face of $\1{H_i}$ also belong to $\1{G_i}$.  All
  remaining connected components belong either to $\1{G_1}$ or to
  $\1{G_2}$.  Note that this partition ensures that there is a simple
  closed curve in the outer face of $\1{\mathcal E}$ separating
  $\1{G_1}$ and $\1{G_2}$.  Thus, we can change the embeddings
  of~$\1{\mathcal E_1}$ and~$\1{\mathcal E_2}$ independently.  In
  particular, we choose the faces $\1{f_1}$ and $\1{f_2}$ to be the
  new outer faces, yielding the changed embeddings $\1{\mathcal E_1'}$
  and $\1{\mathcal E_2'}$, respectively.  When combining these to
  embeddings by putting $\1{G_1}$ into the outer face of $\1{G_2}$ and
  vice versa, we obtain a new embedding $\1{\mathcal E'}$ of $\1G$
  with the following two properties.  First, the embedding induced for
  the common graph does not change since both faces $\1{f_1}$ and
  $\1{f_2}$ belong to the outer face $f$ of the embedding $\mathcal E$
  of the common graph $G$.  Second, the vertices $v_1$ and $v_2$ both
  lie on the outer face of the embedding $\1{\mathcal E'}$.  Hence,
  $(\1{\mathcal E'}, \2{\mathcal E})$ is still a {\sc SEFE} of the
  instance $(\1G, \2G)$ and the vertex $v_{12}$ together with the two
  edges $e$ and~$\1e$ can be added easily, which concludes the proof.
\end{proof}

With this construction we can reduce the number of connected
components of $\1G$ and $\2G$ and thus finally obtain an equivalent
instance of {\sc SEFE} in which both graphs are connected.  We obtain
the following Theorem.

\begin{theorem}
  \label{thm:connecting-graphs}
  For every instance $(\1G, \2G)$ of {\sc SEFE} there exits an
  equivalent instance $(\1{G_{++}}, \2{G_{++}})$ such that
  $\1{G_{++}}$ and $\2{G_{++}}$ are connected.  Such an instance can
  be computed in linear time.
\end{theorem}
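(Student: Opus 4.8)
The plan is to prove Theorem~\ref{thm:connecting-graphs} by repeatedly applying Lemma~\ref{lem:connecting-graphs} to connect the components of $\1G$ and then of $\2G$, arguing that equivalence is preserved at each step and that the whole process runs in linear time. The theorem is essentially a bookkeeping wrapper around the lemma, so the main work is organizing the induction and verifying the running-time claim.

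Let me think about the structure. First, I would reduce to the case where $\1G \cup \2G$ is connected. The connected components of $\1G \cup \2G$ partition the vertex set, and common edges as well as exclusive edges never cross between these components; hence a {\sc SEFE} of the whole instance exists if and only if each component admits a {\sc SEFE} independently (the drawings of distinct components can be placed in disjoint regions of the plane without interference). So it suffices to treat each component separately, and within a component we may assume $\1G \cup \2G$ is connected.

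Next comes the main induction. As argued in the text preceding the lemma, whenever $\1G$ is disconnected while $\1G \cup \2G$ is connected, there must exist an edge $\2e = \{v_1,v_2\} \in \2E$ whose endpoints lie in different components of $\1G$; otherwise every edge of $\2G$ stays within a single component of $\1G$, contradicting connectivity of the union. I would apply Lemma~\ref{lem:connecting-graphs} to this edge, obtaining an equivalent augmented instance in which the number of components of $\1G$ drops from $\1k$ to $\1k-1$ while the number of components of $\2G$ stays at $\2k$, as established in the discussion before the lemma. Crucially, the augmentation keeps $\1G \cup \2G$ connected (we only add a vertex and edges inside an existing component of the union), so the precondition for the next step is maintained. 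Iterating $\1k - 1$ times makes $\1G$ connected; then, since $\1G$ is now connected and the union is connected, I symmetrically apply the same procedure to $\2G$, this time choosing edges of $\1E$ bridging components of $\2G$. Each such step leaves $\1G$ connected, because adding a new degree-two vertex subdividing-style gadget inside a single component of $\1G$ cannot disconnect it. After at most $\2k - 1$ further steps both graphs are connected, and transitivity of the equivalence relation (each individual step is an equivalence by Lemma~\ref{lem:connecting-graphs}) yields that the final instance $(\1{G_{++}}, \2{G_{++}})$ is equivalent to the original.

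For the running time, each augmentation adds one vertex and two edges, and the total number of augmentations is at most $(\1k - 1) + (\2k - 1) = O(n)$; thus the final instance has size linear in the original. The point requiring the most care is showing that a suitable bridging edge can be found in amortized constant time rather than rescanning the edge set each iteration, so that the total is genuinely linear and not quadratic. I would maintain a union-find--style component structure, or more simply process the components of $\1G$ greedily: since $\1G \cup \2G$ is connected, one can precompute a spanning structure of the union and, for each component of $\1G$ that must be merged, select a union edge crossing the current component boundary, charging the work to the edge consumed. The remaining properties---that each step preserves equivalence and connectivity of the union---follow directly from Lemma~\ref{lem:connecting-graphs} and the explicit component-count analysis already carried out in the paragraph preceding it, so I expect no further obstacle beyond the linear-time bookkeeping.
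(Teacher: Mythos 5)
Your proposal is correct and follows essentially the same route as the paper: equivalence by iterating Lemma~\ref{lem:connecting-graphs}, and linear time by precomputing all bridging edges at once via a spanning structure over the contracted components (the paper contracts each component of $\1G$ to a vertex in $\2G$ and takes a spanning tree, which is the same idea as your union-find/charging scheme). Your additional observations---that the union stays connected after each augmentation and that connecting $\2G$ afterwards cannot disconnect $\1G$---are details the paper leaves implicit, and they are argued correctly.
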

\begin{proof}
  Lemma~\ref{lem:connecting-graphs} directly implies that an
  equivalent instance $(\1{G_{++}}, \2{G_{++}})$ in which both graphs
  are connected exists.  It remains to show that it can be computed in
  linear time.  To connect all the connected components of $\1G$, we
  contract each of them to a single vertex in the graph $\2G$.  Then
  an arbitrary spanning tree yields a set of edges $\2{e_1},
  \dots,\2{e_k} \in \2E$, such that augmenting the instance with
  respect to these edges yields a connected graph $\1{G_{++}}$.  This
  works symmetrically for $\2G$ and can obviously be done in linear
  time.
\end{proof}

\section{Disjoint Cycles}
\label{sec:disj-cycl}

In this section, we consider the problem {\sc SEFE} for the case that
the common graph consists of a set of disjoint cycles.  Due to
Theorem~\ref{thm:connecting-graphs}, we can assume without loss of
generality that both graphs are connected.  In
Section~\ref{sec:disj-cycl-poly-time} we show how to solve this
special case of {\sc SEFE} in polynomial time.  In
Section~\ref{sec:compact-rep} we introduce a tree-like data structure,
the \emph{CC-tree}, representing all planar embeddings of a set of
cycles contained in a single graph that can be induced by an embedding
of the whole graph.  We additionally show that the intersection of the
set of embeddings represented by two CC-trees can again be represented
by a CC-tree, yielding a solution for {\sc SEFE} even for the case of
more than two graphs if all graphs have the same common graph, which
consists of a set of disjoint cycles.  In
Section~\ref{sec:line-time-algor} we show how to compute the CC-tree
and the intersection of two CC-trees in linear time.  Before we start,
we fix some definitions.

\paragraph{Embeddings of Disjoint Cycles.}
\label{sec:embedd-disj-cycl}

Let $\mathcal C = \{C_1, \dots, C_k\}$ be a set of disjoint simple
cycles.  We consider embeddings of these cycles on the sphere.  Since
a single cycle has a unique embedding on the sphere only their
relative positions to one another are of interest.  To be able to use
the terms ``left'' and ``right'' we consider the cycles to be
directed.  We denote the relative position of a cycle $C_j$ with
respect to a cycle $C_i$ by $\pos_{C_i}(C_j)$.  More precisely, we
have $\pos_{C_i}(C_j) = \text{``left''}$ and $\pos_{C_i}(C_j) =
\text{``right''}$, if $C_j$ lies on the left and right side of $C_i$,
respectively.  We call an assignment of a value ``left'' or ``right''
to each of these relative positions a \emph{semi-embedding} of the
cycles $\mathcal C = \{C_1, \dots, C_k\}$.  Note that not every
semi-embedding yields an embedding of the cycles.  For example if
$\pos_{C_i}(C_j) = \pos_{C_j}(C_k) = \text{``left''}$ and
$\pos_{C_j}(C_i) = \text{``right''}$, then $\pos_{C_i}(C_k)$ also
needs to have the value ``left''; see Figure~\ref{fig:semi-embedding}.
However, two embeddings yielding the same semi-embedding are the same.

Sometimes we do not only consider the relative position of cycles but
also of some other disjoint subgraph.  We extend our notation to this
case.  For example the relative position of a single vertex $v$ with
respect to a cycle $C$ is denoted by $\pos_C(v)$.

\begin{figure}
  \centering
  \includegraphics{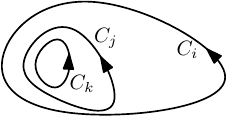}
  \caption{Three nested cycles.}
  \label{fig:semi-embedding}
\end{figure}

\paragraph{SPQR- and BC-Trees.}
\label{sec:spqr-trees}

The \emph{block-cutvertex tree (BC-tree)} $\mathcal B$ of a connected
graph is a tree whose nodes are the blocks and cutvertices of the
graph, called \emph{B-nodes} and \emph{C-nodes}, respectively.  In the
BC-tree a block $B$ and a cutvertex $v$ are joined by an edge if $v$
belongs to $B$.  If an embedding is chosen for each block, these
embeddings can be combined to an embedding of the whole graph if and
only if $\mathcal B$ can be rooted at a B-node such that the parent of
every other block $B$ in $\mathcal B$, which is a cutvertex, lies on
the outer face of $B$.

We use the \emph{SPQR-tree} introduced by Di Battista and
Tamassia~\cite{dt-omtc-96,dt-opt-96} to represent all planar
embeddings of a biconnected planar graph $G$.  The SPQR-tree $\mathcal
T$ of $G$ is a decomposition of $G$ into triconnected components along
its \emph{split pairs}, where a split pair is either a separation pair
or an edge.  We define the SPQR-tree to be unrooted, representing
embeddings on the sphere, that is planar embeddings without a
designated outer face.  Let $\{s, t\}$ be a split pair and let $H_1$
and~$H_2$ be two subgraphs of $G$ such that $H_1 \cup H_2 = G$ and
$H_1 \cap H_2 = \{s, t\}$.  Consider the tree containing the two nodes
$\mu_1$ and $\mu_2$ associated with the graphs $H_1 + \{s, t\}$ and
$H_2 + \{s, t\}$, respectively.  These graphs are called
\emph{skeletons} of the nodes $\mu_i$, denoted by $\skel(\mu_i)$ and
the special edge $\{s, t\}$ is said to be a \emph{virtual edge}.  The
two nodes $\mu_1$ and $\mu_2$ are connected by an edge or, more
precisely, the occurrence of the virtual edges $\{s, t\}$ in both
skeletons are linked by this edge.  The \emph{expansion graph}
$\expan(\{s, t\})$ of a virtual edge $\{s, t\}$ is the subgraph of $G$
it represents, that is in $\skel(\mu_1)$ and $\skel(\mu_2)$ the
expansion graphs of $\{s, t\}$ are $H_2$ and $H_1$, respectively.  Now
a combinatorial embedding of $G$ uniquely induces a combinatorial
embedding of $\skel(\mu_1)$ and $\skel(\mu_2)$.  Furthermore,
arbitrary and independently chosen embeddings for the two skeletons
determine an embedding of $G$, thus the resulting tree can be used to
represent all embeddings of $G$ by the combination of all embeddings
of two smaller planar graphs.  This replacement can of course be
applied iteratively to the skeletons yielding a tree with more nodes
but smaller skeletons associated with the nodes.

Applying this kind of decomposition in a systematic way yields the
SPQR-tree as introduced by Di Battista and
Tamassia~\cite{dt-omtc-96,dt-opt-96}.  The SPQR-tree $\mathcal T$ of a
biconnected planar graph $G$ contains four types of nodes.  \todo{Pag
  7, line 1-2:}First, the skeleton of a P-node consists of a bundle of
at least three parallel edges.  Embedding the skeleton of a P-node
corresponds to choosing an order for the parallel edges.  Second, the
skeleton of an R-node is triconnected, thus having exactly two
embeddings~\cite{w-cgcg-32}, and third, S-nodes have a simple cycle as
skeleton without any choice for the embedding.  Finally, every edge in
a skeleton representing only a single edge in the original graph $G$
is formally also considered to be a virtual edge linked to a Q-node in
$\mathcal T$ representing this single edge.  Note that all leaves of
the SPQR-tree~$\mathcal T$ are Q-nodes.  Besides from being a nice way
to represent all embeddings of a biconnected planar graph, the
SPQR-tree has size only linear in the size of $G$ and Gutwenger and
Mutzel~\cite{gm-lti-00} show that it can be computed in linear time.
Figure~\ref{fig:spqr-tree} shows a biconnected planar graph together
with its SPQR-tree.

\begin{figure}
  \centering
  \includegraphics[page=1]{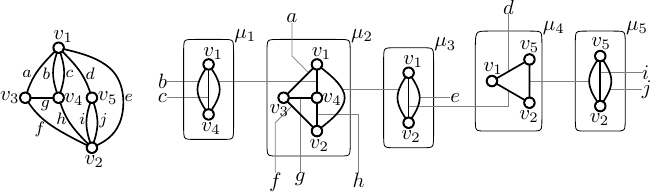}
  \caption{The unrooted SPQR-tree of a biconnected planar graph.  The
    nodes $\mu_1$, $\mu_3$ and $\mu_5$ are P-nodes, $\mu_2$ is an
    R-node and $\mu_4$ is an S-node.  The Q-nodes are not shown
    explicitely.}
  \label{fig:spqr-tree}
\end{figure}

\subsection{A Polynomial-Time Algorithm}
\label{sec:disj-cycl-poly-time}

Let $(\1G, \2G)$ be an instance of {\sc SEFE} with common graph $G$
consisting of pairwise disjoint simple cycles $\mathcal C = \{C_1,
\dots, C_k\}$.  We first assume that $\1G$ and $\2G$ are biconnected
and show later how to remove this restriction.  Our approach is to
formulate constraints on the relative positions of the cycles to one
another ensuring that $\1G$ and $\2G$ induce the same semi-embedding
on the common graph $G$.  We show implicitly that the resulting
semi-embedding is really an embedding by showing that the graphs $\1G$
and $\2G$ have planar embeddings inducing this semi-embedding.  Note
that this only works for the case that $\1G$ and $\2G$ are connected.
Thus, our approach crucially relies on the result provided in
Section~\ref{sec:disconnected-graphs}.

\subsubsection*{Biconnected Graphs}
\label{sec:biconnected-graphs}

Before considering two graphs, we determine for a single graph the
possible embeddings it may induce on a set of disjoint cycles
contained in it.  Let $G = (V, E)$ be a biconnected graph with
SPQR-tree $\mathcal T$, let $C$ be a simple directed cycle in $G$ and
let $\mu$ be a node in $\mathcal T$.  Obviously, $C$ is either
completely contained in the expansion graph of a single virtual edge
of $\mu$ or $C$ induces a simple directed cycle of virtual edges in
$\skel(\mu)$.  We say that $C$ is \emph{contracted} in $\skel(\mu)$ in
the first case and that $C$ is \emph{a cycle} in $\skel(\mu)$ in the
second case.
If~$C$
is a cycle in~$\skel(\mu)$, we also say that~$\skel(\mu)$
\emph{contains}~$C$ as a cycle.  Consider the case where $C$ is a
cycle in $\skel(\mu)$ and let~$\kappa$ denote this cycle.  By fixing
the embedding of $\skel(\mu)$ the virtual edges in $\skel(\mu)$ not
contained in $\kappa$ split into two groups, some lie to the left and
some to the right of $\kappa$.  Obviously, a vertex~$v \in V \setminus
V(C)$ in the expansion graph of a virtual edge that lies to the left
(to the right) of~$\kappa$ lies to the left (to the right) of $C$ in
$G$, no matter which embedding is chosen for the skeletons of other
nodes.  In other words, the value of $\pos_{C}(v)$ is completely
determined by this single node~$\mu$.  We show that for every vertex
$v \in V \setminus V(C)$ there is a node $\mu$ in~$\mathcal T$ containing $C$ as a
cycle such that the virtual edge in $\skel(\mu)$ containing $v$ in its
expansion graph is not contained in the cycle $\kappa$ induced by $C$.
Hence such a node~$\mu \in \mathcal T$ determining $\pos_{C}(v)$
always exists.  Extending this to a pair of cycles yields the
following lemma.

\begin{lemma}
  \label{lem:determining-pos-by-ex-one}
  Let $G$ be a biconnected planar graph with SPQR-tree $\mathcal T$
  and let $C_1$ and $C_2$ be two disjoint simple cycles in $G$.  There
  is exactly one node $\mu$ in $\mathcal T$ determining
  $\pos_{C_1}(C_2)$.  Moreover, $\mu$ contains $C_1$ as cycle
  $\kappa_1$ and $C_2$ either as a cycle or contracted in an edge not
  contained in $\kappa_1$.
\end{lemma}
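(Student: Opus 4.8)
The plan is to establish existence and uniqueness of the determining node separately, building on the single-cycle discussion immediately preceding the lemma. The setup already tells us that for a fixed node $\mu$, if a cycle $C$ is a cycle $\kappa$ in $\skel(\mu)$, then fixing the embedding of $\skel(\mu)$ splits the remaining virtual edges into a left group and a right group relative to $\kappa$, and this determines $\pos_C(v)$ for every $v$ sitting in the expansion graph of a virtual edge not on $\kappa$. So the task reduces to: (a) showing there is at least one node where $C_1$ appears as a cycle and $C_2$ is "off" that cycle, and (b) showing no two distinct nodes can both claim to determine $\pos_{C_1}(C_2)$.

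**For existence**, I would first argue that the set of nodes $\mu$ in which $C_1$ appears as a cycle $\kappa_1$ (rather than being contracted into a single virtual edge) forms a connected subtree $\mathcal T_{C_1}$ of $\mathcal T$. The reasoning: if $C_1$ is contracted in the expansion graph of a virtual edge of $\skel(\mu)$, that virtual edge corresponds to the neighbor of $\mu$ on the path toward where $C_1$ actually lives, so "being a cycle" propagates along edges of the tree. Then I want a node in $\mathcal T_{C_1}$ where $C_2$ lies off $\kappa_1$. Consider the node $\mu^\star \in \mathcal T_{C_1}$ that is closest (in tree distance) to the subtree of nodes in which $C_2$ appears as a cycle. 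In $\skel(\mu^\star)$, the cycle $C_2$ is contained in the expansion graph of exactly one virtual edge $e$, namely the one leading toward $C_2$'s subtree. I claim $e$ is not an edge of $\kappa_1$: if it were, then the neighboring node across $e$ would still contain $C_1$ as a cycle (since $e \in \kappa_1$ means $C_1$ genuinely passes through the expansion graph of $e$), contradicting that $\mu^\star$ was chosen closest to $C_2$'s subtree while remaining in $\mathcal T_{C_1}$. Hence at $\mu^\star$ the cycle $C_1$ is a cycle and $C_2$ sits in an expansion graph of an edge off $\kappa_1$, so $\mu^\star$ determines $\pos_{C_1}(C_2)$.

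**For uniqueness**, suppose two distinct nodes $\mu$ and $\mu'$ both determine $\pos_{C_1}(C_2)$; both then contain $C_1$ as a cycle, so both lie in the connected subtree $\mathcal T_{C_1}$, and the edge $e$ of $\mu$ pointing toward $\mu'$ carries $C_2$ in its expansion graph (since $C_2$ is off $\kappa_1$ at $\mu$ but $\mu'$ is reached through some virtual edge). The key point is that any virtual edge of $\skel(\mu)$ other than the one toward $\mu'$ has its entire expansion graph "behind" $\mu$ from $\mu'$'s perspective; so the only way for $\mu'$ to also see $C_2$ off $\kappa_1$ is for $C_2$ to be reached from $\mu'$ through a different virtual edge than the one linking back to $\mu$ — but then $C_2$ would have to be contracted into the $\mu$-side edge at $\mu'$, placing $C_2$ on both sides, which is impossible. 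Made precise, the left/right split of $C_2$ determined at $\mu$ and at $\mu'$ is forced to agree only if they describe the same determination, so distinctness yields a contradiction.

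**The main obstacle** I anticipate is making the tree-traversal argument fully rigorous without hand-waving about "the side toward $C_2$." The cleanest route is to lean hard on the structural fact that cutting the SPQR-tree at any edge partitions $G$ into the two expansion graphs of the corresponding virtual-edge pair, so that relative position with respect to $C_1$ is genuinely a local quantity at whichever node both cycles coexist. I would state a small helper observation — that along the path in $\mathcal T$ between $C_2$'s home subtree and any node of $\mathcal T_{C_1}$, the virtual edge containing $C_2$ is determined by the tree path — and use it to pin down $\mu^\star$ unambiguously. The remaining verification, that the left/right label at $\mu^\star$ is embedding-independent, is exactly the observation quoted in the paragraph preceding the lemma, so it can be cited rather than reproved.
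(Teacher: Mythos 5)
Your proposal is correct in substance, but it is packaged differently from the paper's proof, and in one respect it covers more. The paper proves existence by a direct walk: it picks a Q-node $\mu_1$ of an edge of $C_1$ and a Q-node $\mu_k$ of an edge incident to a representative vertex $v \in V(C_2)$, walks along the tree path between them, and takes $\mu$ to be the first node where the path leaves the cycle induced by $C_1$; uniqueness is asserted but never argued explicitly. You instead factor the argument through the induced subtrees $\proj{\mathcal T}{C_1}$ and $\proj{\mathcal T}{C_2}$ (their connectedness is exactly the lemma the paper proves later, in the linear-time section) and choose $\mu^\star$ as the node of $\proj{\mathcal T}{C_1}$ closest to $\proj{\mathcal T}{C_2}$; this buys you a clean minimality argument for existence and, more importantly, a genuine uniqueness proof: for any other node $\mu'$ of $\proj{\mathcal T}{C_1}$, all of $C_2$ lies in the expansion graph of the virtual edge of $\skel(\mu')$ pointing back toward $\mu^\star$, and that virtual edge lies on the cycle $\kappa_1'$ induced by $C_1$ in $\skel(\mu')$ because the tree path stays inside $\proj{\mathcal T}{C_1}$ -- so $\mu'$ cannot determine $\pos_{C_1}(C_2)$. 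The cost is that your proof depends on the subtree-connectivity fact as a prerequisite, whereas the paper's walk is self-contained.

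Two blemishes should be repaired. First, your existence argument silently assumes $\proj{\mathcal T}{C_1}$ and $\proj{\mathcal T}{C_2}$ are disjoint; since they are edge-disjoint subtrees of a tree they can share at most one node, and if they do share a node $\mu^\star$ then $C_2$ is a cycle in $\skel(\mu^\star)$ rather than ``contained in the expansion graph of exactly one virtual edge,'' so your sentence is literally false in that case -- but this is precisely the ``$C_2$ as a cycle'' alternative of the statement and should just be split off as an easy case (edge-disjointness of $\kappa_1$ and the cycle induced by $C_2$ follows from vertex-disjointness of $C_1$ and $C_2$). Second, the opening claim of your uniqueness paragraph, that ``the edge $e$ of $\mu$ pointing toward $\mu'$ carries $C_2$ in its expansion graph,'' is the opposite of the truth: that edge lies on $\kappa_1$, so $C_2$ is exactly \emph{not} in its expansion graph; the correct statement, which your subsequent ``key point'' sentence in fact supplies, is the mirrored one at $\mu'$. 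Deleting the false sentence and keeping the key point makes the uniqueness argument sound.
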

\begin{proof}
  We choose some vertex $v \in V(C_2)$ as representative for the whole
  cycle.  Consider a Q-node~$\mu_1$ in the SPQR-tree $\mathcal T$
  corresponding to an edge contained in $C_1$.  Moreover, let $\mu_k$
  be a Q-node corresponding to an edge incident to $v$.  We claim that
  the desired node $\mu$ lies somewhere on the path $\mu_1, \dots,
  \mu_k$ in the SPQR-tree $\mathcal T$.

  Obviously $C_1$ is a cycle in $\mu_1$ and the vertex $v$ belongs to
  the virtual edge in $\skel(\mu_1)$.  In $\mu_k$ the vertex $v$ is a
  pole and $C_1$ is contracted in the virtual edge of $\skel(\mu_k)$
  since $v \notin V(C_1)$.  Assume we are navigating from $\mu_1$ to
  $\mu_k$ and let $\mu_i$ be the current node.  If $\skel(\mu_i)$ does
  not contain the vertex $v$, it belongs to a single virtual edge in
  $\skel(\mu_i)$.  In this case $\mu_{i+1}$ is obviously the node
  corresponding to this virtual edge.  If $v$ is a vertex of
  $\skel(\mu_i)$, then $\mu_{i+1}$ corresponds to one of the virtual
  edges incident to $v$ in $\skel(\mu_i)$.  As long as $C_1$ is a
  cycle in the current node and $v$ belongs to a virtual edge in this
  cycle, the next node in the path corresponds to this virtual edge
  and thus~$C_1$ remains a cycle.  Since $C_1$ is contracted in
  $\mu_k$, we somewhere need to follow a virtual edge not contained in
  the cycle induced by $C_1$; let $\mu$ be this node.  By definition
  $\mu$ contains $C_1$ as cycle~$\kappa$ and the next node on the path
  belongs to a virtual edge that is not contained in $\kappa$ but
  contains~$v$ in its expansion graph.  Thus $\pos_{C_1}(v)$ is
  determined by this node $\mu$.  Since $v$ is a node of the second
  cycle $C_2$ also $\pos_{C_1}(C_2)$ is completely determined by this
  node.  Moreover, $\mu$ contains $C_1$ as cycle~$\kappa$ and $C_2$
  either as a cycle or contracted in a virtual edge not belonging to
  $\kappa$.
\end{proof}

Now consider a set of pairwise disjoint cycles $\mathcal C = \{C_1,
\dots, C_k\}$ in $G$.  Let $\mu$ be an arbitrary node in the SPQR-tree
$\mathcal T$.  If $\mu$ is an S- or a Q-node it clearly does not
determine any of the relative positions since either every cycle is
contracted in $\skel(\mu)$ or a single cycle is a cycle in
$\skel(\mu)$ containing all the virtual edges.  In the following, we
consider the two interesting cases namely that $\mu$ is an R- or a
P-node containing at least one cycle as a cycle.

Let $\mu$ be a {\bf P-node} in $\mathcal T$ with $\skel(\mu)$
consisting of two vertices $s$ and $t$ with parallel virtual
edges~$\eps_1, \dots, \eps_\ell$ between them.  If $C \in \mathcal C$
is contained as a cycle in $\skel(\mu)$, it induces a cycle $\kappa$
in $\skel(\mu)$ consisting of two of the parallel virtual edges.  Let
without loss of generality $\eps_1$ and $\eps_2$ be these virtual
edges.  Obviously, no other cycle $C' \in \mathcal C$ is a cycle in
$\skel(\mu)$ since such a cycle would need to contain $s$ and $t$,
which is a contradiction to the assumption that $C$ and $C'$ are
disjoint.  Thus, every other cycle $C'$ is contracted in $\skel(\mu)$,
belonging to one of the virtual edges $\eps_1, \dots, \eps_\ell$.  If
it belongs to $\eps_1$ or $\eps_2$, which are contained in $\kappa$,
then $\pos_{C}(C')$ is not determined by~$\mu$.  If $C'$ belongs to
one of the virtual edges $\eps_3, \dots, \eps_\ell$, the relative
position $\pos_{C}(C')$ is determined by the relative position of this
virtual edge with respect to the cycle $\kappa$.  This relative
position can be chosen for every virtual edge $\eps_3, \dots,
\eps_\ell$ arbitrarily and independently.  Hence, if there are two
cycles $C_i$ and~$C_j$ belonging to different virtual edges in $\mu$, the
positions $\pos_{C}(C_i)$ and $\pos_{C}(C_j)$ can be chosen
independently.  Furthermore, if the two cycles $C_i$ and $C_j$ belong
to the same virtual edge $\eps \in \{\eps_3, \dots, \eps_\ell\}$,
their relative position with respect to $C$ is the same, that is
$\pos_{C}(C_i) = \pos_{C}(C_j)$, for every embedding of $G$.

Let $\mu$ be an {\bf R-node} in $\mathcal T$.  For the moment, we
consider that the embedding of $\skel(\mu)$ is fixed by choosing one
of the two orientations.  Let $C$ be a cycle inducing the cycle
$\kappa$ in $\skel(\mu)$.  Then the relative position $\pos_{C}(C')$
of a cycle $C' \ne C$ is determined by $\mu$ if and only if $C'$ is a
cycle in $\skel(\mu)$ or if it is contracted belonging to a virtual
edge not contained in $\kappa$.  Since we consider only one of the two
embeddings of $\skel(\mu)$ at the moment, $\pos_{C}(C')$ is fixed to
one of the two values ``left'' or ``right'' in this case.  The same
can be done for all other cycles that are cycles in $\skel(\mu)$
yielding a fixed value for all relative positions that are determined
by $\mu$.  Finally, we have a partition of all positions determined by
$\mu$ into the set of positions $\mathcal P_1 =
\{\pos_{C_{a(1)}}(C_{b(1)}), \dots, \pos_{C_{a(r)}}(C_{b(r)})\}$ all
having the value ``left'' and the set of positions $\mathcal P_2 =
\{\pos_{C_{c(1)}}(C_{d(1)}), \dots, \pos_{C_{c(s)}}(C_{d(s)})\}$
having the value ``right''.  Now if the embedding of $\skel(\mu)$ is
not fixed anymore, we have only the possibility to flip it.  By
flipping, all the positions in $\mathcal P_1$ change to ``right'' and
all positions in $\mathcal P_2$ change to ``left''.  Hence, we obtain
that the equation $\pos_{C_{a(1)}}(C_{b(1)}) = \dots =
\pos_{C_{a(r)}}(C_{b(r)}) \not= \pos_{C_{c(1)}}(C_{d(1)}) = \dots =
\pos_{C_{c(s)}}(C_{d(s)})$ is satisfied for every embedding of the
cycles $\mathcal C = \{C_1, \dots, C_k\}$ induced by an embedding of
$G$.

To sum up, we obtain a set of (in)equalities relating the relative
positions of cycles to one another.  We call these constraints the
\emph{PR-node constraints} with respect to the biconnected graph~$G$.
Obviously the PR-node constraints are necessary in the sense that
every embedding of $G$ induces an embedding of the cycles $\mathcal C
= \{C_1, \dots, C_k\}$ satisfying these constraints.  The following
lemma additionally states the sufficiency of the PR-node constraints.

\begin{lemma}
  \label{lem:PR-node-constraints}
  Let $G$ be a biconnected planar graph containing the disjoint cycles
  $\mathcal C = \{C_1, \dots, C_k\}$.  Let further $\mathcal
  E_{\mathcal C}$ be a semi-embedding of these cycles.  There is an
  embedding $\mathcal E$ of $G$ inducing $\mathcal E_{\mathcal C}$ if
  and only if $\mathcal E_{\mathcal C}$ satisfies the PR-node
  constraints.
\end{lemma}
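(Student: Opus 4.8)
The necessity of the PR-node constraints has already been argued in the discussion preceding the lemma, so the plan is to establish sufficiency: given a semi-embedding $\mathcal E_C$ satisfying all PR-node constraints, I would construct an embedding $\mathcal E$ of $G$ inducing it. The key structural fact I would rely on is that an embedding of $G$ is obtained by choosing an embedding for each skeleton of $\mathcal T$ independently, together with Lemma~\ref{lem:determining-pos-by-ex-one}, which guarantees that every relative position $\pos_{C_i}(C_j)$ is controlled by exactly one node $\mu$. Thus the set of all positions is partitioned according to the node that determines it, and I can fix each skeleton embedding separately so as to realize the values that $\mathcal E_C$ prescribes on its share.

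First I would handle the R-nodes. For a fixed orientation of $\skel(\mu)$ the positions determined by $\mu$ split into the block $\mathcal P_1$ taking value ``left'' and the block $\mathcal P_2$ taking value ``right'', and flipping the skeleton exchanges the two values. The R-node constraint states precisely that $\mathcal E_C$ is constant on $\mathcal P_1$, constant on $\mathcal P_2$, and differs between the two blocks; hence exactly one of the two orientations of $\skel(\mu)$ reproduces the values of $\mathcal E_C$, and I would choose that orientation (if $\mu$ determines no position, or only one block is nonempty, either admissible flip works). Next I would handle the P-nodes. Such a node contains a unique cycle $\kappa$ formed by two parallel virtual edges, and it determines exactly the positions $\pos_\kappa(C')$ for cycles $C'$ contracted in the remaining virtual edges $\eps_3, \dots, \eps_\ell$. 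The P-node constraint guarantees that $\mathcal E_C$ assigns the same value to all cycles sharing a virtual edge, so the prescribed side of each $\eps_i$ is well defined; since the parallel edges of a P-node may be permuted arbitrarily, I would place each $\eps_i$ on the side of $\kappa$ dictated by $\mathcal E_C$. The S- and Q-nodes have no embedding choice and determine no positions, so they need no attention, and combining all these skeleton embeddings yields a planar embedding $\mathcal E$ of $G$.

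Finally I would verify that $\mathcal E$ induces $\mathcal E_C$: given a position $\pos_{C_i}(C_j)$, Lemma~\ref{lem:determining-pos-by-ex-one} identifies the single node $\mu$ determining it, this $\mu$ must be a P- or an R-node, and by construction the embedding of $\skel(\mu)$ was chosen to reproduce the value $\mathcal E_C(\pos_{C_i}(C_j))$. The main point to get right --- and the place where the two PR-node constraints are genuinely used --- is the realizability at each node: the construction decouples cleanly only because each position is pinned to one node (uniqueness) and because the values $\mathcal E_C$ prescribes at that node are internally consistent with one of its admissible skeleton embeddings (the R-node chain of (in)equalities, respectively the P-node equalities). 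Once that is in place, the absence of any cross-node interference is immediate from Lemma~\ref{lem:determining-pos-by-ex-one}, and $\mathcal E$ induces exactly $\mathcal E_C$.
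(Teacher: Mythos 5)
Your proposal is correct and follows essentially the same approach as the paper: process the SPQR-tree node by node, use the P-node equalities and the R-node (in)equality chains to pick an admissible skeleton embedding realizing the values prescribed by $\mathcal E_C$, and invoke Lemma~\ref{lem:determining-pos-by-ex-one} to conclude that, since every relative position is determined by exactly one node, the combined embedding induces exactly $\mathcal E_C$. Your write-up is in fact somewhat more explicit than the paper's about why each skeleton choice exists (only the parenthetical claim that ``either flip works'' when a single R-node block is nonempty is slightly off --- the flip matching $\mathcal E_C$'s common value must be chosen --- but this does not affect the argument).
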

\begin{proof}
  The ``only if''-part of the proof is obvious, as mentioned above.
  It remains to show the ``if''-part.  Let $\mathcal E_{\mathcal C}$
  be a semi-embedding of $\mathcal C = \{C_1, \dots, C_k\}$ satisfying
  the PR-node constraints given by~$G$.  We show how to construct an
  embedding $\mathcal E$ of $G$ inducing the embedding $\mathcal
  E_{\mathcal C}$ on the cycles~$\mathcal C = \{C_1, \dots, C_k\}$.
  We simply process the nodes of the SPQR-tree one by one and choose
  an embedding for the skeleton of every node.  Let $\mu$ be a node in
  $\mathcal T$.  If $\mu$ is an S- or a Q-node, there is nothing to
  do, since there is no choice for the embedding of $\skel(\mu)$.  If
  $\mu$ is a P-node several relative positions may be determined by
  the embedding of $\skel(\mu)$.  However, these positions satisfy the
  PR-node constraints stemming from $\mu$, hence we can choose an
  embedding for $\skel(\mu)$ determining these positions as given
  by~$\mathcal E_{\mathcal C}$.  Obviously, the same holds for the
  case where $\mu$ is an R-node.  Hence, we finally obtain an
  embedding~$\mathcal E$ of $G$ determining the positions that are
  determined by a node in $\mathcal T$ as required by $\mathcal
  E_{\mathcal C}$.  Due to Lemma~\ref{lem:determining-pos-by-ex-one}
  every pair of relative positions is determined by exactly one node
  in $\mathcal T$, yielding that the resulting embedding $\mathcal E$
  induces $\mathcal E_{\mathcal C}$ on the cycles.  Note that this
  shows implicitly that $\mathcal E_{\mathcal C}$ is not only a
  semi-embedding but also an embedding.
\end{proof}

Now let $\1G$ and $\2G$ be two biconnected planar graphs with the
common graph $G$ consisting of pairwise disjoint simple cycles
$\mathcal C = \{C_1, \dots, C_k\}$.  If we find a semi-embedding
$\mathcal E$ of the cycles that satisfies the PR-node constraints with
respect to $\1G$ and $\2G$ simultaneously, we can use
Lemma~\ref{lem:PR-node-constraints} to find embeddings $\1{\mathcal
  E}$ and $\2{\mathcal E}$ for $\1G$ and $\2G$ both inducing the
embedding $\mathcal E$ on the common graph~$G$.  Thus, satisfying the
PR-node constraints with respect to both $\1G$ and $\2G$, is
sufficient to find a {\sc SEFE}.  Conversely, given a pair of
embeddings $\1{\mathcal E}$ and $\2{\mathcal E}$ inducing the same
embedding $\mathcal E$ on~$G$, this embedding $\mathcal E$ needs to
satisfy the PR-node constraints with respect to both, $\1G$ and $\2G$,
which is again due to Lemma~\ref{lem:PR-node-constraints}.  Since the
PR-node constraints form a set of boolean (in)equalities we can
express them as an instance of {\sc 2-Sat}.  As this instance has
polynomial size and can easily be computed in polynomial time, we
obtain the following theorem.

\begin{theorem} 
  \label{thm:sefe-biconnected-quadratic}
  {\sc Simultaneous Embedding with Fixed Edges} can be solved in
  quadratic time for biconnected graphs whose common graph is a set of
  disjoint cycles.
\end{theorem}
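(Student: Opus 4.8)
The plan is to reformulate the existence of a common semi-embedding as a satisfiability question and to decide it with {\sc 2-Sat}. Concretely, I would introduce one boolean variable $x_{ij}$ for each ordered pair of distinct cycles, encoding the relative position $\pos_{C_i}(C_j)$ (say $x_{ij} = \text{true}$ for ``left'' and $x_{ij} = \text{false}$ for ``right''). Since there are $k(k-1)$ ordered pairs, this gives $O(k^2)$ variables, and every semi-embedding of $C_1,\dots,C_k$ corresponds to an assignment of these variables and vice versa.

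Next I would translate the PR-node constraints of both $\1G$ and $\2G$ into clauses over these variables. Each PR-node constraint is either an equality $x = y$ or an inequality $x \neq y$ between two position variables, and each is expressible by two $2$-clauses ($x = y$ by $(x \lor \lnot y)\land(\lnot x \lor y)$ and $x \neq y$ by $(x \lor y)\land(\lnot x \lor \lnot y)$). The chains of equalities arising at P- and R-nodes (all positions in one group equal, the two R-node groups opposite) can be encoded by linking consecutive variables, so that the number of clauses stays linear in the number of participating positions. By Lemma~\ref{lem:determining-pos-by-ex-one} every ordered pair of cycles is determined by exactly one node, so summing over all nodes the total number of involved positions is $\Theta(k^2)$, and hence the whole clause set has size $O(k^2)$. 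I would compute the SPQR-tree of each graph in linear time and extract the constraints by locating, for each relevant pair of cycles, its unique determining node; this can certainly be carried out in time polynomial in the input and the $O(k^2)$ positions.

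For correctness I would argue both directions. An assignment satisfying the entire clause set corresponds to a semi-embedding $\mathcal E$ that simultaneously satisfies the PR-node constraints of $\1G$ and of $\2G$; applying Lemma~\ref{lem:PR-node-constraints} once to each graph yields embeddings $\1{\mathcal E}$ and $\2{\mathcal E}$ each inducing $\mathcal E$ on the common cycles, which (via the J\"unger--Schulz characterization) is exactly a {\sc SEFE}; note that Lemma~\ref{lem:PR-node-constraints} also guarantees that the satisfying $\mathcal E$ is a genuine embedding and not merely a semi-embedding. Conversely, any {\sc SEFE} induces a single embedding on $G$ whose relative positions satisfy the PR-node constraints of both graphs, hence yields a satisfying assignment. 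Thus the {\sc 2-Sat} instance is satisfiable if and only if the pair admits a {\sc SEFE}.

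Finally, for the running time: building both SPQR-trees is linear, assembling the $O(k^2)$ variables and $O(k^2)$ clauses is done within a quadratic budget, and {\sc 2-Sat} is decidable in time linear in the formula size, so the overall procedure runs in quadratic time. The main obstacle, I expect, is not the correctness (which is essentially packaged by Lemmas~\ref{lem:determining-pos-by-ex-one} and~\ref{lem:PR-node-constraints}) but controlling the size of the formula: one must encode each ``all equal'' group by a chain rather than by all pairwise equalities, and verify that the determining nodes and the resulting (in)equalities can be read off the two SPQR-trees within the quadratic budget, so that the dominant $\Theta(k^2)$ term stems solely from the number of relative positions.
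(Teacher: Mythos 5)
Your reduction is essentially the paper's: one boolean variable per relative position $\pos_{C_i}(C_j)$, the PR-node constraints of both graphs written as a {\sc 2-Sat} instance, and correctness in both directions packaged by Lemmas~\ref{lem:determining-pos-by-ex-one} and~\ref{lem:PR-node-constraints} (the paper, too, treats this part as essentially done before the theorem is even stated). Your chain encoding also matches the form in which the paper states the constraints -- chains of equalities with a single inequality between the two groups at an R-node -- so the formula size is indeed quadratic. The genuine gap is exactly the point you flag at the end and then leave unresolved: extracting the constraints from the two SPQR-trees within a quadratic budget. Your sketched method, ``locate, for each relevant pair of cycles, its unique determining node,'' is a per-pair computation, and the only per-pair procedure available at this point in the paper is the path-walking argument inside the proof of Lemma~\ref{lem:determining-pos-by-ex-one}, which follows a path of length up to $\Theta(n)$ in the SPQR-tree. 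With $\Theta(k^2)$ pairs, and $k$ possibly $\Theta(n)$, that is a cubic bound, not a quadratic one; conceding that the extraction is merely ``polynomial in the input'' concedes precisely the content of the theorem that goes beyond the two lemmas.

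The paper closes this gap by processing per node rather than per pair. For each node $\mu$ of the SPQR-tree it computes, for every virtual edge $\eps$ of $\skel(\mu)$, the list of cycles having edges in $\expan(\eps)$ by traversing the leaves of the corresponding subtree, in $\mathcal O(n)$ time per virtual edge; from these lists one reads off which cycles are cycles in $\skel(\mu)$, and for each such cycle $C$ all constraints on the positions $\pos_C(\cdot)$ determined by $\mu$ are then computed in $\mathcal O(n)$ time. Since only $\mathcal O(|\skel(\mu)|)$ cycles can be cycles in $\skel(\mu)$, node $\mu$ costs $\mathcal O(n \cdot |\skel(\mu)|)$, and because the total size of all skeletons is linear, the overall cost is $\mathcal O(n^2)$. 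Some argument of this kind (or the more elaborate LCA/union-find machinery the paper later develops for the linear-time algorithm) is needed to turn your proposal into a complete proof of the quadratic bound; everything else in your write-up is sound and coincides with the paper's route.
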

\begin{proof}
  It remains to show that the PR-node constraints can be computed in
  quadratic time, yielding an instance of {\sc 2-Sat} with quadratic
  size.  As this {\sc 2-Sat} instance can be solved consuming time
  linear in its size~\cite{ComplexityofTimetable-Even.etal(76),
    linear-timealgorithmtesting-Aspvall.etal(79)}, we obtain a
  quadratic-time algorithm.

  We show how to process each node $\mu$ of the SPQR-tree in $\mathcal
  O(n \cdot |\skel(\mu)|)$ time, computing the PR-node constraints
  stemming from $\mu$.  For each virtual edge $\eps$ we compute a list
  of cycles in $\mathcal C$ that contain edges in the expansion graph
  $\expan(\eps)$ by traversing all leaves in the corresponding
  subtree, consuming $\mathcal O(n)$ time for each virtual edge.  Then
  the list of cycles that occur as cycles in $\skel(\mu)$ can be
  computed in linear time.  For each of these cycles $C$ all
  constraints on relative positions with respect to $C$ determined by
  $\mu$ can be easily computed in $\mathcal O(n)$ time.  As only
  $\mathcal O(|\skel(\mu)|)$ cycles can be contained as cycles in
  $\skel(\mu)$, this yields the claimed $\mathcal O(n \cdot
  |\skel(\mu)|)$ time for each skeleton.  Since the total size of the
  skeletons is linear in the size of the graph, this yields an overall
  $\mathcal O(n^2)$-time algorithm.
\end{proof}

\subsubsection*{Allowing Cutvertices}
\label{sec:cutvertices}

In this section we consider the case where the graphs may contain
cutvertices.  As before, we consider a single graph $G$ containing a
set of disjoint cycles $\mathcal C = \{C_1, \dots, C_k\}$ first.  Let
$C \in \mathcal C$ be one of the cycles and let $v$ be a cutvertex
contained in the same block $B$ that contains~$C$.  The cutvertex $v$
splits $G$ into $\ell$ cut components $H_1, \dots, H_\ell$.  Assume
without loss of generality that $B$ (and with it also $C$) is
contained in $H_1$.  We distinguish between the cases that $v$ is
contained in $C$ and that it is not.

If {\bf $\boldsymbol{v}$ is not contained in $\boldsymbol{C}$}, then
the relative position $\pos_{C}(v)$ is determined by the embedding of
the block $B$ and it follows that all the subgraphs $H_2, \dots,
H_\ell$ lie on the same side of $C$ as $v$ does.  It follows from the
biconnected case (Lemma~\ref{lem:determining-pos-by-ex-one}) that
$\pos_{C}(v)$ is determined by the embedding of the skeleton of
exactly one node $\mu$ in the SPQR-tree of $B$.  Obviously, the
conditions that all cycles in $H_2, \dots, H_\ell$ are on the same
side of $C$ as $v$ can be easily added to the PR-node constraints
stemming from the node $\mu$; call the resulting constraints the
\emph{extended PR-node constraints}.  These constraints are clearly
necessary.  On the other hand, if $\mathcal E_{\mathcal C}$ is a
semi-embedding of the cycles satisfying the extended PR-node
constraints, we can find an embedding $\mathcal E_B$ of the block $B$
such that all relative positions of cycles that are determined by
single nodes in the SPQR-tree of $B$ are compatible with~$\mathcal
E_{\mathcal C}$.

If {\bf $\boldsymbol{v}$ is contained in $\boldsymbol{C}$}, the
relative position $\pos_{C}(v)$ does not exist.  Assume the embedding
of each block is already chosen.  Then for each subgraph $H \in \{H_2,
\dots, H_\ell\}$, the positions $\pos_{C}(H)$ can be chosen
arbitrarily and independently.  In this case we say for a cycle $C'$
in $H$ that its relative position $\pos_{C}(C')$ is \emph{determined
  by the embedding chosen for the cutvertex $v$}.  Obviously, in every
embedding of $G$, a pair of cycles $C_i$ and $C_j$ both belonging to
the same subgraph $H \in \{H_2, \dots, H_\ell\}$ lie on the same side
of $C$ yielding the equation $\pos_{C}(C_i) = \pos_{C}(C_j)$.  This
equation can be set up for every pair of cycles in each of the
subgraphs, yielding the \emph{cutvertex constraints} with respect
to~$v$.  Again we have that, given a semi-embedding $\mathcal
E_{\mathcal C}$ of the cycles satisfying the cutvertex constraints
with respect to $v$, we can simply choose an embedding of the graph
such that the relative positions determined by the embedding around
the cutvertex are compatible with~$\mathcal E_{\mathcal C}$.

To sum up, a semi-embedding $\mathcal E_{\mathcal C}$ on the cycles
$\mathcal C = \{C_1, \dots, C_k\}$ that is induced by an
embedding~$\mathcal E$ of the whole graph always satisfies the
extended PR-node and cutvertex constraints.  Moreover, given a
semi-embedding $\mathcal E_{\mathcal C}$ satisfying these constraints,
we can find an embedding $\mathcal E$ of $G$ inducing compatible
relative positions for each relative position that is determined by a
single node in the SPQR-tree of a block or by a cutvertex.  Obviously,
the relative position of every pair of cycles is determined by such a
node or a cutvertex.  Thus the extended PR-node and cutvertex
constraints together are sufficient, that is, given a semi-embedding
of the cycles satisfying these constraints, we can find an embedding
of $G$ inducing this semi-embedding.  This shows implicitly that the
given semi-embedding is an embedding.  This result is stated again in
the following lemma.

\begin{lemma}
  \label{lem:ext-PR-node-cutvert-constraints}
  Let $G$ be a connected planar graph containing the disjoint cycles
  $\mathcal C = \{C_1, \dots, C_k \}$.  Let further~$\mathcal E_{\mathcal
    C}$ be a semi-embedding of these cycles.  There is an embedding
  $\mathcal E$ of $G$ inducing~$\mathcal E_{\mathcal C}$ if and only
  if $\mathcal E_{\mathcal C}$ satisfies the extended PR-node and
  cutvertex constraints with respect to~$G$.
\end{lemma}

This result again directly yields a polynomial-time algorithm to solve
{\sc SEFE} for the case that both graphs $\1G$ and $\2G$ are connected
and their common graph $G$ consists of a set of disjoint cycles.
Moreover, requiring both graphs to be connected is not really a
restriction due to Theorem~\ref{thm:connecting-graphs}.  The extended
PR-node and cutvertex constraints can be computed similarly as in the
proof of Theorem~\ref{thm:sefe-biconnected-quadratic}, yielding the
following theorem.

\begin{theorem}
  {\sc Simultaneous Embedding with Fixed Edges} can be solved in
  quadratic time if the common graph consists of disjoint cycles.
\end{theorem}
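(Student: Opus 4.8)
The plan is to reduce the problem to a single {\sc 2-Sat} instance whose variables are the relative positions of the cycles, mirroring the argument used for the biconnected case in Theorem~\ref{thm:sefe-biconnected-quadratic} but now built on Lemma~\ref{lem:ext-PR-node-cutvert-constraints}. First I would invoke Theorem~\ref{thm:connecting-graphs} to replace the given instance $(\1G, \2G)$ by an equivalent one in which both $\1G$ and $\2G$ are connected; this costs only linear time and does not alter the common graph, so its intersection is still the set of disjoint cycles $C_1, \dots, C_k$. From now on both graphs may be assumed connected, which is exactly the hypothesis of Lemma~\ref{lem:ext-PR-node-cutvert-constraints}.

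Next I introduce one boolean variable for each relative position $\pos_{C_i}(C_j)$ (encoding ``left''\slash``right'') and compute the extended PR-node and cutvertex constraints separately for $\1G$ and for $\2G$. By the known characterization that a {\sc SEFE} is equivalent to finding planar embeddings of $\1G$ and $\2G$ that induce the same embedding on the common graph~\cite{IntersectionGraphsin-Juenger.Schulz(09)}, a {\sc SEFE} exists exactly when there is a single semi-embedding $\mathcal E_C$ of the cycles that is induced by embeddings of \emph{both} graphs. Applying Lemma~\ref{lem:ext-PR-node-cutvert-constraints} once to $\1G$ and once to $\2G$, such a common $\mathcal E_C$ is realizable if and only if it satisfies the extended PR-node and cutvertex constraints of $\1G$ together with those of $\2G$. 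Since all these constraints are boolean (in)equalities over the shared position variables, their conjunction is precisely a {\sc 2-Sat} instance that is satisfiable if and only if the original instance admits a {\sc SEFE}; moreover any satisfying assignment yields, via the lemma, the two required embeddings, so the produced semi-embedding is automatically a genuine embedding.

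For the running time I would compute the constraints block by block. Within each block the extended PR-node constraints are obtained exactly as in the proof of Theorem~\ref{thm:sefe-biconnected-quadratic}, processing each SPQR-tree node $\mu$ in $\mathcal O(n \cdot |\skel(\mu)|)$ time; summing over all blocks keeps this at $\mathcal O(n^2)$, since the total skeleton size is linear in the size of the graph. The additional cutvertex constraints are set up by, for each cutvertex $v$ lying on some cycle $C$, grouping the cycles in each component hanging off $v$ and asserting equality of their positions with respect to $C$, which is again easily done in $\mathcal O(n^2)$ total. The resulting {\sc 2-Sat} formula has $\mathcal O(n^2)$ size and is solved in time linear in its size~\cite{ComplexityofTimetable-Even.etal(76), linear-timealgorithmtesting-Aspvall.etal(79)}.

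The main obstacle here is bookkeeping rather than conceptual. One must ensure that the position variable for each pair $\pos_{C_i}(C_j)$ is identified consistently across different blocks, across the cutvertices, and between the two graphs, so that a single truth assignment simultaneously realizes an embedding of $\1G$ and of $\2G$ inducing the same $\mathcal E_C$. The correctness of this identification is exactly what Lemma~\ref{lem:ext-PR-node-cutvert-constraints} guarantees: because every relative position of a pair of cycles is determined either by a single SPQR-tree node of some block or by a single cutvertex, the constraints collected for one graph are both necessary and sufficient for realizability, and hence the intersection of the two constraint systems faithfully captures exactly the simultaneous embeddings.
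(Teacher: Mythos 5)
Your proposal follows the paper's own proof essentially verbatim: invoke Theorem~\ref{thm:connecting-graphs} to assume both graphs connected, use Lemma~\ref{lem:ext-PR-node-cutvert-constraints} to characterize the realizable semi-embeddings of the cycles for each graph, conjoin the extended PR-node and cutvertex constraints of $\1G$ and $\2G$ into one quadratic-size {\sc 2-Sat} instance over the shared position variables (computed block by block as in Theorem~\ref{thm:sefe-biconnected-quadratic}), and solve it in time linear in its size. The only inaccuracy---one the paper's own terse argument also glosses over---is your claim that the augmentation ``does not alter the common graph'': the construction of Theorem~\ref{thm:connecting-graphs} adds new common vertices and common edges, so after the reduction the common graph is no longer literally a set of disjoint cycles and a small additional argument (or a variant of the reduction) is needed; this does not, however, distinguish your route from the paper's.
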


Note that we really need to use Theorem~\ref{thm:connecting-graphs} to
ensure that the graphs are connected since our approach does not
extend to the case where the graphs are allowed to be disconnected.
In this case it would still be easy to formulate necessary conditions
in terms of boolean equations.  However, these conditions would only
be sufficient if it is additionally ensured that the given
semi-embedding actually is an embedding.  The reason why this is not
directly ensured by the embedding of the graph (as it is in the
connected case) is that the relative position of cycles to one another
is not determined by exactly one choice that can be made independently
from the other choices; see Figure~\ref{fig:disonnected-graphs}.

\begin{figure}
  \centering
  \includegraphics{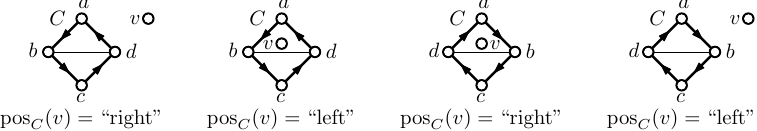}
  \caption{One component containing $C$ (bold) and another consisting
    only of the vertex $v$.  Changing the face in which $v$ lies may
    change the relative position $\pos_C(v)$.  Moreover, changing the
    embedding of the component containing $C$ (in this case flipping
    it) also changes $\pos_C(v)$.}
  \label{fig:disonnected-graphs}
\end{figure}

\subsection{A Compact Representation of all Simultaneous Embeddings}
\label{sec:compact-rep}

In the previous section we showed that {\sc SEFE} can be solved in
polynomial time for the case that the common graph consists of
disjoint cycles.  In this section we describe a data structure, the
\emph{CC-tree}, representing all embeddings of a set of disjoint cycles that
can be induced by an embedding of a connected graph containing them.
Afterwards, we show that the intersection of the sets of embeddings
represented by two CC-trees can again be represented by a CC-tree.  In
Section~\ref{sec:line-time-algor} we then show that the CC-tree and
the intersection of two CC-trees can be computed in linear time,
yielding an optimal linear-time algorithm for {\sc SEFE} for the case
that the common graph consists of disjoint cycles.  Note that this
algorithm obviously extends to the case where $k$ graphs $\1G, \dots,
\k G$ are given such that they all intersect in the same common graph
$G$ consisting of a set of disjoint cycles.

\subsubsection*{C-Trees and CC-Trees}
\label{sec:cc-trees}

Let $\mathcal C = \{C_1, \dots, C_k\}$ be a set of disjoint cycles.  A
\emph{cycle-tree (C-tree)} $\mathcal T_{\mathcal C}$ on these cycles
is a minimal connected graph containing $\mathcal C$; see
Figure~\ref{fig:cc-tree}.  Obviously, every embedding of $\mathcal
T_{\mathcal C}$ induces an embedding of the cycles.  We say that two
embeddings of $\mathcal T_{\mathcal C}$ are equivalent if they induce
the same embedding of $\mathcal C$ and we are only interested in the
equivalence classes with respect to this equivalence relation.  An
embedding $\mathcal E$ of the cycles in $\mathcal C$ is
\emph{represented} by $\mathcal T_{\mathcal C}$ if it admits an
embedding inducing $\mathcal E$.  Note that contracting each of the
cycles $\mathcal C = \{C_1, \dots, C_k\}$ in a C-tree to a single
vertex yields a spanning tree on these vertices.  In most cases we
implicitly assume the cycles to be contracted such that $\mathcal
T_{\mathcal C}$ can be treated like a tree.

\begin{figure}
  \centering
  \includegraphics[page=1]{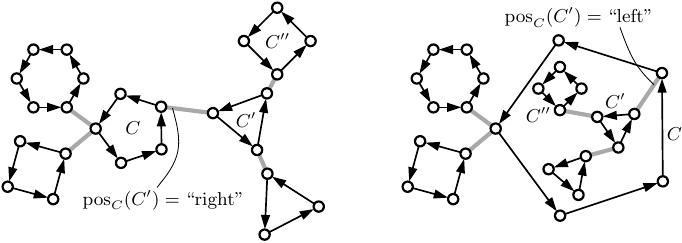}
  \caption{Two embeddings of the same CC-tree.  The only difference
    between the embeddings is that different values are chosen for the
    crucial relative position $\pos_C(C')$.  Note that the tree
    structure enforces the (non-crucial) relative position
    $\pos_C(C'')$ to be equal to $\pos_C(C')$.}
  \label{fig:cc-tree}
\end{figure}

The embedding choices that can be made for $\mathcal T_{\mathcal C}$
are of the following kind.  For every edge $e = \{C, C'\}$ in
$\mathcal T_{\mathcal C}$, we can decide to put all cycles in the
subtree attached to $C$ via~$e$ either to the left or to the right of
$C$.  In particular, we can assign a value ``left'' or ``right'' to
the relative position $\pos_C(C')$.  Moreover, by fixing the relative
positions $\pos_C(C')$ and $\pos_{C'}(C)$ for every pair of cycles $C$
and $C'$ that are adjacent in $\mathcal T_{\mathcal C}$, the embedding
represented by $\mathcal T_{\mathcal C}$ is completely determined.
Thus, given a C-tree $\mathcal T_{\mathcal C}$, we call the relative
positions $\pos_C(C')$ and $\pos_{C'}(C)$ with $C, C' \in \mathcal C$
\emph{crucial} if $C$ and~$C'$ are adjacent in $\mathcal T_{\mathcal
  C}$; see Figure~\ref{fig:cc-tree}.  We note that, when determining
an embedding of~$\mathcal T_{\mathcal C}$, the crucial relative
positions can be chosen independently from one another.

Since the crucial relative positions with respect to a C-tree
$\mathcal T_{\mathcal C}$ are binary variables, we can use
(in)equalities between them to further constrain the embeddings
represented by $\mathcal T_{\mathcal C}$.  We call a C-tree with such
additional constraints on its crucial relative positions a
\emph{constrained cycle-tree (CC-tree)} on the set of cycles $\mathcal
C$.  In this way, there is a bijection between the embeddings
of~$\mathcal C$ represented by a CC-tree and the solutions of an
instance of {\sc 2-Sat} given by the constraints on the crucial
relative positions of~$\mathcal T_{\mathcal C}$.  We essentially prove
two things.  First, for every connected graph $G$ containing the
cycles $\mathcal C$, there exists a CC-tree representing exactly the
embeddings of $\mathcal C$ that can be induced by embeddings of $G$.
Essentially, we have to restrict the extended PR-node and cutvertex
constraints to the crucial relative positions of a C-tree compatible
with $G$.  Second, for a pair of CC-trees~$\1{\mathcal T_{\mathcal
    C}}$ and $\2{\mathcal T_{\mathcal C}}$ on the same set~$\mathcal
C$ of cycles, there exists a CC-tree $\mathcal T_{\mathcal C}$
representing exactly the embeddings of $\mathcal C$ that are
represented by $\1{\mathcal T_{\mathcal C}}$ and $\2{\mathcal
  T_{\mathcal C}}$.

Let~$G$ be a connected planar graph containing a set~$\mathcal C$ of
disjoint cycles.  We say that a C-tree $\mathcal T_{\mathcal C}$ is
\emph{compatible} with $G$ if it is a minor of $G$, that is if it can
be obtained by contracting edges in a subgraph of $G$.  The
corresponding \emph{compatible CC-tree} is obtained from~$\mathcal
T_{\mathcal C}$ by adding the subset of the extended PR-node and
cutvertex constraints that only involve crucial relative positions
of~$\mathcal T_{\mathcal C}$.  Note that there may be many compatible
CC-trees for a single graph $G$.  However, in the following we
arbitrarily fix one of them and speak about \emph{the} CC-tree of~$G$.

\begin{theorem}
  \label{thm:cycle-tree-is-rep}
  Let $G$ be a connected planar graph containing the disjoint cycles
  $\mathcal C = \{C_1,\dots,C_k\}$.  The CC-tree $\mathcal T_{\mathcal C}$ of $G$
  represents exactly the embeddings of $\mathcal C$ that can be
  induced by an embedding of $G$.
\end{theorem}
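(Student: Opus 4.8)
The plan is to prove Theorem~\ref{thm:cycle-tree-is-rep} by establishing the two inclusions between the set of embeddings of $\mathcal C$ representable by the CC-tree $\mathcal T_C$ of $G$ and the set of embeddings of $\mathcal C$ inducible by an embedding of $G$. The key conceptual bridge is Lemma~\ref{lem:ext-PR-node-cutvert-constraints}, which characterizes the latter set exactly as the semi-embeddings satisfying the extended PR-node and cutvertex constraints. So the whole argument reduces to showing that these two descriptions of the same set of semi-embeddings coincide: the CC-tree constraints on the crucial relative positions, and the full extended PR-node and cutvertex constraints on \emph{all} relative positions.

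First I would make precise the relationship between a general relative position $\pos_{C_i}(C_j)$ and the crucial relative positions of the C-tree. The C-tree $\mathcal T_C$, viewed as a tree on the contracted cycles, has the property that for any ordered pair $(C_i, C_j)$ the value $\pos_{C_i}(C_j)$ in any embedding is governed by the unique tree-edge incident to $C_i$ that lies on the $C_i$--$C_j$ path in $\mathcal T_C$. Concretely, if that first edge is $\{C_i, C'\}$, then $C_j$ lies in the subtree hanging off $C_i$ through $C'$, so $\pos_{C_i}(C_j) = \pos_{C_i}(C')$, which is a crucial relative position. This gives a canonical way to \emph{express every} relative position as one of the crucial ones, and hence to translate any constraint among arbitrary relative positions (as produced by the extended PR-node and cutvertex analysis) into an equivalent constraint among crucial relative positions. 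The definition of compatibility of the CC-tree with $G$ is precisely that the crucial positions satisfy this restricted system, so this translation is exactly what the CC-tree was built to encode.

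With that dictionary in place, the two inclusions follow. For ``$\subseteq$'': an embedding $\mathcal E$ represented by $\mathcal T_C$ is a choice of crucial positions satisfying the CC-tree constraints; since these constraints are the extended PR-node and cutvertex constraints rewritten in crucial variables, $\mathcal E$ satisfies the original constraints, and Lemma~\ref{lem:ext-PR-node-cutvert-constraints} then yields an embedding of $G$ inducing $\mathcal E$. For ``$\supseteq$'': any embedding of $G$ induces a semi-embedding of $\mathcal C$ that, again by Lemma~\ref{lem:ext-PR-node-cutvert-constraints}, satisfies the extended PR-node and cutvertex constraints; restricting to the crucial positions shows it satisfies the CC-tree constraints, and since fixing the crucial positions determines the embedding of $\mathcal T_C$ (as noted in the definition of crucial relative positions), this semi-embedding is represented by the CC-tree.

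The main obstacle I anticipate is the faithfulness of the reduction from all relative positions to crucial ones -- namely, verifying that a non-crucial position $\pos_{C_i}(C_j)$ is genuinely a \emph{function} of a single crucial position and not independently choosable, and that this holds consistently across all embeddings inducible by $G$. This requires the observation that $\mathcal T_C$ is a minor of $G$ so the tree structure is respected by every embedding of $G$, together with a careful check that the extended PR-node and cutvertex constraints, once rewritten, do not collapse or overconstrain the crucial variables in a way that loses embeddings. I would handle this by arguing that the map from crucial assignments to semi-embeddings is a bijection onto the representable embeddings (as already asserted for C-trees), so rewriting the constraints in crucial variables neither adds nor removes solutions; the correspondence between the two constraint systems is then an equivalence of $2$-SAT instances over a change of variables rather than a genuine strengthening.
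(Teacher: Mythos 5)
Your overall framework is the same as the paper's: both directions go through Lemma~\ref{lem:ext-PR-node-cutvert-constraints}, the ``$\supseteq$'' inclusion follows because $\mathcal T_C$ is a minor of $G$ and the restricted constraints are a subset of the full ones, and your dictionary $\pos_{C_i}(C_j) = \pos_{C_i}(C')$, with $C'$ the first cycle on the tree path from $C_i$ to $C_j$, is exactly the paper's tool. The gap is in your ``$\subseteq$'' direction, where you assert that the CC-tree constraints \emph{are} ``the extended PR-node and cutvertex constraints rewritten in crucial variables,'' so that the implication holds by a mere change of variables. That mischaracterizes the definition: the CC-tree carries the extended PR-node and cutvertex constraints \emph{restricted to} the crucial positions, i.e., only those original constraints that happen to involve exclusively crucial positions. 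Whether this restricted subset, combined with the tree structure, implies the full system is precisely what must be proved; it is not a formal equivalence of two {\sc 2-Sat} instances under a substitution, and your appeal to the bijection between crucial assignments and C-tree embeddings does not address it --- that bijection concerns the C-tree alone, not the relationship between the two constraint systems, so invoking it begs the question.

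To see that the missing step has real content, take a P-node $\mu$ whose skeleton contains $C$ as a cycle $\kappa$, and two cycles $C_i, C_j$ contracted in the same virtual edge $\eps$ not contained in $\kappa$; the full system demands $\pos_C(C_i) = \pos_C(C_j)$, which your dictionary rewrites to $\pos_C(C_i') = \pos_C(C_j')$. This rewritten equation lies in the restricted system only if $C_i'$ and $C_j'$ are themselves contracted in $\eps$ (or attached to the block through cutvertices lying in $\expan(\eps)$); if they sat in different virtual edges of $\mu$, the original system would impose nothing between them, and the restricted system would admit assignments whose tree-extension violates the full constraint. The paper closes exactly this hole with a structural argument: any path in $G$ from $\expan(\eps)$ to the expansion graph of another virtual edge must pass through a pole of $\skel(\mu)$, hence through a vertex of $C$, so $C_i'$ and $C_j'$ (or the relevant cutvertices) do lie in $\expan(\eps)$, and the needed equation is itself a constraint of the restricted system. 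Analogous case analyses are required for R-nodes (same side versus different sides of $\kappa$) and for the cutvertex constraints. This case analysis is the actual mathematical content of the theorem; without it, or a substitute for it, your proof is incomplete.
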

\begin{proof}
  Let $\mathcal E$ be an embedding of $G$ and let $\mathcal
  E_{\mathcal C}$ be the embedding induced on the cycles $\mathcal C =
  \{C_1, \dots, C_k\}$.  Obviously, the CC-tree $\mathcal T_{\mathcal
    C}$ can be obtained from $G$ by contracting the cycles $\mathcal
  C$ to single vertices, choosing a spanning tree, expanding the
  cycles and contracting edges incident to non-cycle vertices.  Since
  we essentially only pick a subgraph of $G$ containing all cycles
  $\mathcal C = \{C_1, \dots, C_k\}$ and contract edges, the embedding
  $\mathcal E_{\mathcal C}$ is preserved.  Moreover, by
  Lemma~\ref{lem:ext-PR-node-cutvert-constraints}, it satisfies the
  extended PR-node and cutvertex constraints since it is induced by
  the embedding $\mathcal E$ of $G$.  Hence, $\mathcal E_{\mathcal C}$
  is represented by the CC-tree~$\mathcal T_{\mathcal C}$.

  Conversely, let $\mathcal E_{\mathcal C}$ be an embedding on the
  cycles represented by the CC-tree $\mathcal T_{\mathcal C}$.  By
  definition, the extended PR-node and cutvertex constraints are
  satisfied for the crucial relative positions.  We show that the
  tree-like structure of $\mathcal T_{\mathcal C}$ ensures that they
  are also satisfied for the remaining relative positions, yielding
  that an embedding $\mathcal E$ of $G$ inducing $\mathcal E_{\mathcal
    C}$ exists due to Lemma~\ref{lem:ext-PR-node-cutvert-constraints}.
  We start with the PR-node constraints.  Let $B$ be a block of $G$
  with SPQR-tree $\mathcal T(B)$.  In a P-node $\mu$ containing a
  cycle $C$ as cycle $\kappa$ every other cycle in $B$ is contracted,
  belonging to a single virtual edge.  Let~$C_i$ and $C_j$ be two
  cycles in $B$ belonging to the same virtual edge $\eps$ not
  contained in $\kappa$.  In this case the PR-node constraints
  stemming from $\mu$ require $\pos_C(C_i) = \pos_C(C_j)$, and we show
  that this equation is implied if the extended PR-node constraints
  are satisfied for the crucial relative positions.  Let~$C_i'$ and
  $C_j'$ be the first cycles on the paths from $C$ to $C_i$ and $C_j$
  in $\mathcal T_{\mathcal C}$, respectively.  Note that~$C_i'$
  and~$C_j'$ are not necessarily contained in the block $B$.  However,
  we first consider the case where both are contained in $B$.  Then
  $C_i'$ and $C_j'$ are both contracted in the same virtual
  edge~$\eps$ as $C_i$ and $C_j$ since a path from a cycle belonging
  to $\eps$ to a cycle belonging to a different virtual edge would
  necessarily contain a pole of $\skel(\mu)$ and thus a vertex in $C$.
  Thus, the PR-node constraints restricted to the crucial relative
  positions enforce $\pos_{C}(C_i') = \pos_{C}(C_j')$.  Furthermore,
  the tree structure of $\mathcal T_{\mathcal C}$ enforces
  $\pos_{C}(C_i) = \pos_{C}(C_i')$ and $\pos_{C}(C_j) =
  \pos_{C}(C_j')$.  Hence, in this case the PR-node constraints
  stemming from $\mu$ are implied by their restriction to the crucial
  relative positions.  For the case that $C_i'$ or $C_j'$ are contained
  in a different block, they are connected to $B$ via
  cutvertices~$v_i$ or $v_j$, which must belong to $\expan(\eps)$ by the
  same argument as above, namely that every path from $C_i$ or $C_j$ to a
  vertex that is contained in the expansion graph of another virtual
  edge needs to contain one of the poles.  Thus, the extended PR-node
  constraints enforce $\pos_{C}(C_i') = \pos_{C}(C_j')$ yielding the
  same situation as above.  In total, the extended PR-node constraints
  stemming from a P-node $\mu$ restricted to the crucial relative
  positions enforce that the PR-node constraints stemming from $\mu$
  are satisfied for all relative positions.

  For the case that $\mu$ is an R-node a similar argument holds.  If
  $C$ is a cycle $\kappa$ in $\skel(\mu)$ and two cycles $C_i$ and
  $C_j$ lie contracted or as cycles on the same side (on different
  sides) of $\kappa$, then the first cycles $C_i'$ and $C_j'$ on the
  path from $C$ to $C_i$ and $C_j$ in the CC-tree $\mathcal
  T_{\mathcal C}$ lie on the same side (on different sides) of
  $\kappa$ or the cutvertices connecting $C_i'$ and $C_j'$ to the
  block $B$ lie on the same side (on different sides) of $\kappa$.
  Thus, the extended PR-node constraints restricted to the crucial
  relative positions enforce $\pos_C(C_i') = \pos_C(C_j')$
  ($\pos_C(C_i') \not= \pos_C(C_j')$) and the tree structure of
  $\mathcal T_{\mathcal C}$ yields $\pos_C(C_i) = \pos_C(C_i')$ and
  $\pos_C(C_j) = \pos_C(C_j')$.  Obviously, these arguments extend to
  the case of extended PR-node constraints since a cutvertex not
  contained in $C$ can be treated like a disjoint cycle.

  It remains to deal with the cutvertex constraints stemming from the
  case where $C$ is a cycle containing a cutvertex $v$ splitting $G$
  into the cut components $H_1, \dots, H_\ell$.  Let without loss of
  generality $H_1$ be the subgraph containing $C$.  The cutvertex
  constraints ensure that a pair of cycles $C_i$ and $C_j$ belonging
  to the same subgraph $H \in \{H_2, \dots, H_\ell\}$ are located on
  the same side of $C$.  Let $C_i'$ and~$C_j'$ be the first cycles on
  the path from $C$ to $C_i$ and~$C_j$ in the CC-tree $\mathcal
  T_{\mathcal C}$, respectively.  Obviously~$C_i'$ and~$C_j'$ belong
  to the same subgraph $H$ and hence the cutvertex constraints
  restricted to the crucial relative positions enforce $\pos_C(C_i') =
  \pos_C(C_j')$.  Moreover, the tree structure of $\mathcal
  T_{\mathcal C}$ again ensures that the equations $\pos_C(C_i) =
  \pos_C(C_i')$ and $\pos_C(C_j) = \pos_C(C_j')$ hold, which concludes
  the proof.
\end{proof}

\subsubsection*{Intersecting CC-Trees}
\label{sec:inters-cc-trees}

In this section we consider two CC-trees $\1{\mathcal T_{\mathcal C}}$
and $\2{\mathcal T_{\mathcal C}}$ on the same set of cycles $\mathcal
C$.  We show that the set of embeddings that are represented by both
$\1{\mathcal T_{\mathcal C}}$ and $\2{\mathcal T_{\mathcal C}}$ can
again be represented by a single CC-tree.  We will show this by
constructing a new CC-tree, which we call the \emph{intersection}
of~$\1{\mathcal T_{\mathcal C}}$ and $\2{\mathcal T_{\mathcal C}}$,
showing afterwards that this CC-tree has the desired property.  The
intersection $\mathcal T_{\mathcal C}$ is a copy of $\1{\mathcal
  T_{\mathcal C}}$ with some additional constraints given by the
second CC-tree $\2{\mathcal T_{\mathcal C}}$.  We essentially have to
formulate two types of constraints.  First, constraints stemming from
the structure of the underlying C-tree of $\2{\mathcal T_{\mathcal
    C}}$.  Second, the constraints given by the (in)equalities on the
relative positions that are crucial with respect to $\2{\mathcal
  T_{\mathcal C}}$.  We show that both kinds of constraints can be
formulated as (in)equalities on the relative positions that are
crucial with respect to $\1{\mathcal T_{\mathcal C}}$.

Let $C_1$ and $C_2$ be two cycles joined by an edge in $\2{\mathcal
  T_{\mathcal C}}$.  Obviously, $C_1$ and $C_2$ are contained in the
boundary of a common face in every embedding $\2{\mathcal E}$
represented by $\2{\mathcal T_{\mathcal C}}$.  It is easy to formulate
constraints on the relative positions that are crucial with respect to
$\1{\mathcal T_{\mathcal C}}$ such that~$C_1$ and~$C_2$ are contained
in the boundary of a common face for every embedding represented by
$\1{\mathcal T_{\mathcal C}}$.  Consider the path $\pi$ from $C_1$ to
$C_2$ in $\1{\mathcal T_{\mathcal C}}$.  For every three cycles $C$,
$C'$ and $C''$ appearing consecutively on $\pi$ it is necessary that
$\pos_{C'}(C) = \pos_{C'}(C'')$ holds.  Otherwise $C_1$ and $C_2$ would
be separated by $C'$.  Conversely, if this equation holds for every
triple of consecutive cycles on $\pi$, then $C_1$ and $C_2$ always lie
on a common face.  We call the resulting equations the
\emph{common-face constraints}.  Note that all relative positions
involved in such constraints are crucial with respect to $\1{\mathcal
  T_{\mathcal C}}$.

To formulate the constraints given on the crucial relative positions
of $\2{\mathcal T_{\mathcal C}}$, we essentially find, for each of
these crucial relative positions $\pos_{C_1}(C_2)$, a relative
position $\pos_{C_1}(C_2')$ that is crucial with respect to
$\1{\mathcal T_{\mathcal C}}$ such that $\pos_{C_1}(C_2)$ is
determined by fixing $\pos_{C_1}(C_2')$ in $\1{\mathcal T_{\mathcal
    C}}$.  More precisely, for every relative position
$\pos_{C_1}(C_2)$ that is crucial with respect to $\2{\mathcal
  T_{\mathcal C}}$ we define its \emph{representative} in $\1{\mathcal
  T_{\mathcal C}}$ to be the crucial relative position
$\pos_{C_1}(C_2')$, where $C_2'$ is the first cycle in $\1{\mathcal
  T_{\mathcal C}}$ on the path from $C_1$ to $C_2$.  We obtain the
\emph{crucial-position constraints} on the crucial relative positions
of $\1{\mathcal T_{\mathcal C}}$ by replacing every relative position
in the constraints given for $\2{\mathcal T_{\mathcal C}}$ by its
representative.  The resulting set of (in)equalities on the crucial
relative positions of $\1{\mathcal T_{\mathcal C}}$ is obviously
necessary.

We can now formally define the \emph{intersection} $\mathcal
T_{\mathcal C}$ of two CC-trees $\1{\mathcal T_{\mathcal C}}$ and
$\2{\mathcal T_{\mathcal C}}$ to be $\1{\mathcal T_{\mathcal C}}$ with
the common-face and crucial-position constraints additionally
restricting its crucial relative positions.  We obtain the following
theorem, justifying the name ``intersection''.

\begin{theorem}
  \label{thm:intersection-is-intersection}
  The intersection of two CC-trees represents exactly the embeddings
  that are represented by both CC-trees.
\end{theorem}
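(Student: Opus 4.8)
The plan is to prove both inclusions separately, using throughout the following characterization implicit in the proof of Theorem~\ref{thm:cycle-tree-is-rep}: an embedding $\mathcal E$ of $\mathcal C$ is represented by a CC-tree if and only if (a) every two cycles adjacent in its underlying C-tree lie on a common face in $\mathcal E$, and (b) the crucial relative positions read off from $\mathcal E$ satisfy the tree's constraints. Condition (a) is equivalent to saying that the tree structure determines every non-crucial position correctly, i.e.\ for each cycle $C$ and each other cycle $C''$ one has $\pos_C(C'') = \pos_C(C')$, where $C'$ is the first cycle on the path from $C$ to $C''$ in the C-tree; this is exactly the property exploited in the ``if''-direction of Theorem~\ref{thm:cycle-tree-is-rep}.

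For the inclusion that every embedding $\mathcal E$ represented by the intersection $\mathcal T_C$ is represented by both CC-trees, note first that $\mathcal T_C$ is $\1{\mathcal T_C}$ with additional constraints, so $\mathcal E$ is trivially represented by $\1{\mathcal T_C}$. To see that $\mathcal E$ is also represented by $\2{\mathcal T_C}$ I would verify (a) and (b) for $\2{\mathcal T_C}$. Condition (a) holds because the common-face constraints were designed precisely so that every pair $C_1,C_2$ adjacent in $\2{\mathcal T_C}$ lies on a common face in any embedding represented by $\1{\mathcal T_C}$; since $\mathcal E$ satisfies these constraints, each such pair shares a face. For condition (b) I would invoke the key equality $\pos_{C_1}(C_2) = \pos_{C_1}(C_2')$, where $C_2'$ is the representative of $\pos_{C_1}(C_2)$, i.e.\ the first cycle on the $C_1$--$C_2$ path in $\1{\mathcal T_C}$. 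Because $\mathcal E$ is represented by $\1{\mathcal T_C}$ and $C_2$ lies in the subtree of $\1{\mathcal T_C}$ hanging off $C_1$ through the edge to $C_2'$, the cycle $C_2$ lies on the same side of $C_1$ as $C_2'$, giving the equality. Consequently the crucial-position constraints, which $\mathcal E$ satisfies, are exactly the constraints of $\2{\mathcal T_C}$ with each crucial position rewritten through this equality, so the crucial positions of $\2{\mathcal T_C}$ satisfy $\2{\mathcal T_C}$'s constraints.

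For the reverse inclusion I would take an embedding $\mathcal E$ represented by both $\1{\mathcal T_C}$ and $\2{\mathcal T_C}$ and show it satisfies all constraints of the intersection. Since $\mathcal E$ is represented by $\1{\mathcal T_C}$, conditions (a) and (b) hold for $\1{\mathcal T_C}$, so it suffices to verify the common-face and crucial-position constraints. The common-face constraints hold because $\mathcal E$, being represented by $\2{\mathcal T_C}$, has every pair adjacent in $\2{\mathcal T_C}$ on a common face, which is exactly what these constraints encode. For the crucial-position constraints, the constraints of $\2{\mathcal T_C}$ are satisfied by $\mathcal E$, and applying the same equality $\pos_{C_1}(C_2) = \pos_{C_1}(C_2')$ rewrites each of them into the corresponding constraint on $\1{\mathcal T_C}$'s crucial positions; hence these are satisfied as well. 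Thus $\mathcal E$ is represented by the intersection.

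The main obstacle is the key equality $\pos_{C_1}(C_2) = \pos_{C_1}(C_2')$ and, more broadly, the careful bookkeeping of how positions crucial with respect to one C-tree are determined by positions crucial with respect to the other. The two C-trees have different structures, so a position that is a free binary choice in $\2{\mathcal T_C}$ may be a determined, non-crucial position in $\1{\mathcal T_C}$; the representative map is exactly the device translating between the two, and the argument must confirm that it is sound in every embedding under consideration. I would therefore isolate this equality as a small lemma, arguing purely from the subtree-side property of C-tree embeddings, and then let both inclusions reduce to routine verifications of conditions (a) and (b).
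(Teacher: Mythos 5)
Your proof is correct, but it takes a genuinely different route from the paper's for the nontrivial direction. The paper, given an embedding $\mathcal E$ represented by the intersection, explicitly constructs a second embedding $\2{\mathcal E}$ by transferring to $\2{\mathcal T_C}$ the values that $\mathcal E$ induces on the positions crucial with respect to $\2{\mathcal T_C}$, and then proves $\mathcal E = \2{\mathcal E}$ by induction on the length of the path between an arbitrary pair of cycles in $\2{\mathcal T_C}$ --- an induction whose step interleaves both tree structures and invokes the common-face constraints. You avoid that induction entirely by reducing representability to your two conditions (a) and (b): condition (a) for $\2{\mathcal T_C}$ follows from the sufficiency of the common-face constraints (the ``Conversely \dots'' sentence the paper asserts when defining them), and condition (b) from the representative equality $\pos_{C_1}(C_2) = \pos_{C_1}(C_2')$, which you rightly isolate as a lemma resting on the subtree-side property. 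What your approach buys is modularity: the cross-tree reasoning becomes induction-free, and the only induction left is the simple single-tree argument hidden inside the equivalence between condition (a) and the subtree-side property. What it costs is that it leans on two facts the paper states but never proves --- that equivalence, and the sufficiency of the common-face constraints --- so a complete write-up must supply their (short) proofs, whereas the paper's induction is self-contained, deriving everything directly from the constraint equations and the two tree structures. Both arguments ultimately rest on the same two ingredients (common-face constraints and the representative map); yours organizes them into a cleaner characterization, the paper's into a single inductive verification.
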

\begin{proof}
  Let $\1{\mathcal T_{\mathcal C}}$ and $\2{\mathcal T_{\mathcal C}}$
  be two CC-trees and let $\mathcal T_{\mathcal C}$ be their
  intersection.  Let further $\mathcal E$ be an embedding represented
  by $\1{\mathcal T_{\mathcal C}}$ and $\2{\mathcal T_{\mathcal C}}$.
  Then $\mathcal T_{\mathcal C}$ also represents $\mathcal E$ since
  the common-face and crucial-position constraints are obviously
  necessary.  Now let $\mathcal E$ be an embedding represented
  by~$\mathcal T_{\mathcal C}$.  It is clearly also represented by
  $\1{\mathcal T_{\mathcal C}}$ since $\mathcal T_{\mathcal C}$ is the
  same tree with some additional constraints.  It remains to show that
  $\mathcal E$ is represented by $\2{\mathcal T_{\mathcal C}}$.  The
  embedding $\mathcal E$ induces a value for every relative position.
  In particular, it induces a value for every relative position that
  is crucial with respect to~$\2{\mathcal T_{\mathcal C}}$.  The
  crucial-position constraints ensure that these values satisfy the
  constraints given for the crucial relative positions in the CC-tree
  $\2{\mathcal T_{\mathcal C}}$.  Thus we can simply take these
  positions, apply them to $\2{\mathcal T_{\mathcal C}}$ and obtain an
  embedding $\2{\mathcal E}$ that is represented by $\2{\mathcal
    T_{\mathcal C}}$.  It remains to show that~$\mathcal E =
  \2{\mathcal E}$.  To this end, we consider an arbitrary pair of
  cycles $C_1$ and $C_2$ and show the following equation, where
  $\pos_{C_1}(C_2)$ and $\2{\pos_{C_1}}(C_2)$ denote the relative
  positions of $C_2$ with respect to $C_1$ in the embeddings~$\mathcal
  E$ and $\2{\mathcal E}$, respectively.
  \begin{eqnarray}
    \label{eq:1}
    \pos_{C_1}(C_2)&=&\2{\pos_{C_1}}(C_2)
  \end{eqnarray}

  Consider the paths $\pi$ and $\2\pi$ from $C_1$ to $C_2$ in
  $\mathcal T_{\mathcal C}$ and $\2{\mathcal T_{\mathcal C}}$,
  respectively.  We use induction on the length of $\2\pi$,
  illustrated in Figure~\ref{fig:intersection-CC-trees}, with
  Equation~\eqref{eq:1} as induction hypothesis.  If $|\2\pi| = 1$,
  then $\pos_{C_1}(C_2)$ is crucial with respect to $\2{\mathcal
    T_{\mathcal C}}$ and thus equal in both embeddings $\mathcal E$
  and $\2{\mathcal E}$ by construction of $\2{\mathcal E}$.  For the
  case $|\2\pi| > 1$ let $C_1'$ and $\2{C_1}$ be the neighbors of
  $C_1$ in $\pi$ and $\2\pi$, respectively.  Since $C_1'$ and
  $\2{C_1}$ lie on the path between $C_1$ and $C_2$ in $\mathcal
  T_{\mathcal C}$ and $\2{\mathcal T_{\mathcal C}}$, the following two
  equations hold.
  \begin{eqnarray}
    \label{eq:2}
    \pos_{C_1}(C_2)&=&\pos_{C_1}(C_1') \\
    \label{eq:3}
    \2{\pos_{C_1}}(C_2)&=&\2{\pos_{C_1}}(\2{C_1})
  \end{eqnarray}
  Thus, it suffices to show that $\pos_{C_1}(C_1') =
  \2{\pos_{C_1}}(\2{C_1})$ holds to obtain Equation~\eqref{eq:1}.  Let
  $\2{C_2}$ be the neighbor of $C_2$ on the path $\2\pi$.  Since the
  path from $C_1$ to $\2{C_2}$ is shorter than $\2\pi$ the equation
  $\pos_{C_1}(\2{C_2}) = \2{\pos_{C_1}}(\2{C_2})$ follows from the
  induction hypothesis stated in Equation~\eqref{eq:1}.  There are two
  possibilities.  The path from $C_1$ to $\2{C_2}$ in $\mathcal
  T_{\mathcal C}$ has either $\{C_1, C_1'\}$ or $\{C_1, C_1''\}$ for
  some other cycle $C_1''$ as first edge.  In the former case the
  equation
  \begin{eqnarray}
    \label{eq:4}
    \pos_{C_1}(C_1')&=&\2{\pos_{C_1}}(\2{C_1})
  \end{eqnarray}
  obviously follows.  Together with Equations~\eqref{eq:2}
  and~\eqref{eq:3}, this yields the induction hypothesis
  (Equation~\eqref{eq:1}).  In the latter case we have the following
  equation.
  \begin{eqnarray}
    \label{eq:5}
    \pos_{C_1}(C_1'')&=&\2{\pos_{C_1}}(\2{C_1})
  \end{eqnarray}
  Moreover, the common-face constraints stemming from the edge
  $\{\2{C_2}, C_2\}$ in $\2{\mathcal T_{\mathcal C}}$ enforce 
  \begin{eqnarray}
    \label{eq:6}
    \pos_{C_1}(C_1'')&=&\pos_{C_1}(C_1'),
  \end{eqnarray}
  again yielding the induction hypothesis stated in
  Equation~\eqref{eq:1}.  This concludes the proof.
\end{proof}

\begin{figure}
  \centering
  \includegraphics[page=1]{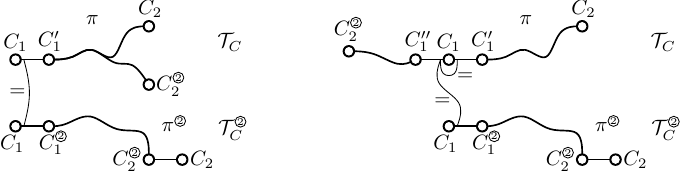}
  \caption{The two cases arising in the proof of
    Theorem~\ref{thm:intersection-is-intersection}.  If the path from
    $C_1$ to $\2{C_2}$ starts with the edge $\{C_1, C_1'\}$ (left) the
    equation $\pos_{C_1}(C_1') = \2{\pos_{C_1}}(\2{C_1})$ follows by
    induction.  Otherwise (right) $\pos_{C_1}(C_1'') =
    \2{\pos_{C_1}}(\2{C_1})$ follows by induction and the equation
    $\pos_{C_1}(C_1') = \pos_{C_1}(C_1'')$ holds due to the
    common-face constraint stemming from $\{C_2, \2{C_2}\}$.}
  \label{fig:intersection-CC-trees}
\end{figure}

\subsection{Linear-Time Algorithm}
\label{sec:line-time-algor}

In this section we first show how to compute the CC-tree of a given
graph containing a set of disjoint cycles in linear time.  Afterwards,
we show that the intersection of two CC-trees can be computed in
linear time.  Together, this yields a linear-time algorithm for the
variant of {\sc SEFE} we consider.

\subsubsection*{Computing the CC-Tree in Linear Time}
\label{sec:computing-cc-tree}

\todo{Pag 15, line 1 of ``Computing...time'':}The first step of
computing the CC-tree $\mathcal T_{\mathcal C}$ of a graph $G$ is to
compute the underlying C-tree.  Obviously, this can be easily done in
linear time.  Thus, the focus of this section lies on computing the
extended PR-node and cutvertex constraints restricted to the crucial
relative positions.  To simplify notation we first consider the case
where $G$ is biconnected.  Before we start computing the PR-node
constraints we need one more definition.  For each cycle $C$ there is
a set of inner nodes in the SPQR-tree $\mathcal T$ containing $C$ as a
cycle.  We denote the subgraph of $\mathcal T$ induced by these nodes
by~$\proj{\mathcal T}{C}$ and call it the \emph{induced subtree} with
respect to $C$.  To justify the term ``subtree'' we prove the
following lemma.

\begin{lemma}
  Let $G$ be a biconnected planar graph with SPQR-tree $\mathcal T$
  containing the disjoint cycles $\mathcal C = \{C_1, \dots, C_k\}$.
  The induced subtrees $\proj{\mathcal T}{C_1}, \dots, \proj{\mathcal
    T}{C_k}$ with respect to $C_1, \dots, C_k$ are pairwise
  edge-disjoint trees.
\end{lemma}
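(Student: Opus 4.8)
The plan is to prove the two claims hidden in the statement separately: first that each $\proj{\mathcal T}{C_i}$ is connected, and hence (being a connected subgraph of the tree $\mathcal T$) genuinely a subtree, and second that two distinct induced subtrees share no edge. Throughout I rely on the standard correspondence between the tree edges of $\mathcal T$ and split pairs: an edge $\{\mu,\nu\}$ of $\mathcal T$ is associated with a split pair $\{s,t\}$, and the linked virtual edges $\eta_\mu \in \skel(\mu)$ and $\eta_\nu \in \skel(\nu)$ satisfy that $\expan(\eta_\mu)$ is exactly the $\nu$-side of $G$ and $\expan(\eta_\nu)$ the $\mu$-side.

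For connectivity I would take two nodes $\mu,\nu$ that both contain $C$ as a cycle and consider the unique path between them in $\mathcal T$; it suffices to show every intermediate node $\lambda$ also contains $C$ as a cycle. Suppose not. Then $C$ is contracted in $\skel(\lambda)$, so all edges of $C$ lie in $\expan(\eta)$ for a single virtual edge $\eta$ of $\skel(\lambda)$. Deleting $\lambda$ splits $\mathcal T$ into one component per virtual edge of $\skel(\lambda)$, and any node realizing $C$ as a cycle must have all edges of $C$ available as virtual edges in its skeleton, hence lies in the component attached via $\eta$. Thus both $\mu$ and $\nu$ land in the same component of $\mathcal T-\lambda$, contradicting that $\lambda$ separates them on the path. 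This is essentially the navigation argument of Lemma~\ref{lem:determining-pos-by-ex-one} read ``backwards''. Being connected, $\proj{\mathcal T}{C}$ is a subtree.

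For edge-disjointness the key step is a characterization of when a tree edge belongs to an induced subtree: $\{\mu,\nu\}$ lies in $\proj{\mathcal T}{C}$ if and only if both endpoints contain $C$ as a cycle. Assuming this, I claim that if $C$ is a cycle in both $\skel(\mu)$ and $\skel(\nu)$ then $C$ passes through both poles $s$ and $t$ of the associated split pair $\{s,t\}$. Indeed, $C$ must have edges on the $\nu$-side (i.e.\ in $\expan(\eta_\mu)$) and on the $\mu$-side (in $\expan(\eta_\nu)$); otherwise $C$ would be entirely contracted into $\eta_\mu$ or $\eta_\nu$ on one of the two sides, contradicting cycle-ness there. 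A cycle having edges on both sides of a split pair necessarily enters and leaves through its two separating vertices, so $s,t \in V(C)$.

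With this in hand, edge-disjointness is immediate: if some edge $\{\mu,\nu\}$ of $\mathcal T$ belonged to both $\proj{\mathcal T}{C_i}$ and $\proj{\mathcal T}{C_j}$, applying the claim to each cycle would force $s,t \in V(C_i)$ and $s,t \in V(C_j)$, contradicting the disjointness of $C_i$ and $C_j$. Hence no edge is shared. I expect the delicate point to be exactly the claim that a cycle realized in two adjacent skeletons must run through both poles of the separating split pair; once that combinatorial fact is pinned down, the disjointness of the $C_i$ finishes the argument instantly. The S- and P-node cases, where at most one cycle can ever be a cycle in a single skeleton, are only easier and serve as consistency checks rather than obstacles.
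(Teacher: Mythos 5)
Your proposal is correct and takes essentially the same approach as the paper: both parts rest on the same two observations, namely that a cycle contracted into a single virtual edge at one node stays contracted at every node on the far side of it (giving connectivity/convexity of $\proj{\mathcal T}{C}$), and that a tree edge shared by two induced subtrees would force both cycles through the two poles of the associated split pair, contradicting their disjointness. The paper merely dresses these up differently --- anchoring connectivity at the Q-nodes of $C$'s edges and phrasing edge-disjointness as the two induced skeleton cycles $\kappa_i,\kappa_j$ sharing a virtual edge --- but the underlying arguments coincide with yours.
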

\begin{proof}
  We first show that the induced tree with respect to a single cycle
  is really a tree.  Afterwards, we show that two disjoint cycles
  induce edge-disjoint trees, yielding that they have linear size in
  total.

  Let $C$ be a cycle in $G$ and let $\proj{\mathcal T}{C}$ be its
  induced tree.  A Q-node in $\mathcal T$ contains $C$ as a cycle if
  and only if the corresponding edge is contained in $C$.  For each
  pair of these Q-nodes all nodes on the path between them are
  contained in $\proj{\mathcal T}{C}$, thus the Q-nodes are in the
  same connected component in the induced subtree.  Moreover, an
  internal node in $\mathcal T$ cannot be a leaf in $\proj{\mathcal
    T}{C}$, implying that it contains only one connected component.

  Assume there are two cycles $C_i$ and $C_j$ inducing trees
  $\proj{\mathcal T}{C_i}$ and $\proj{\mathcal T}{C_j}$ that are not
  edge-disjoint.  Let $\{\mu, \mu'\}$ be an edge in $\mathcal T$
  belonging to both.  Let further $\kappa_i$ and $\kappa_j$ be the
  cycles in $\skel(\mu)$ induced by $C_i$ and $C_j$, respectively.
  Since the neighbor $\mu'$ of $\mu$ also contains $C_i$ as a cycle,
  it corresponds to a virtual edge $\eps$ in $\mu$ that is contained
  in $\kappa_i$.  Similarly, $\eps$ is also contained in $\kappa_j$,
  which is a contradiction since $C_i$ and $C_j$ are disjoint.
\end{proof}

Our algorithm computing the PR-node constraints consists of four
phases, each of them consuming linear time.  In each phase we compute
data we then use in the next phase.  Table~\ref{tab:data} gives an
overview about the data we compute.  During all phases we assume the
SPQR-tree $\mathcal T$ to be rooted at a Q-node corresponding to an
edge in $G$ that is not contained in any cycle in $\mathcal C$.  In
the first phase we essentially compute the induced trees
$\proj{\mathcal T}{C}$.  More precisely, for every node $\mu$ in the
SPQR-tree we compute a list $\cyc(\mu)$ containing a cycle $C$ if and
only if $C$ is a cycle in \todo{Pag 16, line 19:}$\skel(\mu)$, that is
if and only if $\mu$ is contained in $\proj{\mathcal T}{C}$.
Moreover, we say a virtual edge $\eps$ in $\skel(\mu)$ \emph{belongs}
to a cycle $C$ if $C$ induces a cycle in $\skel(\mu)$ containing
$\eps$.  Note that $\eps$ belongs to at most one cycle.  If $\eps$
belongs to $C$, we set $\bel(\eps) = C$; if $\eps$ does not belong to
any cycle, we set $\bel(\eps) = \bot$.  Finally, the root of an
induced tree $\proj{\mathcal T}{C}$ with respect to the root chosen
for $\mathcal T$ is denoted by $\roo(\proj{\mathcal T}{C})$.  To sum
up, in the first phase we compute $\cyc(\mu)$ for every node $\mu$,
$\bel(\eps)$ for every virtual edge $\eps$ and $\roo(\proj{\mathcal
  T}{C})$ for every induced subtree $\proj{\mathcal T}{C}$.  In the
second phase, we compute $\high(\mu)$ as the highest edge in the
SPQR-tree $\mathcal T$ on the path from $\mu$ to the root whose
endpoints are both reachable from~$\mu$ without using edges contained
in any of the induced subtrees $\proj{\mathcal T}{C}$.  Note that
$\high(\mu)$ is the edge in~$\mathcal T$ incident to the root if no
edge on the path from $\mu$ to the root is contained in one of the
induced subtrees.  For the special case that the edge from $\mu$ to
its parent itself is already contained in one of the induced trees,
the edge $\high(\mu)$ is not defined and we set $\high(\mu) = \bot$.
In the third phase we compute for every crucial relative position
$\pos_C(C')$ the node in the SPQR-tree determining it, denoted by
$\det(\pos_C(C'))$.  Moreover, for every virtual edge $\eps$ in
$\skel(\mu)$ we compute a list $\contr(\eps)$ of relative positions.
A relative position $\pos_C(C')$ is contained in $\contr(\eps)$ if and
only if it is crucial, determined by $\mu$ and $C'$ is contracted in
$\eps$.  Similarly, the list $\detcyc(\mu)$ for an R-node~$\mu$
contains the crucial relative position $\pos_C(C')$ if and only if $C$
and $C'$ are both cycles in $\skel(\mu)$, implying that $\pos_C(C')$
is determined by $\mu$.  Finally, in the fourth pase, we compute the
PR-node constraints restricted to the crucial relative positions.  The
next lemma states that the first phase can be implemented in linear
time.

\begin{table}
  \heavyrulewidth=1pt
  \centering
  \begin{tabular}{lp{0.8\linewidth}}
    \toprule
    {\bf Data} & {\bf Description}  \\
    \midrule
    $\cyc(\mu)$ & For a node $\mu$ in the SPQR-tree the list of cycles
    in $\mathcal C$ that are cycles in $\skel(\mu)$.  \\
    \midrule
    $\bel(\eps)$ & For a virtual edge $\eps$ in $\skel(\mu)$ either a
    cycle $C \in \mathcal C$ if $C$ induces a cycle in $\skel(\mu)$
    containing $\eps$ or $\bot$ denoting that $\eps$ is not contained
    in such a cycle.  \\
    \midrule
    $\roo(\proj{\mathcal T}{C})$ & The root for the induced tree
    $\proj{\mathcal T}{C}$ with respect to a chosen root for the
    SPQR-tree $\mathcal T$.  \\
    \midrule
    $\high(\mu)$  &  For a node $\mu$ in the SPQR-tree $\mathcal T$
    the highest edge in $\mathcal T$ on the path from $\mu$ to the
    root that is reachable without using an edge in any of the induced
    subtrees~$\proj{\mathcal T}{C}$.  \\
    \midrule
    $\det(\pos_{C}(C'))$ & The node in the SPQR-tree determining the
    relative position $\pos_{C}(C')$ of the cycle $C'$ with respect to
    another cycle $C$.  \\
    \midrule
    $\contr(\eps)$ & For a virtual edge $\eps$ in $\skel(\mu)$ a list
    of relative positions containing $\pos_C(C')$ if and only if it is
    crucial, determined by $\mu$ and $C'$ is contracted in $\eps$.  \\
    \midrule
    $\detcyc(\mu)$ & For every R-node $\mu$ a list of crucial relative
    positions containing $\pos_C(C')$ if and only if $C$ and $C'$ are
    cycles in $\skel(\mu)$. \\
    \bottomrule
\end{tabular}
\caption{Data that is computed to compute the PR-node constraints
  restricted to the crucial relative positions.}
\label{tab:data}
\end{table}

\begin{lemma}
  \label{lem:induced-trees-comp-data}
  Let $G$ be a biconnected planar graph with SPQR-tree~$\mathcal T$
  containing the disjoint cycles~$\mathcal C$.  The data $\cyc(\mu)$
  for every node $\mu$, $\bel(\eps)$ for every virtual edge $\eps$,
  and $\roo(\proj{\mathcal T}{C_i})$ for every cycle~$C_i$ can be
  computed in overall linear time.
\end{lemma}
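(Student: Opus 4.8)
The plan is to root $\mathcal T$ at the designated Q-node and to compute all three quantities in a single bottom-up (post-order) traversal, charging the work to the induced subtrees so that the preceding lemma --- which states that $\proj{\mathcal T}{C_1}, \dots, \proj{\mathcal T}{C_k}$ are pairwise edge-disjoint --- yields the overall linear bound. First I would mark every Q-leaf with the cycle containing its edge; since the cycles in $\mathcal C$ are disjoint, each leaf carries at most one such mark. The key observation driving everything else is that, once $C$ is a cycle in $\skel(\mu)$, a virtual edge $\eps$ of $\skel(\mu)$ belongs to $C$ if and only if $\expan(\eps)$ contains at least one edge of $C$: the induced cycle uses up all edges of $C$, so no virtual edge off it can contain a $C$-edge, while every virtual edge on it contains a nonempty arc of $C$. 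Hence it suffices to track, for each tree edge, whether the corresponding expansion graph contains an edge of a given cycle, together with how many.

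Concretely, during the post-order pass I would maintain at each node $\mu$, for every cycle $C$ that still has edges in the subtree below $\mu$, the number of $C$-edges lying below $\mu$, stored as a short list indexed by $\mu$'s incident virtual edges. At $\mu$ these per-child lists are merged: a cycle $C$ whose $C$-edges are spread over at least two incident virtual edges (counting the parent side) is exactly a cycle in $\skel(\mu)$, so I add $C$ to $\cyc(\mu)$ and set $\bel(\eps) = C$ for precisely those incident $\eps$ with a positive $C$-count and $\bel(\eps) = \bot$ otherwise. The node at which the accumulated count of $C$ first reaches $|V(C)|$ is the topmost node whose subtree contains all edges of $C$, that is the $\lca$ of the Q-leaves of $C$; this node is recorded as $\roo(\proj{\mathcal T}{C})$. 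Crucially, once a cycle is completed at its root it is dropped from the propagated lists and never appears higher up, so $C$ is touched only at nodes of $\proj{\mathcal T}{C}$.

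The correctness of $\cyc$, $\bel$, and $\roo$ then follows from the observation above together with the characterization (already used in the preceding lemma) that an internal node lies in $\proj{\mathcal T}{C}$ exactly when $C$ induces a genuine cycle in its skeleton, and that this is the minimal subtree connecting the Q-leaves of $C$. For the running time, the SPQR-tree and the total size of all skeletons are linear, so the traversal itself is linear; the only nonobvious cost is the merging of cycle-counts at the nodes. The main obstacle is to argue that this merging is linear overall rather than $\mathcal O(k\cdot n)$: since a cycle $C$ is propagated only within its induced subtree and is discarded once its count saturates at its root, the total propagation work is bounded by $\sum_i |\proj{\mathcal T}{C_i}|$, which is $\mathcal O(n)$ precisely because the induced subtrees are pairwise edge-disjoint. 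Using a per-cycle \emph{last-merge-node} marker, so that two children carrying the same cycle can be combined in constant time, avoids any repeated scanning, and with this the whole first phase runs in linear time. Finally, the biconnected assumption can be lifted afterwards by running the computation blockwise, which does not affect the linear bound.
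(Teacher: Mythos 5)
Your proposal is correct and takes essentially the same approach as the paper: a single bottom-up pass over the rooted SPQR-tree that propagates along each tree edge which cycle (if any) crosses it, detects repeated occurrences of a cycle at a node in constant time via per-cycle markers (the paper uses timestamps), and charges the total work to the skeleton sizes plus the edge-disjointness of the induced subtrees. The only mechanical difference is how closure is detected: you find the root of $\proj{\mathcal T}{C}$ by accumulating edge counts until they reach $|E(C)|$ (the LCA of the Q-leaves of $C$), whereas the paper decides whether the parent virtual edge of $\skel(\mu)$ belongs to $C$ by a local check at a pole $s$ (exactly one incident virtual edge labelled $C$ means the $C$-edges form an $s$--$t$ path, two means the cycle closes inside $\skel(\mu)$); these tests are equivalent and both constant-time.
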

\begin{proof}
  We process the SPQR-tree $\mathcal T$ bottom-up, starting with the
  Q-nodes.  If a Q-node $\mu$ corresponds to an edge belonging to a
  cycle $C$, then $\cyc(\mu)$ contains only $C$ and $\bel(\eps) = C$
  for the virtual edge in $\mu$.  If the edge corresponding to $\mu$
  is not contained in a cycle, then $\cyc(\mu)$ is empty and
  $\bel(\eps) = \bot$.  Furthermore, a Q-node cannot be the root of
  any induced subtree $\proj{\mathcal T}{C}$ as we chose as the root
  of $\mathcal T$ a Q-node corresponding to an edge not contained in
  any of the cycles.  Now consider an inner node $\mu$.  We first
  process the virtual edges in $\skel(\mu)$ not belonging to the
  parent of $\mu$.  Let $\eps$ be such a virtual edge corresponding to
  the child $\mu'$ of $\mu$ and let $\eps'$ be the virtual edge in
  $\skel(\mu')$ corresponding to its parent $\mu$.  Then $\eps$
  belongs to a cycle induced by $C$ if and only if~$\eps'$ does, thus
  we set $\bel(\eps) = \bel(\eps')$.  Moreover, if $\bel(\eps) \not=
  \bot$ we need to add the cycle $\bel(\eps)$ to $\cyc(\mu)$ if it was
  not already added.  Whether $\bel(\eps)$ is already contained in
  $\cyc(\mu)$ can be tested in constant time as follows.  We define a
  timestamp $t$, increase $t$ every time we go to the next node in
  $\mathcal T$ and we store the current value of $t$ for a cycle added
  to $\cyc(\mu)$.  Then a cycle $C$ was already added to $\cyc(\mu)$
  if and only if the timestamp stored for $C$ is equal to the current
  timestamp $t$.  Thus, processing all virtual edges in $\skel(\mu)$
  not corresponding to the parent of $\mu$ takes time linear in the
  size of $\skel(\mu)$.  Let now $\eps = \{s, t\}$ be the virtual edge
  corresponding to the parent of $\mu$.  If $\bel(\eps') = \bot$ for
  all virtual edges $\eps'$ incident to $s$, then $\eps$ cannot be
  contained in a cycle induced by any of the cycles in $\mathcal C$.
  Otherwise, there are two possibilities.  There is a cycle $C \in
  \mathcal C$ such that $\bel(\eps_1) = C$ for exactly one virtual
  edge $\eps_1$ incident to $s$ or there are two such edges $\eps_1$
  and $\eps_2$ with $\bel(\eps_1) = \bel(\eps_2) = C$.  In the former
  case the edges belonging to $C$ in $\skel(\mu)$ form a path from $s$
  to $t$, thus the edge $\eps$ corresponding to the parent also
  belongs to $C$ and we set $\bel(\eps) = C$.  In the latter case $s$
  is contained in the cycle $C$ but the virtual edge does not belong
  to $C$ and we set $\bel(\eps) = \bot$.  This takes time linear in
  the degree of $s$ in $\skel(\mu)$ and hence lies in $\mathcal
  O(|\skel(\mu)|)$.  It remains to set $\roo(\proj{\mathcal T}{C}) =
  \mu$ for every cycle $C$ inducing the subtree $\proj{\mathcal T}{C}$
  having $\mu$ as root.  The tree $\proj{\mathcal T}{C}$ has $\mu$ as
  root if and only if $C$ is contained as cycle $\kappa$ in $\mu$ but
  the virtual edge $\eps$ in $\skel(\mu)$ corresponding to the parent
  of $\mu$ is not contained in $\kappa$.  Thus we have to set
  $\roo(\proj{\mathcal T}{C}) = \mu$ for all cycles $C$ in $\cyc(\mu)$
  except for $\bel(\eps)$.  Note that this again consumes time linear
  in the size of $\skel(\mu)$ since the number of cycles that are
  cycles in $\mu$ is in $\mathcal O(|\skel(\mu)|)$.  Due to the fact
  that the SPQR-tree $\mathcal T$ has linear size this yields an
  overall linear running time.
\end{proof}

In the second phase we want to compute $\high(\mu)$ for each of the
nodes in $\mathcal T$.  We obtain the following lemma.

\begin{lemma}
  \label{lem:computing-high}
  Let $G$ be a biconnected planar graph with SPQR-tree~$\mathcal T$
  containing the disjoint cycles~$\mathcal C$.  For every node $\mu$
  in $\mathcal T$ the edge $\high(\mu)$ can be computed in linear
  time.
\end{lemma}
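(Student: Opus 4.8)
The plan is to reduce the computation of all values $\high(\mu)$ to a single top-down pass over the rooted tree $\mathcal T$. First I would reformulate the defining condition combinatorially. Call an edge of $\mathcal T$ \emph{blocked} if it lies in one of the induced subtrees $\proj{\mathcal T}{C_1}, \dots, \proj{\mathcal T}{C_k}$, and \emph{free} otherwise. An edge joining a node $\mu'$ to its parent $\mu$, corresponding to a virtual edge $\eps$ in $\skel(\mu)$, is blocked if and only if $\bel(\eps) \neq \bot$: both endpoints contain the same cycle $C$ as a cycle exactly when $\eps$ belongs to $C$. Hence the blocked/free status of every tree edge is available in $\mathcal O(1)$ from the data computed in Lemma~\ref{lem:induced-trees-comp-data}.

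Next I would establish the key structural observation. Fix $\mu$ and let $\mu = \nu_0, \nu_1, \dots, \nu_r = \rho$ be the path to the root $\rho$, with $e_i = \{\nu_{i-1}, \nu_i\}$. Since $\mathcal T$ is a tree, the unique path from $\mu$ to any $\nu_i$ is precisely $e_1, \dots, e_i$; therefore $\nu_i$ is reachable from $\mu$ without using blocked edges if and only if $e_1, \dots, e_i$ are all free. Consequently the nodes on the root-path reachable from $\mu$ form a contiguous prefix $\nu_0, \dots, \nu_j$, where $j$ is maximal with $e_1, \dots, e_j$ all free, and $\high(\mu) = e_j$ is exactly the topmost free edge encountered before the first blocked edge while climbing from $\mu$. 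If $e_1$ is already blocked this prefix is empty and $\high(\mu) = \bot$, matching the stated special case; if no edge on the path is blocked then $\high(\mu) = e_r$ is the edge incident to the root.

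With this characterization the value propagates along the recursion
\[ \high(\nu) = \begin{cases} \bot & p(\nu) \text{ blocked},\\ p(\nu) & p(\nu) \text{ free and } \high(\pi)=\bot,\\ \high(\pi) & p(\nu) \text{ free and } \high(\pi)\neq\bot, \end{cases} \]
where $\pi$ is the parent of $\nu$, $p(\nu)$ the edge between them, and the pass is initialised by $\high(\rho) = \bot$. Correctness follows by induction along the path to the root: if $p(\nu)$ is free then the first blocked edge above $\nu$ coincides with the first blocked edge above $\pi$, so the topmost free edge for $\nu$ equals that for $\pi$, unless $\pi$'s climb is already empty (i.e.\ $\high(\pi)=\bot$), in which case $p(\nu)$ itself is the topmost free edge. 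Visiting the nodes of $\mathcal T$ in any top-down order (for instance a BFS from the root) and applying this rule spends $\mathcal O(1)$ per node, so all values $\high(\mu)$ are computed in time linear in the size of $\mathcal T$ and hence linear in the size of $G$.

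The only genuinely delicate point is the reachability observation: one must notice that ``reachability avoiding blocked edges'' collapses to a contiguous free prefix of the root-path, which hinges on the uniqueness of paths in the tree $\mathcal T$ together with the fact, from the preceding lemma, that the induced subtrees are genuine subtrees so that blocked edges are well defined via $\bel$. Once this is in place the computation is a standard $\mathcal O(1)$-per-node tree recursion and the remaining steps are routine.
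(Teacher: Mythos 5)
Your proof is correct and takes essentially the same approach as the paper: it characterizes the edges of $\mathcal T$ lying in induced subtrees via $\bel(\eps) \neq \bot$ (using the data from Lemma~\ref{lem:induced-trees-comp-data}), and then computes all values $\high(\mu)$ in a single top-down pass with $\mathcal O(1)$ work per node. Your explicit recursion is in fact a more precise rendering of the paper's brief description of this pass (``processing $\mathcal T$ top-down remembering the latest processed edge not belonging to any of the induced subtrees''), which read literally would not handle free edges below a blocked edge correctly, whereas your case distinction on $\high(\pi) = \bot$ does.
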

\begin{proof}
  We make use of the fact that $\bel(\eps)$ is already computed for
  every virtual edge $\eps$ in each of the skeletons, which can be
  done in linear time due to Lemma~\ref{lem:induced-trees-comp-data}.
  Note that an edge $\{\mu, \mu'\}$ in the SPQR-tree $\mathcal T$
  (where $\mu$ is the parent of $\mu'$) belongs to the induced subtree
  $\proj{\mathcal T}{C}$ with respect to the cycle $C \in \mathcal C$
  if and only if $\bel(\eps) = C$ for the virtual edge $\eps$ in
  $\skel(\mu)$ corresponding to the child $\mu'$.  In this case we
  also have $\bel(\eps') = \bel(\eps) = C$ where $\eps'$ is the
  virtual edge in $\skel(\mu')$ corresponding to the parent.  Hence,
  we can compute $\high(\mu)$ for every node $\mu$ in~$\mathcal T$ by
  processing~$\mathcal T$ top-down remembering the latest processed
  edge not belonging to any of the induced subtrees.  This can easily
  be done in linear time.
\end{proof}

In the third phase we compute $\det(\pos_C(C'))$ for every crucial
relative position in linear time.  Moreover, we compute $\contr(\eps)$
for every virtual edge $\eps$ and $\detcyc(\mu)$ for every R-node
$\mu$.  We show the following lemma.

\begin{lemma}
  \label{lem:compute-det-and-contr}
  Let $G$ be a biconnected planar graph with SPQR-tree $\mathcal T$
  containing the disjoint cycles $\mathcal C$.  The node
  $\det(\pos_C(C'))$ for each crucial relative position $\pos_C(C')$,
  the list $\contr(\eps)$ for each virtual edges $\eps$ and the list
  $\detcyc(\mu)$ for each R-nodes $\mu$ can be computed in overall
  linear time.
\end{lemma}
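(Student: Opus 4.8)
The plan is to first characterize the determining node $\det(\pos_C(C'))$ purely in terms of the two induced subtrees $\proj{\mathcal T}{C}$ and $\proj{\mathcal T}{C'}$, and then to compute it for all crucial relative positions by a single batched traversal of $\mathcal T$. Recall from Lemma~\ref{lem:determining-pos-by-ex-one} that $\det(\pos_C(C'))$ is the unique node $\mu$ that contains $C$ as a cycle $\kappa$ and $C'$ either as a cycle or contracted in a virtual edge not lying on $\kappa$. I would argue that $\mu$ is exactly the node of $\proj{\mathcal T}{C}$ closest to $\proj{\mathcal T}{C'}$ in $\mathcal T$: since the induced subtrees are edge-disjoint (by the preceding lemma), either they share a single node, which is then necessarily an R-node in which both cycles appear and which is $\mu$ (the $\detcyc$ case), or they are node-disjoint and the unique $\mathcal T$-path joining them leaves $\proj{\mathcal T}{C}$ at $\mu$ through a virtual edge $\eps$ not on $\kappa$ with $C'$ contracted in $\eps$ (the $\contr$ case). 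This both identifies $\det(\pos_C(C'))$ and tells us the list to which $\pos_C(C')$ must be appended.

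For the computation I would root $\mathcal T$ at the chosen cycle-free Q-node and precompute DFS in/out times, so that ancestor queries take constant time. Fix a cycle $C$ and let $R_C = \roo(\proj{\mathcal T}{C})$. I traverse $\proj{\mathcal T}{C}$ once, marking each of its nodes with a per-$C$ timestamp and recording, for every node $\mu \in \proj{\mathcal T}{C}$ and every child virtual edge $\eps$ with $\bel(\eps) \neq C$ (i.e.\ $\eps \notin \kappa$), the DFS interval of the subtree hanging below $\eps$; these intervals are pairwise disjoint and contain no further node of $\proj{\mathcal T}{C}$. Every crucial position $\pos_C(C')$ is then resolved in constant time from $R_{C'} = \roo(\proj{\mathcal T}{C'})$: if $R_{C'}$ carries the current timestamp, then $R_{C'}\in\proj{\mathcal T}{C}$ is the shared node and $\mu := R_{C'}$; otherwise, if the in-time of $R_{C'}$ falls into the interval recorded for an off-edge $\eps$ of some $\mu$, then $\proj{\mathcal T}{C'}$ hangs below $\mu$ and this $\eps$ points toward $C'$; otherwise $\proj{\mathcal T}{C'}$ lies outside the subtree of $R_C$, and I set $\mu := R_C$ with $\eps$ the parent virtual edge of $R_C$ (which is off $\kappa$ by definition of $\roo$). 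In all cases I set $\det(\pos_C(C')) := \mu$, and I append $\pos_C(C')$ to $\detcyc(\mu)$ if $\bel(\eps) = C'$ (equivalently, $C'$ is itself a cycle in $\skel(\mu)$, which forces $\mu$ to be an R-node) and to $\contr(\eps)$ otherwise. The interval point-location is carried out by bucket-sorting the queries $\pos_C(\cdot)$ by DFS in-time and merging them with the off-edge intervals, which appear in DFS order during the traversal.

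For the running time I would charge the work for a fixed $C$ to $\mathcal{O}(|\proj{\mathcal T}{C}|)$ for the sweep plus $\mathcal{O}(1)$ per crucial position whose first cycle is $C$. Because the induced subtrees are edge-disjoint, $\sum_C |\proj{\mathcal T}{C}| = \mathcal{O}(|\mathcal T|) = \mathcal{O}(n)$, and the number of crucial positions is linear in the size of the underlying C-tree and hence in $n$; this yields the claimed overall linear bound. The precomputed values $\high(\mu)$ of Lemma~\ref{lem:computing-high} give an equivalent, walk-based route navigation: when moving from $R_{C'}$ toward $\proj{\mathcal T}{C}$ one uses $\high$ to jump, in a single step, over each gap region that contains no induced-subtree edge.

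The step I expect to be the main obstacle is the correctness of the case distinction, in particular proving that the node selected above really is the determining node of Lemma~\ref{lem:determining-pos-by-ex-one}. One must verify that in the shared-node case $\mu$ is genuinely an R-node and that the test $\bel(\eps) = C'$ detects exactly this case irrespective of whether the shared node is $R_C$, $R_{C'}$, or both; and that in the remaining two cases the chosen virtual edge $\eps$ lies off $\kappa$ and has $C'$ in its expansion graph, so that $\pos_C(C')$ is indeed contracted there. Establishing these facts, together with the linear charging argument that rests on the edge-disjointness of the induced subtrees, gives the lemma.
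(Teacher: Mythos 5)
Your structural characterization of $\det(\pos_C(C'))$ as the node of $\proj{\mathcal T}{C}$ closest to $\proj{\mathcal T}{C'}$ is correct, and your three-way query resolution (timestamp, off-edge interval, parent edge of $\roo(\proj{\mathcal T}{C})$) together with the dispatch rule ``append to $\detcyc$ iff $\bel(\eps)=C'$'' is sound; it even handles cleanly the subtle case where the two induced subtrees share the node $\roo(\proj{\mathcal T}{C})$ while $\roo(\proj{\mathcal T}{C'})$ lies above it. The genuine gap is in the running time, not where you expected it. Your sweep for a fixed cycle $C$ records a DFS interval for \emph{every} child virtual edge $\eps$ with $\bel(\eps)\neq C$ of \emph{every} node $\mu\in\proj{\mathcal T}{C}$. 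That work is not $\mathcal O(|\proj{\mathcal T}{C}|)$, as your charging argument claims, but $\Theta\bigl(\sum_{\mu\in\proj{\mathcal T}{C}}|\skel(\mu)|\bigr)$. Edge-disjointness of the induced subtrees bounds the total number of their nodes and edges, but it does not bound total skeleton size summed over cycles, because a single skeleton can contain many cycles as cycles. Concretely, let $G$ be a triconnected planar graph (so $\mathcal T$ has one R-node $\mu$ with $|\skel(\mu)|=\Theta(n)$, all other nodes Q-nodes) containing $\Theta(n)$ vertex-disjoint triangles, e.g.\ a tower of stacked triangles; then $\proj{\mathcal T}{C_i}=\{\mu\}$ for every cycle $C_i$, yet your sweep for each $C_i$ enumerates $\Theta(n)$ off-edges of $\skel(\mu)$, giving $\Theta(n^2)$ total work. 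So the algorithm as described is quadratic.

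The paper avoids exactly this per-cycle enumeration of skeleton edges: it resolves each crucial relative position in amortized constant time from globally precomputed data. It uses constant-time LCA queries on the rooted $\mathcal T$ applied to $\roo(\proj{\mathcal T}{C})$ and $\roo(\proj{\mathcal T}{C'})$, the edges $\high(\mu')$ of Lemma~\ref{lem:computing-high} to decide (by comparing the top of $\high(\mu')$ with $\roo(\proj{\mathcal T}{C})$) which node determines the position without any searching, and an \emph{offline} union-find structure --- the unions are known in advance, hence amortized constant time per operation --- to answer the one remaining question your intervals were designed for: given that the determining node is $\roo(\proj{\mathcal T}{C})$, which of its child virtual edges contains $\roo(\proj{\mathcal T}{C'})$ in its subtree. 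Queries are collected in temporary lists $\temp(\mu)$ and answered in a single bottom-up pass, so no off-edge is ever listed per cycle. Your closing remark that $\high(\mu)$ gives ``an equivalent walk-based navigation'' is the right instinct, but it is left undeveloped; walking and jumping over gaps is not constant time per query, and even with $\high$ you still need the union-find (or an equivalent level-ancestor mechanism) to extract the virtual edge. Replacing your interval sweep by this query-driven scheme is precisely what is needed to make the linear bound true.
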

\begin{proof}
  Let $C$ and $C'$ be two cycles such that $\pos_C(C')$ is a crucial
  relative position.  We show how to compute the node
  $\det(\pos_C(C'))$ determining this relative position in constant
  time.  Moreover, if $C'$ is contracted in a virtual edge $\eps$ in
  $\det(\pos_C(C'))$, we append the relative position $\pos_C(C')$ to
  $\contr(\eps)$.  Otherwise, $\det(\pos_C(C'))$ is an R-node
  containing $C$ and $C'$ as cycles and we add $\pos_C(C')$ to
  $\detcyc(\det(\pos_C(C')))$.  Since there are only linearly many
  crucial relative positions this takes only linear time.  Let $\mu =
  \roo(\proj{\mathcal T}{C})$ and $\mu' = \roo(\proj{\mathcal T}{C'})$
  be the roots of the induced trees with respect to $C$ and~$C'$,
  respectively, which are already computed due to
  Lemma~\ref{lem:induced-trees-comp-data}.  We use that the lowest
  common ancestor of a pair of nodes can be computed in constant time
  after a linear-time preprocessing~\cite{ht-fafnca-84, bf-lcapr-00}.
  In particular, let $\lca(\mu, \mu')$ be the lowest common ancestor
  of the two roots.  There are three possibilities.  First, $\lca(\mu,
  \mu')$ is above $\mu$ (Figure~\ref{fig:compute-det-cases}(a)).
  Second, $\lca(\mu, \mu') = \mu = \mu'$
  (Figure~\ref{fig:compute-det-cases}(b)).  And third, $\lca(\mu,
  \mu') = \mu$ lies above $\mu'$
  (Figure~\ref{fig:compute-det-cases}(c--f)).  Note that the first
  case includes the situation where $\mu' = \lca(\mu, \mu')$ lies
  above $\mu$.

  In the first case the cycle $C'$ is contracted in $\mu$ in the
  virtual edge $\eps$ corresponding to the parent of $\mu$, while
  $\mu$ contains $C$ as cycle $\kappa$ not containing the virtual edge
  $\eps$ corresponding to the parent.  Hence, $\mu$ determines
  $\pos_{C}(C')$.  We set $\det(\pos_C(C')) = \mu$ and insert
  $\pos_C(C')$ into $\contr(\eps)$.  In the second case $C$ and $C'$
  are both cycles in $\mu = \mu'$, hence $\mu$ determines
  $\pos_{C}(C')$.  We set $\det(\pos_C(C')) = \mu$ and insert
  $\pos_C(C')$ into $\detcyc(\mu)$ since $\skel(\mu)$ contains $C$ and
  $C'$ as cycles.

  \begin{figure}
    \centering
    \includegraphics{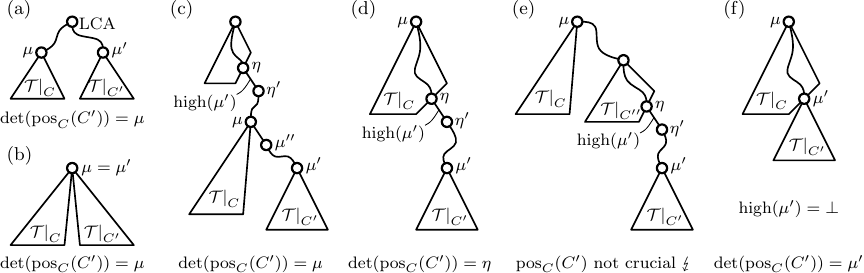}
    \caption{The cases that occur in the proof of
      Lemma~\ref{lem:compute-det-and-contr}.}
    \label{fig:compute-det-cases}
  \end{figure}

  In the third case the node determining $\pos_{C}(C')$ lies somewhere
  on the path from $\mu$ down to~$\mu'$.  In this situation
  $\high(\mu')$ comes into play and we distinguish several cases.  We
  first assume that $\high(\mu') \not= \bot$.  Let $\{\eta, \eta'\} =
  \high(\mu')$ be the highest edge in $\mathcal T$ on the path from
  $\mu'$ to the root that is reachable without using an edge in any of
  the induced trees, as computed by Lemma~\ref{lem:computing-high}.
  Let $\eta$ be the parent of $\eta'$.  We claim that either $\mu$ or
  $\eta$ determines the crucial relative position
  $\pos_C(C')$. 

  More precisely, if $\eta$ lies above or is equal to $\mu$
  (Figure~\ref{fig:compute-det-cases}(c)), then the child $\mu''$ of
  $\mu$ on the path from $\mu'$ to $\mu$ does not contain $C$ as a
  cycle.  Otherwise the edge $\{\mu, \mu''\}$ would have been
  contained in $\proj{\mathcal T}{C}$, which is a contradiction to the
  definition of $\high(\mu')$.  Thus, $C'$ is contracted in the
  virtual edge $\eps$ in $\skel(\mu)$ corresponding to the child
  $\mu''$ that is not contained in the cycle induced by $C$, implying
  that $\mu$ determines $\pos_C(C')$.  In this case we set
  $\det(\pos_C(C')) = \mu$.  Moreover, we want to insert the crucial
  relative position $\pos_C(C')$ into $\contr(\eps)$.  Unfortunately,
  we cannot determine the virtual edge $\eps$ belonging to the child
  $\mu''$ in constant time.  We handle that problem by storing a
  temporary list $\temp(\mu)$ for the node $\mu$ and insert
  $\pos_C(C')$ into this list.  After we have processed all crucial
  relative positions, we process $\mathcal T$ bottom-up, building a
  union-find data structure by taking the union of $\mu$ with all its
  children after processing $\mu$.  Thus, when processing~$\mu$, we
  can simply traverse the list $\temp(\mu)$ once, find for every
  crucial relative position $\pos_C(C')$ the virtual edge $\eps$
  containing $C'$ by finding $\roo(\proj{\mathcal T}{C'})$ in the
  union-find data structure and then add $\pos_C(C')$ to
  $\contr(\eps)$.  Note that this takes overall linear time, because
  the union-find data structure consumes amortized constant time per
  operation since the union-operations we apply are known in
  advance~\cite{gt-ltascdsu-85}.

  In the second case~$\eta$ lies below $\mu$, where $\high(\mu') =
  \{\eta, \eta'\}$.  We claim that $\eta$ contains~$C$ as a cycle and
  $C'$ contracted in the virtual edge $\eps'$ in $\skel(\eta)$
  corresponding to the child~$\eta'$, as depicted in
  Figure~\ref{fig:compute-det-cases}(d).  By definition of
  $\high(\mu')$ there is a cycle $C''$ that is contained as a cycle
  in~$\eta$ and in the parent of~$\eta$ but not in $\eta'$.  We show
  that $C'' = C$ or $\pos_C(C')$ is not a crucial relative position.
  Assume $C'' \not= C$; see Figure~\ref{fig:compute-det-cases}(e).  In
  $\skel(\eta)$ the cycle $C'$ is contracted in the virtual
  edge~$\eps'$ corresponding to the child~$\eta'$, whereas $C$ is
  contracted in the virtual edge $\eps$ corresponding to the parent of
  $\eta$.  Since~$C''$ is a cycle in $\eta$ and in its parent, it
  induces a cycle in $\skel(\eta)$ containing~$\eps$.  Consider a
  path~$\pi$ from~$C'$ to $C$ in the graph $G$.  Then $\pi$ contains
  one of the poles of $\skel(\eta)$ and hence contains a vertex
  in~$C''$.  Thus the relative position $\pos_C(C')$ is not crucial,
  which is a contradiction.  Hence we can simply set $\det(\pos_C(C'))
  = \eta$ and append $\pos_C(C')$ to the list $\contr(\eps')$.

  Finally, $\high(\mu')$ may be not defined, that is $\high(\mu') =
  \bot$ since the edge connecting $\mu'$ to its parent is already
  contained in one of the induced cycle trees.  With a similar
  argument as before, this induced tree is $\proj{\mathcal T}{C}$,
  belonging to the cycle $C$, as depicted in
  Figure~\ref{fig:compute-det-cases}(f).  Thus $\mu'$ contains $C$ and
  $C'$ as cycles and we set $\det(\pos_C(C')) = \mu'$ and add
  $\pos_C(C')$ to the list $\detcyc(\mu')$. This concludes the proof.
\end{proof}

In the fourth and last phase we process the SPQR-tree $\mathcal T$
once more to finally compute the PR-node constraints restricted to the
crucial relative positions.  We obtain the following lemma.

\begin{lemma}
  \label{lem:computing-pr-node-constr}
  Let $G$ be a biconnected planar graph.  The PR-node constraints
  restricted to the crucial relative positions can be computed in
  linear time.
\end{lemma}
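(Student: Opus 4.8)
The plan is to process the SPQR-tree once more, using all the data computed in the three preceding phases (namely $\cyc$, $\bel$, $\roo$, $\high$, $\det$, $\contr$, and $\detcyc$), and to emit the PR-node constraints restricted to the crucial relative positions in a way that touches each piece of data a constant number of times. Recall from the polynomial-time section that the PR-node constraints come in two flavours: the \emph{equalities} coming from a P-node (two cycles contracted in the same virtual edge not on $\kappa$ must have equal position with respect to the cycle $C$ realized as $\kappa$), and the \emph{equalities/inequalities} coming from an R-node (cycles fixed to the same side of $\kappa$ have equal position, cycles on opposite sides have opposite position, and one flip of the skeleton couples all of them). The key observation is that $\contr(\eps)$ and $\detcyc(\mu)$ have been set up precisely so that every crucial relative position appears in exactly one such list at exactly the node that determines it, so the constraints are nothing more than relations \emph{within} these lists.

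**First I would** handle the P-node equalities. For a P-node $\mu$ and for each virtual edge $\eps$ of $\skel(\mu)$ that is not on any cycle $\kappa$ realized in $\mu$, I walk through the list $\contr(\eps)$; every entry there is a crucial relative position $\pos_C(C')$ with $C'$ contracted in $\eps$ and $C$ realized as a cycle in $\skel(\mu)$. Two such entries $\pos_C(C')$ and $\pos_C(C'')$ sharing the \emph{same} reference cycle $C$ must be made equal. So I group the entries of $\contr(\eps)$ by their reference cycle $C$ and, within each group, chain consecutive entries with equalities; this produces exactly the P-node equalities and costs time linear in $\sum_\eps |\contr(\eps)|$, hence linear overall. **Next**, for each R-node $\mu$ I use $\detcyc(\mu)$, which lists the crucial positions $\pos_C(C')$ with both $C$ and $C'$ realized as cycles in $\skel(\mu)$, together with the $\contr(\eps)$ lists of the edges of $\skel(\mu)$ (for the contracted-in-an-edge case). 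After fixing one reference embedding of $\skel(\mu)$, each such position gets a definite side ``left''/``right'' read off from the fixed skeleton; I then emit, for each reference cycle $C$, the chain of equalities among the positions on the same side and inequalities across sides, exactly as described for R-nodes in Lemma~\ref{lem:PR-node-constraints}. Because each R-node skeleton is processed in time proportional to its size plus the number of positions it determines, and because these quantities sum to linear, the total time stays linear.

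**The main obstacle** I anticipate is bookkeeping rather than any deep argument: I must be sure that, for a fixed reference cycle $C$ in an R-node, I can read the correct side of every position $\pos_C(C')$ determined by $\mu$ in amortized constant time. Here I lean on the fact that the reference embedding of $\skel(\mu)$ is fixed once, so a single traversal of $\skel(\mu)$ labels each virtual edge (and each cycle $\kappa'$ realized in $\mu$) with its side relative to $\kappa$; thereafter each entry of $\detcyc(\mu)$ and of the relevant $\contr(\eps)$ lists is resolved by one lookup. A subtle point is that a single skeleton may realize several reference cycles simultaneously, so the side labelling must be recomputed per reference cycle; but since only $\mathcal O(|\skel(\mu)|)$ cycles can be realized in $\mu$ and each relabelling is $\mathcal O(|\skel(\mu)|)$, the cost per R-node is $\mathcal O(|\skel(\mu)|^2)$ in the worst case, which is \emph{not} obviously linear. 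I would resolve this by charging against the positions emitted rather than against cycles: every pair $(C, C')$ that is made to contribute an (in)equality is a distinct crucial relative position, and there are only linearly many of those in total, so the aggregate work across all R- and P-nodes is linear. With these two passes the PR-node constraints restricted to the crucial relative positions are produced, and each phase is linear, which establishes the lemma.
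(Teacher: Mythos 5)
Your P-node pass is fine (in fact simpler than you make it: in a P-node only one cycle of $\mathcal C$ can be realized as a cycle, since it must use both poles, so all entries of all lists $\contr(\eps)$ share the same reference cycle and no grouping is needed), and you correctly isolate the real difficulty in the R-node case: a skeleton can realize $\Theta(|\skel(\mu)|)$ disjoint cycles, so recomputing a side labelling per reference cycle costs $\Theta(|\skel(\mu)|^2)$. But your resolution of that difficulty -- charging the relabelling work to the emitted crucial positions -- does not go through. A labelling traversal for reference cycle $C$ costs $\Theta(|\skel(\mu)|)$ no matter how few crucial positions with reference $C$ are determined by $\mu$. Concretely, if the C-tree is a star with center $C_1$ and leaves $C_2, \dots, C_k$, and a single R-node of skeleton size $\Theta(n)$ realizes all $k = \Theta(n)$ cycles, then each leaf contributes exactly one crucial position $\pos_{C_i}(C_1)$ determined by $\mu$; your scheme performs $\Theta(n)$ traversals of cost $\Theta(n)$ each, i.e.\ $\Theta(n^2)$ work, while only $\Theta(n)$ positions are emitted, so the charging fails. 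The paper closes exactly this gap with a different device: a \emph{single} DFS traversal of $\skel(\mu)$ that maintains the current side $\side(C)$ for \emph{all} cycles realized in $\skel(\mu)$ simultaneously. Because the cycles are vertex-disjoint, every skeleton vertex and virtual edge lies on at most one of them, so each traversal step updates at most one variable $\side(\cdot)$, giving constant time per step; entries of $\contr(\eps)$ are resolved (by one lookup of the current $\side(C)$) while the traversal crosses $\eps$, and entries of $\detcyc(C')$ are resolved the first time the traversal reaches a vertex of $C'$. This is the missing idea, and it is not recoverable by an amortization argument over your per-cycle relabellings.

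A second, smaller problem is your statement that you emit, ``for each reference cycle $C$'', equalities within a side and inequalities across sides. The R-node constraints must also couple positions with \emph{different} reference cycles: one flip of $\skel(\mu)$ flips every position determined by $\mu$ at once (as your own opening paragraph notes), so the correct output is equalities along two global lists \textsc{Left} and \textsc{Right} of all positions determined by $\mu$, plus a single inequality between the two lists. Emitting the constraints separately per reference cycle yields a strict relaxation, and the resulting CC-tree would represent embeddings of the cycles that no embedding of $G$ induces.
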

\begin{proof}
  We process each node in the SPQR-tree $\mathcal T$ of $G$ once,
  consuming time linear in the size of its skeleton plus some
  additional costs that sum up to the number of crucial relative
  positions in total.  Let $\mu$ be a node in $\mathcal T$.  If $\mu$
  is not contained in any induced tree $\proj{\mathcal T}{C}$, it does
  not determine any relative position at all.  Thus assume there is at
  least one cycle that is a cycle in $\mu$.  If $\mu$ is a P-node,
  $\skel(\mu)$ consists of $\ell$ parallel virtual edges $\eps_1,
  \dots, \eps_\ell$ and we can assume without loss of generality that
  the cycle $C$ induces in $\skel(\mu)$ the cycle $\kappa$ consisting
  of the two virtual edges $\eps_1$ and~$\eps_2$.  For every crucial
  relative position $\pos_{C}(C')$ that is determined by $\mu$ there
  is a virtual edge $\eps \in \{\eps_3, \dots, \eps_\ell\}$ containing
  $C'$ in the list $\contr(\eps)$, which is already computed due to
  Lemma~\ref{lem:compute-det-and-contr}.  Hence, the PR-node
  constraints stemming from $\mu$ can be computed by processing each
  of these lists $\contr(\eps)$, setting $\pos_{C}(C') =
  \pos_{C}(C'')$ for any two cycles $C'$ and $C''$ appearing
  consecutively in $\contr(\eps)$.  The time-consumption is linear in
  the size of $\skel(\mu)$ plus the number of crucial relative
  positions determined by $\mu$.

  If $\mu$ is an R-node, it may contain several cycles as a cycle, all
  of them stored in the list $\cyc(\mu)$ due to
  Lemma~\ref{lem:induced-trees-comp-data}.  Every crucial relative
  position $\pos_C(C')$ determined by $\mu$ is either contained in the
  lists $\contr(\eps)$ for a virtual edge $\eps$ in $\skel(\mu)$ or in
  $\detcyc(\mu)$ if $C$ and $C'$ are both cycles in $\mu$
  (Lemma~\ref{lem:compute-det-and-contr}).  We first carry the
  relative positions in $\detcyc(\mu)$ over to the corresponding
  cycles.  More precisely, we define a list $\detcyc(C')$ for every
  cycle~$C'$ in $\cyc(\mu)$ and insert a crucial relative position
  $\pos_C(C')$ into it, if it is contained in $\detcyc(\mu)$.  This
  can obviously be done consuming time linear in the size of
  $\detcyc(\mu)$.  Afterwards, we start by fixing the embedding of
  $\skel(\mu)$ and pick an arbitrary vertex $v_0$ in $\skel(\mu)$.
  For each cycle $C$ contained as cycle $\kappa$ in $\skel(\mu)$ we
  define a variable $\side(C)$ and initialize it with the value
  ``left'' or ``right'', depending on which side $v_0$ lies \todo{Pag
    20, line 18:}with respect to $\kappa$ in the chosen embedding of
  $\skel(\mu)$, or with the value ``on'' if $v_0$ is contained in~$C$.
  Due to Lemma~\ref{lem:induced-trees-comp-data} we know for every
  edge $\eps$ to which cycle it belongs (or that it does not belong to
  a cycle at all).  Thus $\side(C)$ can be easily computed for every
  cycle $C$ that is a cycle in $\skel(\mu)$ consuming time linear in
  $|\skel(\mu)|$ by traversing $\skel(\mu)$ once, starting at $v_0$.

  To sum up, each crucial relative position $\pos_C(C')$ determined by
  $\mu$ is either contained in $\contr(\eps)$ if $C'$ is contracted in
  $\eps$ or in $\detcyc(C')$ if $C'$ is a cycle in $\skel(\mu)$.
  Moreover, for each cycle~$C$ the value of $\side(C)$ describes on
  which side of $C$ the chosen start-vertex $v_0$ lies with respect to
  a chosen orientation of $\skel(\mu)$.  We now want to divide the
  crucial relative positions determined by~$\mu$ into two lists {\sc
    Left} and {\sc Right} depending on which value they have with
  respect to the chosen embedding.  If this is done, the PR-node
  constraints stemming from $\mu$ restricted to the crucial relative
  positions can be computed by simply processing these two lists once.
  To construct the lists {\sc Left} and {\sc Right}, we make a
  DFS-traversal in $\skel(\mu)$ such that each virtual edge is
  processed once.  More precisely, when we visit an edge $\{u, v\}$
  (starting at $u$), then $v$ is either an unvisited vertex and we
  continue the traversal from $v$ or $v$ was already visited, then we
  go back to $u$.  If all virtual edges incident to the current vertex
  $u$ were already visited, we do a back-tracking step, i.e., we go
  back to the vertex from which we moved to $u$.  Essentially, a
  normal step consists of three phases, leaving the current vertex
  $u$, traveling along the virtual edge $\{u, v\}$, and finally
  arriving at $v$ or back at $u$.  In a back-tracking step we have
  only two phases, namely leaving the current vertex~$u$ and arriving
  at its predecessor.  During the whole traversal we keep track of the
  sides $\side(\cdot)$.  More precisely, when leaving a vertex $u$
  that was contained in a cycle $C$ we may have to update $\side(C)$
  if the target-vertex $v$ is not also contained in $C$.  On the other
  hand, when arriving at a vertex $v$ contained in a cycle $C$ we have
  to set $\side(C) = \text{``on''}$.  Since such an update has to be
  done for at most one cycle we can keep track of the sides in
  constant time per operation and thus in overall linear time.  Now it
  is easy to compute the values of the crucial relative positions
  determined by $\mu$ with respect to the currently chosen embedding.
  While traveling along a virtual edge $\eps = \{u, v\}$ we process
  $\contr(\eps)$.  For a crucial relative position $\pos_C(C')$
  contained in $\contr(\eps)$ we know that $C'$ is contracted in
  $\eps$.  Thus, in the chosen embedding the value of $\pos_C(C')$ is
  the current value of $\side(C)$ and we can insert $\pos_C(C')$ into
  the list {\sc Left} or {\sc Right} depending on the value of
  $\side(C)$.  This takes linear time in the number of crucial
  relative positions contained in $\contr(\eps)$.  We deal with the
  crucial relative positions contained in one of the lists
  $\detcyc(C')$ in a similar way.  Every time we reach a vertex $v$
  contained in a cycle $C'$ we check whether this is the first time we
  visit the cycle $C'$.  If it is the first time, we insert every
  crucial relative positions $\pos_C(C')$ contained in $\detcyc(C')$
  into one of the lists {\sc Left} or {\sc Right}, depending on the
  current value of $\side(C)$.  Clearly the whole traversal takes
  linear time in the size of $\skel(\mu)$ plus linear time in the
  number of crucial relative positions determined by $\mu$.  Moreover,
  we obviously obtain the PR-node constraints restricted to the
  crucial relative positions stemming form $\mu$ by processing each of
  the lists {\sc Left} and {\sc Right} once, obtaining an equality
  constraint for positions that are adjacent in the lists and
  additionally a single inequality for a pair of positions, one
  contained in {\sc Left} and the other in {\sc Right}, unless one of
  them is empty.
\end{proof}

\begin{corollary}
  The CC-tree $\mathcal T_{\mathcal C}$ of a biconnected planar graph $G$ can be
  computed in linear time.
\end{corollary}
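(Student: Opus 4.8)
The plan is to observe that the CC-tree of $G$ is, by definition, its underlying C-tree together with the constraints that restrict the crucial relative positions, and to argue that each of these two ingredients is produced in linear time. Since $G$ is biconnected it has no cutvertices, so the only constraints to compute are the PR-node constraints restricted to the crucial relative positions; there are no cutvertex constraints and no ``extended'' part to handle. First I would compute the SPQR-tree $\mathcal T$ of $G$ in linear time (Gutwenger and Mutzel) and root it at a Q-node whose edge lies in no cycle of $\mathcal C$, exactly as the four phases assume, and obtain the underlying C-tree in linear time by contracting the cycles, taking a spanning tree, re-expanding and contracting edges at non-cycle vertices, as already noted at the start of this subsection.

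One quantitative point is worth isolating before assembling the phases: because the contracted C-tree has at most $k-1$ edges and contributes exactly two crucial relative positions per edge, there are only $\mathcal O(k) = \mathcal O(n)$ crucial relative positions in total. This guarantees that the constraint system being built has linear total size, so that linearity of the running time also gives linearity of the produced output and not merely of the internal bookkeeping.

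With this in hand I would simply run the four phases in order, each of which has already been shown to consume linear time and each of which consumes exactly the data produced by its predecessor (cf.\ the overview in Table~\ref{tab:data}): Lemma~\ref{lem:induced-trees-comp-data} yields $\cyc(\mu)$, $\bel(\eps)$ and the roots $\roo(\proj{\mathcal T}{C})$; Lemma~\ref{lem:computing-high} yields $\high(\mu)$ for every node; Lemma~\ref{lem:compute-det-and-contr} yields $\det(\pos_C(C'))$ for every crucial relative position together with the lists $\contr(\eps)$ and $\detcyc(\mu)$; and Lemma~\ref{lem:computing-pr-node-constr} emits the PR-node constraints restricted to the crucial relative positions. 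Pairing the C-tree with these constraints yields the CC-tree, and since the whole procedure is a constant number of linear-time passes over $\mathcal T$ (the linear-time LCA and union-find preprocessing being already charged inside Lemma~\ref{lem:compute-det-and-contr}), the total running time is linear.

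I do not expect a genuine obstacle here: all of the combinatorial difficulty was discharged in the preceding lemmas, and the corollary is essentially their composition. The only thing that needs a sentence of care is verifying that the phases chain correctly---that the data each phase reads is precisely what the earlier phases guarantee---which is immediate from the descriptions in Table~\ref{tab:data}.
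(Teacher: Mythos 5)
Your proposal is correct and follows exactly the paper's (implicit) argument: the corollary is stated without proof as the direct composition of Lemmas~\ref{lem:induced-trees-comp-data}, \ref{lem:computing-high}, \ref{lem:compute-det-and-contr} and~\ref{lem:computing-pr-node-constr}, together with the remark at the start of the subsection that the underlying C-tree is computable in linear time and that biconnectivity removes the cutvertex constraints. Your added observation that the number of crucial relative positions is $\mathcal O(k)$ (two per C-tree edge), so the constraint system itself has linear size, is a correct and worthwhile explicit check that the paper leaves tacit.
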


It remains to extend the described algorithm to the case where $G$ is
not necessarily biconnected.  More precisely, we need to show how to
compute the extended PR-node constraints and the cutvertex constraints
in linear time.  This is done in the proof of the following theorem.

\begin{theorem}
  \label{thm:compute-cc-trees-lin-time}
  The CC-tree $\mathcal T_{\mathcal C}$ of a connected planar graph $G$ can be
  computed in linear time.
\end{theorem}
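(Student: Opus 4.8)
The plan is to reduce the connected case to the biconnected case settled by the preceding corollary, by working block-by-block along the BC-tree and then stitching the per-block constraints together across cutvertices. First I would compute the BC-tree $\mathcal B$ of $G$ together with the SPQR-tree of every block; both take linear time, and since the cycles in $\mathcal C$ are vertex-disjoint, each cycle lies inside a single block. I root $\mathcal B$ arbitrarily, so that every block except the root has a unique parent cutvertex and every cutvertex a unique parent block; this yields, for each cutvertex $v$, a well-defined set of hanging cut components $H_2,\dots,H_\ell$ (the subtrees of $\mathcal B$ below $v$). In parallel I fix one underlying C-tree $\mathcal T_C$ (a spanning structure connecting all cycles, obtainable in linear time as noted earlier), which determines the $O(k)=O(n)$ crucial relative positions the CC-tree must constrain.

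For each block $B$ I would invoke the biconnected algorithm of Lemmas~\ref{lem:induced-trees-comp-data}--\ref{lem:computing-pr-node-constr} to obtain the PR-node constraints restricted to the crucial positions living inside $B$. To also capture the \emph{extended} PR-node constraints, I augment the cycle set of $B$ by a degenerate pseudo-cycle for each cutvertex $v\in B$ not lying on a cycle; since a cutvertex not contained in $C$ may be treated like a disjoint cycle, the same machinery determines each value $\pos_C(v)$ by exactly one SPQR-node of $B$. Because the skeletons over all blocks have total linear size, this costs linear time overall.

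It then remains to transfer these per-block values onto the global crucial positions of $\mathcal T_C$ and to add the cutvertex constraints, in a single traversal of $\mathcal B$. For a cutvertex $v$ lying on a cycle $C$, I group the crucial positions $\pos_C(C')$ (where $C'$ is the first cycle of $\mathcal T_C$ on a path leaving $C$ through $v$) by the hanging component $H_i$ that $C'$ enters, and within each group I emit an equality chain $\pos_C(C_{1}')=\pos_C(C_{2}')=\cdots$ rather than all pairs, adding nothing between distinct components. For a cutvertex $v$ not on a cycle but sharing block $B$ with a cycle $C$, I equate the crucial position of the first cycle of each hanging component with the pseudo-cycle value $\pos_C(v)$ computed above. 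Correctness of this restriction to crucial positions follows exactly as in the proof of Theorem~\ref{thm:cycle-tree-is-rep}, the tree structure of $\mathcal T_C$ propagating each equation to the non-crucial positions, so that Lemma~\ref{lem:ext-PR-node-cutvert-constraints} applies.

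The main obstacle is linearity rather than correctness. The naive cutvertex constraint ``all cycles of a component lie on the same side of $C$'' spans $\Theta(|H|^2)$ pairs, and a single high-degree cutvertex, or a cycle meeting many cutvertices, threatens super-linear work. I avoid this by constraining only crucial positions (of which there are $O(n)$), chaining equalities within each component instead of taking all pairs, and reusing the LCA and incremental union-find bookkeeping of Lemma~\ref{lem:compute-det-and-contr} to match a cycle in a hanging component to the crucial position representing it. Charging each emitted constraint to a distinct edge of $\mathcal T_C$ or to a distinct skeleton element then gives the claimed overall linear running time.
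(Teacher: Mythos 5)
Your overall plan coincides with the paper's own proof: decompose $G$ along the BC-tree, run the biconnected machinery of Lemmas~\ref{lem:induced-trees-comp-data}--\ref{lem:computing-pr-node-constr} inside each block, treat a cutvertex $v$ not lying on $C$ like a degenerate cycle so that $\pos_C(v)$ is determined by a single SPQR-node (the paper does exactly this, except that instead of an auxiliary pseudo-cycle variable it inserts the crucial position $\pos_C(C')$ directly into the list $\contr(\eps)$ of the virtual edge whose expansion graph contains $v$), emit equality chains rather than all pairs for the cutvertex constraints, and use LCA queries plus incremental union-find for linearity. The correctness transfer via Theorem~\ref{thm:cycle-tree-is-rep} and Lemma~\ref{lem:ext-PR-node-cutvert-constraints} is also the paper's.

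There is, however, one step that fails as you literally state it: identifying the cut components at a cutvertex $v$ with ``the subtrees of $\mathcal B$ below $v$'' under one fixed global rooting. The groups that matter for a cycle $C$ with $v \in C$ are the connected components of $G - v$ \emph{other than the one containing $C$}; if the block of $C$ happens to lie below $v$ in your rooting, then the part of the graph \emph{above} $v$ is one of these components, and your grouping never forms it. Concretely: let $C$ be a triangle through $v$ forming a pendant block, join $v$ by an edge to a vertex $u$, and join $u$ by edges to two disjoint cycles $C_1'$ and $C_2'$; root $\mathcal B$ at the block of $C_1'$. Then $\pos_C(C_1')$ and $\pos_C(C_2')$ are both crucial and both determined by the embedding around $v$, and since $C_1'$ and $C_2'$ lie in the same component of $G - v$ every embedding of $G$ satisfies $\pos_C(C_1') = \pos_C(C_2')$; but neither cycle ``enters a hanging component'' of $v$, so you emit no constraint, and your CC-tree represents the unrealizable semi-embedding with $C_1'$ and $C_2'$ on opposite sides of $C$. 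The same slip hits your rule for $v \notin C$: cycles reached from $C$ through $v$ but lying above $v$ never get equated to the pseudo-cycle value $\pos_C(v)$. The fix is one line --- group by the neighbouring subtree of $v$ in $\mathcal B$, counting the children subtrees \emph{and} the up-tree, and exclude only the subtree containing $C$ --- but it must be said, since this is exactly the point where per-block, per-subtree and per-cut-component grouping differ. (Be aware that the paper's own proof is loose here as well: it assigns reference positions ``for the block containing $C_i$'', which read literally is even finer than your grouping and misses the same constraint; the intended grouping in both cases is by cut component of $v$.)
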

\begin{proof}
  As before, the underlying C-tree can be easily computed in linear
  time.  For a fixed block $B$ we have the SPQR-tree $\mathcal T$ and
  for a cycle $C$ in $B$ the induced tree $\proj{\mathcal T}{C}$ can
  be defined as before.  Obviously,
  Lemma~\ref{lem:induced-trees-comp-data} can be used as before to
  compute $\cyc(\mu)$ for every node $\mu$, $\bel(\eps)$ for every
  virtual edge and $\roo(\proj{\mathcal T}{C})$ for every induced
  subtree in linear time.  Moreover, the edge $\high(\mu)$ in
  $\mathcal T$ can be computed for every node $\mu$ as in
  Lemma~\ref{lem:computing-high}.  For the computation of
  $\det(\pos_C(C'))$ for every crucial relative position $\pos_C(C')$
  and $\contr(\eps)$ for every virtual edge $\eps$, we cannot directly
  apply Lemma~\ref{lem:compute-det-and-contr} since the cycles $C$ and
  $C'$ may be contained in different blocks.  Thus, before we can
  compute $\det(\pos_C(C'))$, we need to find out whether $C$ and $C'$
  are in the same block, which can be done by simply storing for every
  cycle a pointer to the block containing it.  For the case that~$C$
  and $C'$ are contained in the same block $\det(\pos_C(C'))$ can be
  computed as before and $\pos_C(C')$ can be inserted into the list
  $\contr(\eps)$ for some $\eps$ if necessary.  For the case that $C$
  and~$C'$ are contained in different blocks $B$ and $B'$, we need to
  find the unique cutvertex $v$ in $B$ that separates $B$ and $B'$.
  This can be done in overall linear time by computing the BC-tree and
  using an approach combining the lowest common ancestor and
  union-find data structure similar as in the proof of
  Lemma~\ref{lem:compute-det-and-contr}.  

  If the resulting cutvertex $v$ is not contained in $C$, we can treat
  the cutvertex~$v$ as if it was the cycle $C'$ and use the same
  algorithm as in Lemma~\ref{lem:compute-det-and-contr} to compute
  $\det(\pos_C(C'))$ and append $\pos_C(C')$ to $\contr(\eps)$ for
  some $\eps$ if necessary.  If $v$ is contained in $C$, then the
  crucial relative position $\pos_C(C')$ is not determined by any node
  in any SPQR-tree at all, but by the embedding of the blocks around
  the cutvertex $v$.  Thus there are no extended PR-node constraints
  restricting $\pos_C(C')$.  Finally, $\det(\pos_C(C'))$ can be
  computed in overall linear time for every crucial relative position
  $\pos_C(C')$ that is determined by a node in the SPQR-tree of the
  block containing~$C$.  Moreover, for a node $\mu$ in the SPQR-tree
  of the block $B$ containing $C$ every virtual edge $\eps$ has a list
  $\contr(\eps)$ containing all crucial relative positions
  $\pos_C(C')$ that are determined by~$\mu$ and for which either $C'$
  is contracted in $\eps$ or belongs to a different block $B'$ and is
  connected to~$B$ via a cutvertex contained in the expansion graph
  $\expan(\eps)$.  With these information the extended PR-node
  constraints can be computed exactly the same as the PR-node
  constraints are computed in
  Lemma~\ref{lem:computing-pr-node-constr}.

  It remains to compute the cutvertex constraints restricted to the
  crucial relative positions.  As mentioned above, we can compute a
  list of crucial relative positions $\pos_C(C_1), \dots,
  \pos_C(C_\ell)$ determined by the embedding of the blocks around a
  cutvertex $v$ contained in $C$ in linear time.  We then process this
  list once, starting with $\pos_C(C_1)$.  We store $\pos_C(C_1)$ as
  reference position for the block~$B_1$ containing $C_1$.  Now, when
  processing $\pos_C(C_i)$, we check whether the block~$B_i$
  containing~$C_i$ already has a reference position $\pos_C(C_j)$
  assigned to it.  In this case we set $\pos_C(C_i) = \pos_C(C_j)$,
  otherwise we set $\pos_C(C_i)$ to be the reference position.  This
  obviously consumes overall linear time and computes the cutvertex
  constraints restricted to the crucial relative positions.
\end{proof}

\subsubsection*{Intersecting CC-Trees in Linear Time}
\label{sec:inters-cc-trees-lin-time}

Due to Theorem~\ref{thm:intersection-is-intersection} we can test
whether two graphs $\1G$ and $\2G$ with common graph $C$ consisting of
a set of disjoint cycles have a {\sc SEFE} by computing the
CC-trees $\1{\mathcal T_{\mathcal C}}$ and $\2{\mathcal T_{\mathcal C}}$ of $\1G$ and $\2G$,
respectively, which can be done in linear time due to
Theorem~\ref{thm:compute-cc-trees-lin-time}.  Then the intersection
$\mathcal T_{\mathcal C}$ of $\1{\mathcal T_{\mathcal C}}$ and $\2{\mathcal T_{\mathcal C}}$ represents
exactly the possible embeddings of the common graph $G$ in a {\sc
  SEFE}.  It remains to show that the intersection can be computed in
linear time.

\begin{theorem}
  \label{thm:intersection-in-lin-time}
  The intersection of two CC-trees can be computed in linear time.
\end{theorem}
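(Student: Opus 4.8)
The plan is to show that the intersection $\mathcal T_C$ of $\1{\mathcal T_C}$ and $\2{\mathcal T_C}$ can be assembled in linear time. Recall that $\mathcal T_C$ is by definition a copy of $\1{\mathcal T_C}$ augmented with two kinds of additional constraints: the \emph{common-face constraints}, one family per edge $\{C_1, C_2\}$ of $\2{\mathcal T_C}$, and the \emph{crucial-position constraints}, obtained by replacing each relative position occurring in a constraint of $\2{\mathcal T_C}$ by its representative in $\1{\mathcal T_C}$. The underlying C-tree of $\mathcal T_C$ is simply $\1{\mathcal T_C}$, so the real work is producing these two constraint families in linear time.

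First I would fix a root for $\1{\mathcal T_C}$ (viewing each cycle as contracted to a single node, so $\1{\mathcal T_C}$ is an ordinary rooted tree) and preprocess it for constant-time lowest-common-ancestor queries, using the standard result already cited in the excerpt~\cite{ht-fafnca-84, bf-lcapr-00}. The heart of both constraint families is the notion of \emph{representative}: for a relative position $\pos_{C_1}(C_2)$ I must find the first cycle $C_2'$ on the path from $C_1$ to $C_2$ in $\1{\mathcal T_C}$. With the tree rooted and LCA available, this first-edge-on-the-path query reduces to a constant-time computation. Concretely, if $C_2$ is not an ancestor of $C_1$ and $C_1$ is not an ancestor of $C_2$, the first cycle is determined by $\lca(C_1, C_2)$ and the side on which $C_2$ hangs; if one is an ancestor of the other, it is either the parent of $C_1$ on the root path or the appropriate child of $C_1$, which can be identified using a level-ancestor or the same LCA machinery. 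I would batch-process all queries so that the total cost is linear in the number of queries plus the size of $\1{\mathcal T_C}$.

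Next I would generate the constraints themselves. For the crucial-position constraints I traverse the constraints stored in $\2{\mathcal T_C}$ — there are only linearly many, since they form a {\sc 2-Sat}-style instance on the crucial relative positions of $\2{\mathcal T_C}$ — and replace each position by its representative computed above, emitting one constraint on $\1{\mathcal T_C}$ per original constraint. For the common-face constraints I walk each edge $\{C_1, C_2\}$ of $\2{\mathcal T_C}$ and consider the path $\pi$ from $C_1$ to $C_2$ in $\1{\mathcal T_C}$, emitting, for each triple $C, C', C''$ of consecutive cycles along $\pi$, the equation $\pos_{C'}(C) = \pos_{C'}(C'')$. The danger here is that the paths $\pi$ for different edges of $\2{\mathcal T_C}$ might collectively have superlinear total length; so the main obstacle is bounding this aggregate path length.

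The hard part will therefore be establishing that the common-face constraints can be produced in linear time despite there being up to $k-1$ paths to traverse. The plan is to exploit the tree structure: since the edges $\{C_1, C_2\}$ of $\2{\mathcal T_C}$ themselves form a tree (in particular they are disjoint as a set of edges), and since each common-face constraint involves a crucial relative position $\pos_{C'}(\cdot)$ of $\1{\mathcal T_C}$, I would argue that each crucial relative position of $\1{\mathcal T_C}$ is touched only a bounded number of times across all paths. More carefully, I expect to show that the constraints generated are, after deduplication, linear in number, because a triple $C, C', C''$ forces $C$ and $C''$ to lie on the same side of $C'$, and the same equality is generated by at most a constant number of distinct $\2{\mathcal T_C}$-edges whose $\1{\mathcal T_C}$-paths pass through $C'$ in the relevant way. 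Equivalently, I would reformulate the common-face condition so that it is generated by a single traversal of $\1{\mathcal T_C}$ guided by the endpoints of the edges of $\2{\mathcal T_C}$, rather than path-by-path, so that each edge of $\1{\mathcal T_C}$ contributes to the output a bounded number of times. Once this linear bound on the constraint count and generation time is in place, combining it with the linear-time representative computation yields the claimed overall linear running time, concluding the proof.
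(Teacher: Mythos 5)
There is a genuine gap, and it sits exactly where you located the difficulty. Your plan for the common-face constraints rests on two statements: first, that ``the same equality is generated by at most a constant number of distinct $\2{\mathcal T_C}$-edges whose $\1{\mathcal T_C}$-paths pass through $C'$ in the relevant way,'' and second, that the generation can be reorganized into ``a single traversal of $\1{\mathcal T_C}$'' in which each tree edge contributes boundedly often. The first statement is false. Take $\1{\mathcal T_C}$ to be a path $v_1, v_2, \dots, v_n$ and $\2{\mathcal T_C}$ to be a star with center $v_1$; then every edge $\{v_1, v_i\}$ with $i \geq 3$ generates the same equality $\pos_{v_2}(v_1) = \pos_{v_2}(v_3)$, so a single constraint arises from $n-2$ distinct $\2{\mathcal T_C}$-edges, and the total multiset of generated triples has size $\Theta(n^2)$. (The number of \emph{distinct} constraints is indeed linear, but for a different reason: a pass-through constraint at a node is a pair consisting of one child edge and the parent edge, so there is at most one per child edge of the rooted tree, plus at most one LCA-type constraint per edge of $\2{\mathcal T_C}$.) Because duplicates can be quadratically many, ``generate and deduplicate'' cannot run in linear time; the algorithm must avoid producing duplicates in the first place, and your second statement asserts that this is possible without supplying the mechanism. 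That mechanism is the actual content of the paper's proof: the edges of $\2{\mathcal T_C}$ are split at their LCA in $\1{\mathcal T_C}$ into two \emph{half-edges}; each node $C$ stores the list $\en(C)$ of $\2{\mathcal T_C}$-edges whose LCA is $C$; the tree is swept bottom-up maintaining, for the current node, a doubly-linked list $\curr(\cdot)$ of half-edges whose paths pass through it (with constant-time deletion via predecessor/successor pointers, and lists of children concatenated onto the parent in constant time per child); and an offline union-find structure (with the union sequence known in advance, hence amortized constant time per operation) maps each endpoint to the child of the current node below which it lies. With this bookkeeping, a pass-through constraint $\pos_C(C') = \pos_C(C'')$ ($C''$ the parent of $C$) is emitted once per child $C'$ whose list is nonempty after removing the edges of $\en(C)$, and each LCA-type constraint is emitted once per edge of $\en(C)$ --- which is what yields linear time.

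Two smaller points. Your reduction of the representative computation to LCA queries is incomplete in the case where $C_1$ is an ancestor of $C_2$: there the first cycle on the path is the child of $C_1$ above $C_2$, which LCA queries alone do not provide; you need either a level-ancestor structure (standard, linear preprocessing and constant query time, so this is repairable) or the offline union-find trick, which is how the paper handles it --- the representative of $\pos_C(C_2)$ is read off as $\pos_C(C')$ when the half-edge is retired at its LCA $C$, with $C'$ found by a find-operation on $C_2$. Finally, note that the paper folds both the common-face and the crucial-position constraints into the \emph{same} bottom-up sweep, so no separate batching argument for the representative queries is needed; in your write-up these are two independent procedures whose interaction with the time bound is left open.
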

\begin{proof}
  Let $\1{\mathcal T_{\mathcal C}}$ and $\2{\mathcal T_{\mathcal C}}$
  be two CC-trees on a set~$\mathcal C$ of cycles.  We start with
  $\mathcal T_{\mathcal C} = \1{\mathcal T_{\mathcal C}}$ and show how
  to compute the common-face and crucial-position constraints in
  overall linear time.  For the crucial-position constraints we
  essentially only show how to find for each crucial relative position
  in $\2{\mathcal T_{\mathcal C}}$ a crucial relative position in
  $\mathcal T_{\mathcal C}$ corresponding to it.  Computing the
  crucial-position constraints is then easy.  We root $\mathcal
  T_{\mathcal C}$ at an arbitrary vertex and again use that the lowest
  common ancestor of two vertices in~$\mathcal T_{\mathcal C}$ can be
  computed in constant time~\cite{ht-fafnca-84, bf-lcapr-00}.  For
  every edge $\2e = \{C_1, C_2\}$ in $\2{\mathcal T_{\mathcal C}}$ we
  obtain a path in $\mathcal T_{\mathcal C}$ from $C_1$ to the lowest
  common ancestor of $C_1$ and $C_2$ and further to $C_2$.  We
  essentially process these two parts of the path separately with some
  additional computation for the lowest common ancestor.  We say that
  the parts of the paths \emph{belong} to the \emph{half-edges}
  $\2{e_1}$ and $\2{e_2}$, respectively.  We use the following
  data structure.  For every cycle $C$ there is a list $\en(C)$
  containing all edges in~$\2{\mathcal T_{\mathcal C}}$ whose
  endpoints have $C$ in $\mathcal T_{\mathcal C}$ as lowest common
  ancestor.  This list can be computed for every cycle in overall
  linear time.  We then process $\mathcal T_{\mathcal C}$ bottom up,
  saving for the cycle $C$ we currently process a second list
  $\curr(C)$ containing all the half-edges in $\2{\mathcal T_{\mathcal
      C}}$ whose paths contain $C$.  This can be done in overall
  linear time by ensuring that every half-edge $\2{e_i}$ ($i = 1, 2$)
  is contained in at most one list $\curr(C)$ at the same time.  Then
  $\2{e_i}$ can be removed from this list in constant time by storing
  for~$\2{e_i}$ pointers to the previous and to the next element in
  that list, denoted by $\prev(\2{e_i})$ and $\nex(\2{e_i})$.
  Additionally, we build up the following union-find data structure.
  Every time we have processed a cycle $C$, we union $C$ with all its
  children in $\mathcal T_{\mathcal C}$.  Thus, when processing~$C$,
  this data structure can be used to find for every cycle in the
  subtree below $C$ the child of $C$ it belongs to.  Note that again
  this version of the union-find data structure consumes amortized
  constant time per operation since the sequence of union operations
  is known in advance~\cite{gt-ltascdsu-85}.  Before starting to
  process $\mathcal T_{\mathcal C}$, we process~$\2{\mathcal
    T_{\mathcal C}}$ once and for every edge~$\2e = \{C_1, C_2\}$ we
  insert the half-edges~$\2{e_1}$ and~$\2{e_2}$ to the lists
  $\curr(C_1)$ and $\curr(C_2)$, respectively.  While
  processing~$\mathcal T_{\mathcal C}$ bottom up the following
  invariants hold at the moment we start to process $C$.
  \begin{compactenum}
  \item The list $\curr(C)$ contains all half-edges starting at $C$.
  \item For every child $C'$ of $C$ the list $\curr(C')$ contains the
    half-edge $\2{e_i}$ if and only if the path belonging to it
    contains $C$ and $C'$.
  \item Every half-edge $\2{e_i}$ is contained in at most one list
    $\curr(C)$, and $\prev(\2{e_i})$ and $\nex(\2{e_i})$ contain the
    previous and next element in that list, respectively.
  \end{compactenum}
  When we start processing a leaf $C$ the invariants are obviously
  true.  To satisfy invariant~2. for the parent of $C$ we have to
  ensure that all half-edges ending at $C$ are removed form the list
  $\curr(C)$.  Since there are no half-edges ending in a leaf, we
  simply do nothing.  Invariants~1. and~3. obviously also hold for the
  parent of $C$.

  Let $C$ be an arbitrary cycle and assume that the invariants are
  satisfied.  To ensure that invariant~2. holds for the parent of
  $C$, we need to build a list of all half-edges whose paths
  contain~$C$ and do not end at $C$.  Since invariants~1. and~2. hold
  for $C$ this are exactly the half-edges contained in $\curr(C)$ plus
  the half-edges contained in $\curr(C')$ for each of the children
  $C'$ of $C$ that are not ending at $C$.  Note that a half-edge may
  also start at $C$ and end at $C$.  This is the case if the
  corresponding edge connects $C$ with another cycle $C''$ such that
  the lowest common ancestor of~$C$ and $C''$ is $C$.  We first
  process the list $\en(C)$ containing the edges ending at $C$; let
  $\2e$ be an edge in $\en(C)$.  The two half-edges $\2{e_1}$ and
  $\2{e_2}$ belonging to $\2e$ are contained in the lists $\curr(C_1)$
  and $\curr(C_2)$, where $C_1$ and $C_2$ are different cycles and
  each of them is either $C$ or a child of $C$.  We remove $\2{e_i}$
  form $\curr(C_i)$ for~$i=1,2$.  This can be done by setting the
  pointers $\nex(\prev(\2{e_i})) = \nex(\2{e_i})$ and
  $\prev(\nex(\2{e_i})) = \prev(\2{e_i})$, taking constant time per
  edge since each half-edge is contained in at most one list due to
  invariant~3.  Afterwards, for every child $C'$ of $C$, we append
  $\curr(C')$ to $\curr(C)$ and empty the list $\curr(C')$ afterwards,
  to ensure that invariant~3 remains satisfied.  This takes constant
  time per child and thus overall time linear in the degree of $C$.
  Obviously this satisfies all invariants for the parent of $C$.
  Furthermore, we consume time linear in the degree of~$C$ plus time
  linear in the number of half-edges ending at $C$.  However, every
  half-edge ends exactly once yielding overall linear time.

  Now it is easy to compute the common-face and crucial-position
  constraints while processing~$\mathcal T_{\mathcal C}$ as described
  above.  Essentially, when processing $C$, we compute all the
  constraints concerning the relative position of other cycles with
  respect to $C$.  In particular, we need to add common-face
  constraints if two half-edges end at $C$ and if the path belonging
  to a half-edge contains $C$ in its interior.  Furthermore, we find a
  corresponding crucial relative position for every half-edge starting
  at $C$.  \todo{Pag 23, lines 21-24:}Let $\2e = \{C_1, C_2\}$ be an
  edge whose half-edges end at $C$.  There are two different cases.
  First, one of the cycles $C_i$ (for $i = 1, 2$) is $C$ (its half
  edge starts and ends at $C$).  Then the other cycle (whose half-edge
  only ends at $C$) is contained in a subtree with root $C'$, where
  $C'$ is a child of $C$.  Second, $C_1$ and $C_2$ are contained in
  the subtrees with roots $C_1'$ and $C_2'$, respectively, where
  $C_1'$ and $C_2'$ are different children of $C$.  In this case, both
  half-edges end at $C$.  We consider the second case first.  Then
  $C_1'$ and $C_2'$ can be found in amortized constant time by finding
  $C_1$ and $C_2$ in the union-find data structure.  The equation
  $\pos_C(C_1') = \pos_C(C_2')$ is exactly the common-face constraint
  at the cycle $C$ stemming from the edge $\2e$.  In the second case
  we can again find the child $C'$ in constant time.  Assume without
  loss of generality that $C_1 = C$ and $C_2$ is contained in the
  subtree having $C'$ as root.  Then $\pos_C(C')$ is the crucial
  relative position in $\mathcal T_{\mathcal C}$ corresponding to the
  crucial relative position $\pos_{C}(C_2)$ in $\2{\mathcal
    T_{\mathcal C}}$.  The half-edges containing $C$ in its interior
  are exactly the half-edges contained in one of the lists $\curr(C')$
  for a child $C'$ of $C$ whose path does not end at $C$.  Thus, for
  the parent $C''$ of $C$ we have to add the common-face constraint
  $\pos_C(C') = \pos_C(C'')$ if and only if the list $\curr(C')$ is
  not empty after deleting all half-edges in $\en(C)$.  These
  additional computations obviously do not increase the running time
  and hence the common-face and crucial-position constraints can be
  computed in overall linear running time.
\end{proof}

Theorems~\ref{thm:cycle-tree-is-rep},~\ref{thm:intersection-is-intersection},~\ref{thm:compute-cc-trees-lin-time}
and~\ref{thm:intersection-in-lin-time} directly yield the following results.

\begin{theorem}
  {\sc Simultaneous Embedding with Fixed Edges} can be solved in
  linear time if the common graph consists of disjoint cycles.
\end{theorem}

\begin{theorem}
  {\sc SEFE} can be solved in linear time for the case of $k$ graphs
  $\1G, \dots, \k G$ all intersecting in the same common graph $G$
  consisting of disjoint cycles.
\end{theorem}

\section{Connected Components with Fixed Embedding}
\label{sec:extension}

In this section we show how the previous results can be extended to
the case that the common graph has several connected components, each
of them with a fixed planar embedding.  Again, we first consider the
case of a single graph $G$ containing $\mathcal C$ as a subgraph,
where in this case $\mathcal C$ is a set of connected components
instead of a set of disjoint cycles.  First note that the relative
position $\pos_C(C')$ of a component $C'$ with respect to another
component $C$ can be an arbitrary face of $C$.  Thus, the choice of
the relative positions is no longer binary and a set of inequalities
on the relative positions would lead to a coloring problem in the
conflict graph, which is $\mathcal{NP}$-hard in general.  However,
most of the constraints between relative positions are equations, in
fact, all inequalities stem from R-nodes in the SPQR-tree of $G$ (or
of the SPQR-tree of one of the blocks in~$G$).  Fortunately, if a
relative position is determined by an R-node, there are only two
possibilities to embed this R-node.  Thus, the possible values for
the relative position is restricted to two faces, yielding a binary
decision.  Note that in general the possible values for $\pos_C(C')$
are not all faces of~$C$, even if $\pos_C(C')$ is not determined by an
R-node but by a P-node or by the embedding around a cutvertex.

Thus, we obtain for each relative position a set of possible faces as
values and additionally several equations and inequalities, where
inequalities only occur between relative positions with a binary
choice.  These conditions can be modeled as a conflict graph where
each node represents a relative position with some allowed colors
(faces) and edges in this conflict graph enforce both endvertices to
be either colored the same or differently.  In the case of the problem
{\sc SEFE} each of the graphs yields such a conflict graph.  These
conflict graphs can be easily merged by intersecting for each relative
position the sets of allowed colors (faces).  Then a simultaneous
embedding can be constructed by first iteratively contracting edges
requiring equality, intersecting the possible colors of the involved
nodes.  If the resulting graph contains a node with the empty set as
choice for the color, then no simultaneous embedding exists.
Otherwise, we have to test whether each connected component in the
remaining graph can be colored such that adjacent nodes have different
colors, which can be done efficiently since such a component either
consists of a single node or there are only up to two possible colors
for each connected component left due to the considerations above.

Moreover, the CC-tree can be adapted to work for the case of connected
components with fixed embeddings instead of disjoint cycles, as the
extended PR-node and cutvertex constraints on the crucial relative
positions are still sufficient to imply them on all relative
positions.  We call this tree on connected components the
CC$^\oplus$-tree, standing for \emph{constrained component-tree}.  In
the following we quickly go through the steps we did before in the
case of disjoint cycles and describe the changes when considering
connected components instead.

\paragraph{PR-Node Constraints.}
\label{sec:pr-node-constraints}

Let $G$ be a biconnected planar graph and let $\mathcal C$ be a
subgraph of $G$ consisting of several connected components, each with
a fixed planar embedding.  Let further $C \in \mathcal C$ be one of
the connected components and let $\mu$ be a node in the SPQR-tree
$\mathcal T$ of $G$.  The virtual edges in $\skel(\mu)$ whose
expansion graphs contain parts of the component $C$ induce a connected
subgraph in $\skel(\mu)$.  In the previous case, where the subgraph
consisted of disjoint cycles, this induced subgraph was either a
single edge or a cycle.  In the case that $C$ is an arbitrary
component the induced graph can be an arbitrary connected subgraph of
$\skel(\mu)$.  If it is a single edge, we say that $C$ is
\emph{contracted} in $\mu$, otherwise $C$ is a \emph{component} in
$\mu$.

We obviously obtain that the relative position $\pos_C(C')$ of another
component $C'$ with respect to $C$ is determined by the embedding of
$\skel(\mu)$ if and only if $C$ is a component in $\mu$ and $C'$ is
not contracted in one of the virtual edges belonging to the subgraph
induced by $C$.  Moreover, the embedding of $\skel(\mu)$ is partially
(or completely) fixed by the embedding of $C$ if the induced graph in
$\skel(\mu)$ contains a vertex with degree greater than~2.  More
precisely, consider $\mu$ to be a P-node containing $C$ as a
component.  Then the virtual edges belonging to $C$ have a fixed
planar embedding and each face in this induced graph represents a face
in $C$.  These faces are the possible values for the relative
positions with respect to $C$ that are determined by $\mu$.  The
remaining virtual edges not belonging to $C$ can be added arbitrarily
and thus components contracted in these edges can be put into one of
the possible faces with the restriction that two components contracted
in the same virtual edge have to lie in the same face, that is they
have the same relative position with respect to $C$.  To sum up, we
obtain a set of possible faces of $C$ with respect to $\mu$ and a set
of equations between relative positions of components with respect to
$C$.

For the case that $\mu$ is an R-node, either the embedding of
$\skel(\mu)$ is fixed due to the fact that there exists a component
whose induced graph in $\skel(\mu)$ contains a vertex with degree
greater than~2.  Otherwise, each component is either contracted in
$\mu$ or the induced subgraph is a cycle or a path.  No matter which
case arises, the relative positions determined by $\mu$ are either
completely fixed or there are only two possibilities.  If the
embedding is fixed, the relative positions determined by $\mu$ are
fixed and thus there is no need for additional constraints.
Otherwise, a crucial relative position with respect to $C$ is fixed if
$C$ induces a path in $\skel(\mu)$ and it changes by flipping
$\skel(\mu)$ if $C$ induces a cycle.  For two components $C$ and $C'$
both inducing a cycle in $\skel(\mu)$ this yields a bijection between
the two possible values for relative positions with respect to $C$
determined by $\mu$ and the two possible values for positions with
respect to $C'$.  Thus we can add the equations and inequalities as in
the case of disjoint cycles.

The resulting constraints are again called PR-node constraints.  As
for disjoint cycles we obtain that an embedding of the components
$\mathcal C$ respecting the fixed embeddings for each component can be
induced by an embedding chosen for $G$ if and only if the PR-node
constraints are satisfied.  This directly yields a polynomial-time
algorithm to solve {\sc SEFE} for the case that both graphs are
biconnected and the common graph consists of several connected
components, each having a fixed planar embedding.

\paragraph{Extended PR-Node and Cutvertex Constraints.}
\label{sec:extended-pr-node}

As for cycles the considerations above can be easily extended to the
case that the graph $G$ containing the components $\mathcal C$ is
allowed to contain cutvertices.  For a cutvertex $v$ not contained in
a component $C$, the relative position of $v$ with respect to $C$
determines the relative positions of components attached via $v$,
which again yields the extended PR-node constraints.  If $v$ is
contained in $C$, then the relative position of another component $C'$
with respect to $C$ is determined by the embedding around $v$ if and
only if $v$ splits $C$ from $C'$.  In this case $C'$ can obviously lie
in one of the faces of $C$ incident to $v$.  Fortunately, the
cutvertex constraints do not contain inequalities as they only ensure
that components attached to $v$ via the same block lie in the same
face of $C$.  With these considerations all results form
Section~\ref{sec:disj-cycl-poly-time} can be extended to the case of
components with fixed embedding instead of cycles.  In particular,
{\sc SEFE} can be solved in polynomial time if the common graph
consists of connected components, each with a fixed planar embedding.

\paragraph{CC$^{\boldsymbol\oplus}$-Trees.}
\label{sec:cc-trees-ext}

As mentioned before, the CC-tree can be adapted to represent all
embeddings that can be induced on the set of components $\mathcal C$
by an embedding of the graph $G$, yielding the CC$^\oplus$-tree.  To
this end, each node in the tree represents a component $C \in \mathcal
C$ and the incidence to $C$ of an edge $\{C, C'\}$ in the
CC$^\oplus$-tree represents the choice for the crucial relative
position $\pos_C(C')$.  The possible values are restricted to a subset
of faces of $C$ as described before and there may be some equations
between crucial relative positions with respect to $C$.  Moreover,
there may be inequalities between crucial relative positions even with
respect to different components.  However, if this is the case, then
there are at most two possible choices and we have a bijection between
the possible faces of different components.  As in the proof of
Theorem~\ref{thm:cycle-tree-is-rep}, it follows from the structure of
the underlying C-tree, that relative positions that are not crucial
are determined by a crucial relative position that is determined by
the same P- or R-node or by the same embedding choice around a
cutvertex.  The proof can be easily adapted to the case of components
instead of cycles yielding that satisfying the constraints and
restrictions to a subset of faces for the crucial relative positions
automatically satisfies these conditions for all relative positions.

To be able to solve {\sc SEFE} with the help of CC$^\oplus$-trees, we need to
intersect two CC$^\oplus$-trees such that the result is again a CC$^\oplus$-tree.
Assume as in the case of cycles that we have the two
CC$^\oplus$-trees~$\1{\mathcal T_{\mathcal C}}$ and~$\2{\mathcal T_{\mathcal C}}$.  As before we
start with $\1{\mathcal T_{\mathcal C}}$ and add the restrictions given by
$\2{\mathcal T_{\mathcal C}}$.  More precisely, for every pair $\{C, C'\}$ of
adjacent nodes in $\2{\mathcal T_{\mathcal C}}$ we have to add the common-face
constraints to $\1{\mathcal T_{\mathcal C}}$, that is equations between crucial
relative positions on the path between $C$ and $C'$ in $\1{\mathcal
  T_{\mathcal C}}$ enforcing~$C$ and~$C'$ to share a face.  Moreover, for every
relative position $\pos_C(C')$ that is crucial with respect to
$\2{\mathcal T_{\mathcal C}}$ we have to add the equations and inequalities it is
involved in to the CC$^\oplus$-tree $\1{\mathcal T_{\mathcal C}}$.  As for cycles
$\pos_C(C')$ is in $\1{\mathcal T_{\mathcal C}}$ determined by the crucial
relative position $\pos_C(C'')$, where $C''$ is the first node on the
path from $C$ to $C'$.  We have to do two things.  First, we have to
restrict the possible choices for $\pos_C(C'')$ to those that are
possible for $\pos_C(C')$, which can easily be done by intersecting
the two sets.  Second, the equations and inequalities $\pos_C(C')$ is
involved in have to be carried over to $\pos_C(C'')$.  This can be
done as before by choosing for each crucial relative position in
$\2{\mathcal T_{\mathcal C}}$ the representative in $\1{\mathcal T_{\mathcal C}}$.  For the
resulting intersection $\mathcal T_{\mathcal C}$ it remains to show that every
embedding represented by it is also represented by $\1{\mathcal T_{\mathcal C}}$
and $\2{\mathcal T_{\mathcal C}}$.  The former is clear, the latter can be shown
as in the proof of Theorem~\ref{thm:intersection-is-intersection}.

\paragraph{Efficient Implementation.}

Unfortunately, the constrained component-tree may have quadratic size
in contrast to the constrained cycle-tree, which has linear size.
This comes from the fact that a node $C$ in the CC$^\oplus$-tree may
have linearly many neighbors.  Moreover, each relative position
$\pos_C(C')$ of a neighbor $C'$ of $C$ may have linearly many possible
values, as $C$ may have that many faces.  As these possible values
need to be stored for the edge $\{C, C'\}$ in the CC-tree it has
quadratic size.  On the other hand, it is easy to see that the CC-tree
can be computed in quadratic time.  Moreover, the proof of
Theorem~\ref{thm:intersection-in-lin-time} providing a linear-time
algorithm to intersect CC$^\oplus$-trees can be adapted almost
literally.  The only thing that changes is that additionally the
possible values for $\pos_C(C'')$ and $\pos_C(C')$ need to be
intersected, where $\pos_C(C')$ is a relative position that is crucial
with respect to $\2{\mathcal T_{\mathcal C}}$ and $\pos_C(C'')$ is the
representative in $\1{\mathcal T_{\mathcal C}}$.  Thus, two
CC$^\oplus$-trees can be intersected consuming time linear in the size
of the CC$^\oplus$-trees, that is quadratic time in the size of the
input graphs.  We finally obtain the following theorem.

\begin{theorem}
  {\sc Simultaneous Embedding with Fixed Edges} can be solved in
  quadratic time, if the embedding of each connected component of the
  common graph is fixed.
\end{theorem}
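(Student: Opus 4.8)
The plan is to assemble the component version of the CC-tree machinery exactly as in Section~\ref{sec:disj-cycl}, with the single genuinely new ingredient being the concluding feasibility test, which is no longer a pure {\sc 2-Sat} instance. First I would invoke Theorem~\ref{thm:connecting-graphs} to reduce, in linear time, to an equivalent instance in which $\1G$ and $\2G$ are connected. Here I would note that the augmentation only attaches new degree-one vertices to the common graph via common edges (together with edges exclusive to one graph), so every connected component of the common graph retains a prescribed planar embedding; we simply fix the rotation at the attachment vertex arbitrarily. Thus the connectivity hypothesis required by the CC-tree approach is met without weakening the fixed-embedding hypothesis on the components.

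Next I would build, for each input graph, the component CC-tree representing exactly the embeddings of $\mathcal C$ inducible by that graph. This is the analogue of Theorem~\ref{thm:cycle-tree-is-rep}: the extended PR-node and cutvertex constraints restricted to the crucial relative positions still force the remaining relative positions, by the same tree-structure argument with ``cycle'' replaced by ``component''; the only change is that each crucial relative position now carries a set of admissible faces rather than a single bit. Each such CC-tree has quadratic size and is computable in quadratic time. I would then intersect the two CC-trees using the algorithm behind Theorem~\ref{thm:intersection-in-lin-time} essentially verbatim, adding only the step of intersecting the admissible-face sets of each carried-over crucial relative position with those of its representative; this runs in time linear in the size of the CC-trees, hence quadratic in the input. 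The component analogue of Theorem~\ref{thm:intersection-is-intersection} then certifies that the result represents exactly the simultaneously inducible embeddings, so a {\sc SEFE} exists if and only if this CC-tree represents at least one embedding.

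The main obstacle is this concluding feasibility test, since relative positions now range over faces rather than over $\{\text{left},\text{right}\}$, so the constraints are not directly a {\sc 2-Sat} instance and a naive inequality-colouring would be $\mathcal{NP}$-hard. The key structural fact I would exploit is that inequalities arise only from flips of R-nodes and therefore only between crucial relative positions with a binary choice, whereas all P-node and cutvertex constraints are equalities. Concretely, I would model the constraints as a conflict graph, contract all equality edges while intersecting the admissible-face sets of the identified nodes, and report infeasibility if any node then obtains an empty set. Every surviving inequality edge joins two binary-choice nodes equipped with a fixed bijection between their two admissible faces, so each remaining connected component is either a single node or a trivially solvable $2$-colouring instance; testing all of them costs time linear in the size of the CC-tree. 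Collecting the linear-time reduction, the quadratic CC-tree construction, the quadratic intersection, and this feasibility test yields the claimed quadratic running time.
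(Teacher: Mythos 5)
Your second, third, and fourth paragraphs follow the paper's own proof essentially verbatim: the component analogues of the PR-node and cutvertex constraints, the adaptation of Theorems~\ref{thm:cycle-tree-is-rep} and~\ref{thm:intersection-is-intersection}, the quadratic-size CC-tree, the intersection algorithm of Theorem~\ref{thm:intersection-in-lin-time} extended by face-set intersection, and the conflict-graph feasibility test (contract equality edges while intersecting admissible-face sets, reject on an empty set, then $2$-colour the surviving inequalities, which all stem from R-node flips and are therefore binary with a fixed bijection). That part is correct and is exactly the paper's route.

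The genuine gap is in your first paragraph. The augmentation of Theorem~\ref{thm:connecting-graphs} attaches a pendant \emph{common} edge $e = \{v_1, v_{12}\}$ to the common-graph component $C$ containing $v_1$, so in the fixed-embedding setting you must prescribe into which face of $C$ incident to $v_1$ this pendant edge points, and you claim this choice can be made arbitrarily. It cannot: in $\1{G_+}$ the new vertex $v_{12}$ is joined to $v_2$ by the exclusive edge $\1e$, so $v_2$ is forced into the unique face of the augmented component incident to $v_{12}$, namely the prescribed one, while the rest of the instance may force $v_2$ into a different face. Concretely, let $C$ be a theta-graph with prescribed embedding, let $v_1$ be an inner vertex of one of its three paths, incident to the two faces $F$ and $F'$, and let $v_2$ be an isolated common vertex. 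Let $\2G$ contain $\2e = \{v_1, v_2\}$ together with exclusive edges from $v_2$ to inner vertices of the two paths bounding $F$, which forces $\pos_C(v_2) = F$ in every embedding of $\2G$ inducing the prescribed embedding of $C$; let $\1G = G$, which is disconnected. The original instance admits a {\sc SEFE} (place $v_2$ in $F$). But if you prescribe the pendant edge into $F'$, then $\1{G_+}$ forces $v_2$ into the face containing the pendant edge, corresponding to $F'$, whereas $\2{G_+}$ still forces $v_2$ into $F$; the augmented instance has no {\sc SEFE}. So an arbitrary prescription can turn a yes-instance into a no-instance, and the claimed equivalence fails. Note that the paper never re-invokes Theorem~\ref{thm:connecting-graphs} in this setting; its proof tacitly assumes $\1G$ and $\2G$ connected. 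To actually cover disconnected inputs one would have to leave the positions of the newly created pendant edges as free embedding choices, which is no longer an instance with fixed component embeddings and requires extending the machinery; your one-line justification does not provide this.
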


Let $\mathcal T_{\mathcal C}$ be the CC$^\oplus$-tree representing all
embeddings of the components $\mathcal C$ that can be induced by the
graph $G$.  It is worth noting that, although the explicit
representation of $\mathcal T_{\mathcal C}$ may have quadratic size,
it also admits a compact representation of linear size.  The key idea
is the following.  In case there are more than two possible values for
a crucial relative position $\pos_C(C')$, this position is determined
by a P-node or a cutvertex.  Then we can encode the possible values
for $\pos_C(C')$ by pointing to a list that is stored at that P-node
or cutvertex, respectively.  Since this set of values is independent
of $C'$ it is sufficient to store one list for each P-node or
cutvertex.  It is not hard to see that the total size of these lists
is linear.  Moreover, the fast algorithm for computing CC-trees can be
applied with obvious modifications to compute this compact
representation in linear time.  It is, however, unclear whether the
intersection of two or more CC$^\oplus$-trees still admits a compact
representation and whether it can be computed in linear time from the
given compact representations.

\section{Conclusion}
\label{sec:conclusion}

Contrary to the previous results on simultaneous embeddings we focused
on the case where the embedding choice does not consist of ordering
edges around vertices but of placing connected components in relative
positions to one another.  We first showed that generally both input
graphs of an instance of {\sc Simultaneous Embedding with Fixed Edges}
can be always assumed to be connected.  We then showed how to solve
{\sc Simultaneous Embedding with Fixed Edges} in linear time for the
case that the common graph consists of simple disjoint cycles (or more
generally has maximum degree~2).  We further extended the result to a
quadratic-time algorithm solving the more general case where the
embedding of each connected component of the common graph is fixed.
These solutions include a compact and easy to handle data structure,
the CC-tree and CC$^\oplus$-tree, representing all possible
simultaneous embeddings.  Thus, there is hope that the CC-tree and the
CC$^\oplus$-tree are also useful when relaxing the restriction of a
fixed embedding for each component.

\ifdefined\elsevier
\bibliographystyle{elsarticle-num}
\fi
\ifdefined\arXiv
\bibliographystyle{plain}
\fi
\bibliography{DisjCycles}

\end{document}